\numberwithin{equation}{section}    
\theoremstyle{plain}
\newtheorem{Theorem}{Theorem}[section]
\newtheorem{Proposition}[Theorem]{Proposition}
\newtheorem{Corollary}[Theorem]{Corollary}
\newtheorem{Lemma}[Theorem]{Lemma}
\theoremstyle{definition}
\newtheorem{Definition}[Theorem]{Definition}
\theoremstyle{remark}
\newtheorem*{Remark}{Remark}
\renewcommand{\epsilon}{\varepsilon}
\renewcommand{\phi}{\varphi}
\newcommand{\one}{\mathbf{1}} 
\DeclareMathOperator{\NZ}{\mathbb{N}} 
\DeclareMathOperator{\RZ}{\mathbb{R}} 
\DeclareMathOperator{\GZ}{\mathbb{Z}} 
\DeclareMathOperator{\A}{\mathcal{A}} 
\DeclareMathOperator{\F}{\mathcal{F}} 
\DeclareMathOperator{\C}{\mathcal{C}} 
\DeclareMathOperator{\W}{\mathcal{W}} 
\DeclareMathOperator{\aw}{\mathit{a}_{\omega}} 
\DeclareMathOperator{\bw}{\mathit{b}_{\omega}} 
\DeclareMathOperator{\Hw}{\mathit{H}_{\omega}} 
\DeclareMathOperator{\wME}{\widetilde{\mathit{M}}^{\mathit{E}}} 
\DeclareMathOperator{\Mk}{\mathcal{M}_{\mathit{k}}} 
\newcommand{\hm}[1]{\textbf{*}\leavevmode{\marginpar{\tiny%
$\hbox to 0mm{\hspace*{-0.5mm}$\leftarrow$\hss}%
\vcenter{\vrule depth 0.1mm height 0.1mm width \the\marginparwidth}%
\hbox to 0mm{\hss$\rightarrow$\hspace*{-0.5mm}}$\\\relax\raggedright #1}}}
\title[Spectrum of Lebesgue Measure Zero for Jacobi Matrices]{Spectrum of Lebesgue Measure Zero for Jacobi Matrices of Quasicrystals}
\author[S.~Beckus, F.~Pogorzelski]{Siegfried Beckus, Felix Pogorzelski}
\address[S.B.]{Mathematisches Institut, Friedrich-Schiller-Universit\"at
Jena , 07743 Jena, Germany}
\urladdr{http://www.analysis-lenz.uni-jena.de/Team/Siegfried+Beckus.html}
\address[F.P.]{Mathematisches Institut, Friedrich-Schiller-Universit\"at
Jena , 07743 Jena, Germany}
\urladdr{http://www.analysis-lenz.uni-jena.de/Team/Felix+Pogorzelski.html}
\begin{document}

\begin{abstract}
We study one-dimensional random Jacobi operators corresponding to strictly ergodic dynamical systems. In this context, we characterize the spectrum of these operators by non-uniformity of the transfer matrices and the set where the Lyapunov exponent vanishes. Adapting this result to subshifts satisfying the so-called Boshernitzan condition, it turns out that the spectrum is supported on a Cantor set with Lebesgue measure zero. This generalizes earlier results for Schr\"odinger operators.
\end{abstract}
\maketitle

\section{Introduction}\label{section - Introduction}
Analyzing spectral properties of Schr\"odinger operators plays an important role in quantum mechanics, where one intends to study the long time behaviour of a particle in a space. The different sorts of potentials are a matter of particular interest in spectral theoretic research areas. Especially, one considers random Schr\"odinger operators which represent disordered solids. For instance, the periodic model and the Anderson model were intensively examined. In the first case, the potential is completely ordered and periodic. Hence, the corresponding model is appropriate to represent the molecular structure of crystals. In this situation, the spectrum is purely absolutely continuous. By contrast, the potential is absolutely random in the Anderson model, see \cite{An57}. In this context, some results for the spectrum of the corresponding Schr\"odinger operator are well-known. In detail, the works \cite{FMSS85,CKMF} show that it is purely discrete. Potentials which are aperiodic, i.e.\@ ordered but not periodic, can thematically be classified between these two models. The examination of such potentials has soared up after the year 1982 when Dan Shechtman discovered quasicrystals, see \cite{SBGC}. First considerations about one-dimensional Schr\"odinger operators with quasiperiodic potentials can for example be found in \cite{OstlundPanditRandSchellnhuberSiggiai,KohmotoKadamoffTang}.

\medskip

From the mathematical point of view, many spectral questions concerning Schr\"odinger operators with aperiodic potential arise. One issue is to show that the spectrum is a Cantor set of Lebesgue measure zero. If this condition is satisfied we will write (C). In the last decades, two classes of models have attracted special attention in the discrete, one-dimensional case. 
On the one hand, the class of dynamical systems induced by substitutions has widely been studied. First results about the Fibonacci substitution can be found in \cite{Suto87,Suto89}. In these works, condition (C) is shown under certain conditions. By using trace maps, the absence of point spectrum and (C) is shown for a larger class of primitive substitutions with some reasonable requirements in the works \cite{BellissardBovierGhez,BovierGhez}. In \cite{Damanik98}, the almost sure absence of eigenvalues was shown for primitive substitutions with the property that the potentials have a local four block structure. On the other hand, the consideration of potentials induced by circle maps has drawn particular attention to itself. Precisely, it is shown in the paper \cite{BellissardIochumScopullaTestard} that for irrational numbers, the spectrum is equal to the set where the Lyapunov exponent vanishes. 
Using this and adapting the Kotani result (cf.~\cite{Kotani}) the absence of absolutely continuous spectrum follows. Further results on upper bounds on the growth of solutions, as well as on the fact that the point spectrum is empty for all Sturmian potentials can be found in \cite{DamanikLenzI,DamanikLenzII,DamanikKillipLenz}.

\medskip

By using general techniques the work \cite{Lenz1} proves for a large class of substitutions and Sturmian systems that the corresponding family of Schr\"odinger operators satisfies (C). In detail, the paper contains a characterization of the spectrum by the Lyapunov exponent and non-uniform transfer matrices. 
Then (C) holds for a Schr\"odinger operator induced by a subshift, if the transfer matrices are all uniform. It turns out that the so called Boshernitzan condition, first introduced by Boshernitzan (\cite{Boshernitzan1}), for subshifts is suitable to show the uniformity of the transfer matrices, see \cite{DamanikLenz}. Indeed, a large class of models fulfills this condition, such as subshifts satisfying a positive weight condition, all Sturmian subshifts, almost all interval exchange transformations, almost all circle maps  and almost all Arnoux-Rauzy subshifts, see \cite{DamanikLenz,DamanikLenz2}.

\medskip

The aim of this article is to extend the results of \cite{Lenz1} to random Jacobi operators arising from a strictly ergodic topological dynamical system $(\Omega,T)$. We consider two continuous functions $p:\Omega\rightarrow\RZ\setminus\{0\}$ and $q:\Omega\rightarrow\RZ$ and its corresponding Jacobi operator on $\ell^2(\GZ)$
\begin{gather*}
(\Hw u)(n) := p(T^n\omega) \cdot u(n-1) + p(T^{n+1}\omega) \cdot u(n+1) + q(T^n\omega) \cdot u(n),\quad n\in\GZ.
\end{gather*}                                                                                                  
This is the discrete version of a Schr\"odinger operator with weighted Laplacian. It is well-known that there exists a closed subset $\Sigma\subseteq\RZ$ such that for all $\omega\in\Omega$ the equality $\sigma(\Hw)=\Sigma$ holds, see e.g. \cite{Lenz3}. Our purpose is to characterize the spectrum $\Sigma$ by the non-uniformity of the transfer matrices and the set where the Lyapunov exponent $\gamma:\RZ\to[0,\infty)$ vanishes. In particular, we will verify in our setting that
\begin{gather*}
\Sigma=\{E\in\RZ\;|\; \gamma(E)=0\}\;\bigsqcup\;\{E\in\RZ\;|\; M^E \text{ is not uniform}\},\quad\quad\quad\quad(\clubsuit)
\end{gather*}
where $M^E$ is the transfer matrix corresponding to the energy $E\in\RZ$. Applying this result to subshifts and using \cite{Remling} we will get the following statement: Let $(\Omega,T)$ be an aperiodic, strictly ergodic dynamical system. If the transfer matrices $M^E$ are uniform for all energies $E\in\RZ$ and if $p$ and $q$ take only finitely many values, it follows that $\Sigma$ fulfills (C), see Theorem~\ref{Theorem - aperiodic subshifts has spectrum of Lebesgue measure zero}. Since the images of $p$ and $q$ are finite sets, we can apply the result of \cite{DamanikLenz} stating that the Boshernitzan condition of a subshift is sufficient for the transfer matrices $M^E$ beeing uniform for all energies $E\in\RZ$. 

\medskip

Results of this kind have only been proven for some special cases so far. In detail, in \cite{Yessen} 
it is shown among other sophisticated results that for Fibonacci sequences with a coupling constant the spectrum is a Cantor set of Lebesgue measure zero. The works \cite{DamanikGorodetski} 
and \cite{Dahl} prove (C) for Jacobi operators associated with the Fibonacci sequence with vanishing potential $q$ and positive values of the alphabet. Further elaborations can for instance be found in \cite{JanasNabokoStolz,Marin}.

\medskip

The paper is organized as follows. In Section~\ref{section - Generalities}, we introduce the relevant objects for our model, i.e.\@ the notions of strictly ergodic dynamical systems and of cocycles for continuous, matrix-valued functions. Next, we examine these objects in more detail in Section~\ref{section - Key Results}. This includes growth behaviour and continuity properties. Section~\ref{section - Jacobi matrices} is devoted to random Jacobi operators induced by a strictly ergodic dynamical system. In this context, we prove important connections between the Lyapunov exponent of the corresponding transfer matrices and the spectral properties of the operator. These considerations lead to our main result in Section \ref{section - The main results}, where we give a complete description of the spectrum in terms of the Lyapunov exponent and the uniformity property of the transfer matrices. Precisely, we show the Equality $(\clubsuit)$, cf.\@ Theorem \ref{Theorem - the spectrum is equal to the union}. Finally, we apply our results to subshifts in Section~\ref{section - Subshifts}. In fact, we show in Theorem \ref{Theorem - aperiodic subshifts has spectrum of Lebesgue measure zero} that for a very large class of operator families, uniformity of all transfer matrices implies (C).

\section{Generalities}\label{section - Generalities}

We start by defining the relevant objects for our work. To do so, we introduce the notion of cocycles induced by some measure preserving, strictly ergodic transformation on a topological probability space. Further, we apply a version of Kingman's subadditive ergodic theorem in order to get some almost-sure approximation results for the growth rate of the underlying matrix norms.
  
\medskip

Let $(\Omega,T)$ be a dynamical system, where $\Omega$ is a compact metric space and $T:\Omega\rightarrow\Omega$ is a homeomorphism. Let $\mu$ be a probability measure on the Borel $\sigma$-algebra on $\Omega$. The measure $\mu$ is called \textit{invariant}, if for all $A\in\F$
\begin{gather*}
\mu(T(A))=\mu(A).
\end{gather*}
The dynamical system is called \textit{ergodic}, if each measurable $A$ with $A=T^{-1}(A)$ has measure one or zero. A dynamical system is called \textit{uniquely ergodic} if there exists only one invariant ergodic probability measure on $\F$. Further, it is called \textit{minimal}, if every orbit is dense in $\Omega$. If $(\Omega,T)$ is both uniquely ergodic and minimal, it is called \textit{strictly ergodic}.

\medskip

Consider the general linear group $GL(2,\RZ)$ of 2x2 matrices with real values and nonzero determinant and the special linear group $SL(2,\RZ)$ as the subgroup of those matrices with determinant one. The topology on these groups is defined by the operator norm $\|\cdot\|$. For a continuous map $M:\Omega\rightarrow GL(2,\RZ)$ we define the cocycle $M(n,\omega)$ for $\omega\in\Omega$ and $n\in\GZ$ by
\begin{align*}
M(n,\omega) :=\begin{cases}
M(T^{n-1}\omega)\cdot\ldots\cdot M(\omega)\quad &: n>0\\
Id\quad &: n=0\\
M^{-1}(T^{n}\omega)\cdot\ldots\cdot M^{-1}(T^{-1}\omega)\quad &: n<0.\\
\end{cases}
\end{align*}
Note that the equality
\begin{gather*}
M(m,T^n\omega)\cdot M(n,\omega) = M(m+n,\omega)
\end{gather*}
holds for each $m,n\in\GZ$ and $\omega\in\Omega$. The following proposition states a well-known version of Kingman's subadditive ergodic theorem, see e.g. \cite{KatznelsonWeiss}.

\begin{Proposition} \label{Proposition - Kingmans subadditive ergodic theorem}
Let $(\Omega,T)$ be uniquely ergodic with invariant probability measure $\mu$ and $M:\Omega\rightarrow GL(2,\RZ)$ be continuous. Then for
\begin{gather*}
\Lambda(M) := \inf\limits_{n\in\NZ}\frac{1}{n}\cdot \int\limits_{\Omega} log\|M(n,\omega)\|\;d\mu(\omega)
\end{gather*}
the equality
\begin{gather*}
\Lambda(M) = \lim_{n\rightarrow\infty} \frac{1}{n} \cdot log\|M(n,\omega)\|
\end{gather*}
holds for $\mu$-a.e. $\omega\in\Omega$.
\end{Proposition}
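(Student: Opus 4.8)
The plan is to deduce the statement from the abstract subadditive ergodic theorem of Kingman (in the form cited via \cite{KatznelsonWeiss}), once its hypotheses are verified for the sequence of functions $f_n\colon\Omega\to\RZ$ given by $f_n(\omega):=\log\|M(n,\omega)\|$. First I would check subadditivity over $(\Omega,T)$. Using the cocycle identity $M(m+n,\omega)=M(m,T^n\omega)\cdot M(n,\omega)$ together with the submultiplicativity of the operator norm, one obtains
\begin{gather*}
f_{m+n}(\omega)=\log\|M(m,T^n\omega)\cdot M(n,\omega)\|\leq \log\|M(m,T^n\omega)\|+\log\|M(n,\omega)\|=f_m(T^n\omega)+f_n(\omega)
\end{gather*}
for all $m,n\in\NZ$ and $\omega\in\Omega$, so $(f_n)$ is a subadditive cocycle.

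Second I would establish the integrability needed to invoke the theorem. Since $M$ is continuous and $\Omega$ is compact, the map $\omega\mapsto\|M(\omega)\|$ attains a finite maximum, whence $f_1^+\in L^1(\mu)$; subadditivity then gives the upper control $f_n(\omega)\leq\sum_{k=0}^{n-1}f_1(T^k\omega)$. For the lower bound I would exploit that $M(\omega)\in GL(2,\RZ)$: for a $2\times 2$ matrix the operator norm dominates $|\det M(n,\omega)|^{1/2}$, and since $\det M$ is continuous and nowhere zero on the compact space $\Omega$ the function $\log|\det M|$ is bounded. Multiplicativity of the determinant along the cocycle then yields $f_n(\omega)\geq\tfrac12\sum_{k=0}^{n-1}\log|\det M(T^k\omega)|\geq c\cdot n$ for a finite constant $c$, which ensures that $\Lambda(M)$ is finite and that the almost-sure limit cannot take the value $-\infty$.

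Finally I would apply Kingman's subadditive ergodic theorem. Unique ergodicity of $(\Omega,T)$ implies that $\mu$ is ergodic, so the theorem yields that $\tfrac1n f_n$ converges $\mu$-almost everywhere to a constant, namely $\lim_{n\to\infty}\tfrac1n\int_\Omega f_n\,d\mu$. Moreover, invariance of $\mu$ makes $n\mapsto\int_\Omega f_n\,d\mu$ a subadditive sequence, so by Fekete's lemma this limit coincides with $\inf_{n\in\NZ}\tfrac1n\int_\Omega f_n\,d\mu=\Lambda(M)$, which is the claimed identity. I do not expect a genuine obstacle here: the entire content is the reduction to the abstract theorem, and the only points requiring care are the verification of subadditivity through the cocycle identity and the two-sided norm bounds guaranteeing finiteness, both of which follow at once from compactness of $\Omega$ and invertibility of $M$.
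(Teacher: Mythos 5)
Your proposal is correct, and it follows the same route as the paper, which offers no proof of its own but simply invokes the abstract subadditive ergodic theorem via the reference \cite{KatznelsonWeiss}. Your verification of the hypotheses (subadditivity from the cocycle identity, integrability and the determinant-based lower bound from compactness and invertibility, ergodicity of $\mu$ from unique ergodicity, and Fekete's lemma identifying the limit with the infimum) is exactly the routine reduction the paper's citation presupposes.
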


Following in \cite{Furman97}, we use the following definition. It is motivated by the fact that unique ergodicity of $(\Omega,T)$ is equivalent to the uniform convergence of the ergodic averages in the case of continuous functions.
\begin{Definition}
Let $(\Omega,T)$ be strictly ergodic. A continuous map $M:\Omega\rightarrow GL(2,\RZ)$ is called \textit{uniform}, if the limit 
\begin{gather*}
\Lambda(M)=\lim_{n\rightarrow\infty}\frac{1}{n}\cdot log\|M(n,w)\|
\end{gather*}
exists for all $\omega\in\Omega$ and converges uniformly in $\omega\in\Omega$.
\end{Definition}

Note that in the paper of \cite{Furman97}, it is not required that the dynamical system $(\Omega,T)$ is minimal. However, it is convenient for our setting to assume minimality. For minimal topological dynamical systems, uniform existence of the limit implies uniform convergence as shown by Weiss, cf.\@\cite{Lenz2} as well.

\section{Key Results}\label{section - Key Results}

In this section, we provide general facts for strictly ergodic cocycles $M: \Omega \rightarrow SL(2, \mathbb{R})$. They will be exploited in the proofs of the spectral theoretic statements for Jacobi operators in the following Sections \ref{section - Jacobi matrices} and \ref{section - The main results}.      

\medskip

The next assertion can be found for example in \cite{Lenz1}, Lemma 3.2.
\begin{Lemma}\label{Lemma - exponential growth}
Let $(\Omega,T)$ be a dynamical system which is strictly ergodic with invariant probability measure $\mu$. Consider a uniform $M:\Omega\rightarrow SL(2,\RZ)$ with $\Lambda(M)>0$. Then for each $u\in\RZ^2\setminus\{0\}$ and $\omega\in\Omega$ there are constants $D,\kappa>0$ such that
\begin{gather*}
\|M(n,\omega)u\|\geq D\cdot e^{\kappa\cdot|n|}
\end{gather*}
for all $n\geq 0$ or $n\leq 0$.
\end{Lemma}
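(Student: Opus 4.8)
The plan is to combine the uniform exponential growth of the matrix norms with the elementary $SL(2,\RZ)$ geometry that isolates, at each scale, a one-dimensional ``most contracted'' direction, and then to prove that the forward and backward limits of these directions are transverse. First I would fix $\kappa\in(0,\Lambda(M))$. Since $M$ is uniform, $\tfrac1n\log\|M(n,\omega)\|\to\Lambda(M)$ uniformly in $\omega$, so there is an $N_0$ with $\|M(n,\omega)\|\ge e^{\kappa n}$ for all $\omega$ and all $n\ge N_0$. Because $M(n,\omega)\in SL(2,\RZ)$ we have $\|M(n,\omega)^{-1}\|=\|M(n,\omega)\|$, and the cocycle identity gives $M(-n,\omega)=M(n,T^{-n}\omega)^{-1}$; hence the same bound $\|M(-n,\omega)\|\ge e^{\kappa n}$ holds for $n\ge N_0$, uniformly in $\omega$. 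Continuity and compactness also give $\Lambda_0:=\sup_{\omega}\max\bigl(\|M(\omega)\|,\|M(\omega)^{-1}\|\bigr)<\infty$.

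For the geometry, for $A\in SL(2,\RZ)$ let $c(A)$ denote the unit eigenvector of $A^{T}A$ with the smaller eigenvalue; since $A$ maps orthonormal singular vectors to orthogonal ones, every unit vector $u$ satisfies $\|Au\|\ge \|A\|\,\lvert\sin\angle(u,c(A))\rvert$. Writing $c_n(\omega):=c(M(n,\omega))$ and comparing $M(n+1,\omega)=M(T^{n}\omega)M(n,\omega)$ (whose outer factor has norm and inverse-norm at most $\Lambda_0$), a short computation with the previous inequality yields $\lvert\sin\angle(c_{n+1}(\omega),c_n(\omega))\rvert\le \Lambda_0^{2}\,\|M(n,\omega)\|^{-2}\le \Lambda_0^{2}e^{-2\kappa n}$. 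This is summable, so $c_n(\omega)$ converges to a line $L^{+}_{\omega}$ with $\angle(c_n(\omega),L^{+}_{\omega})\le C e^{-2\kappa n}$, uniformly in $\omega$; the analogous argument for $M(-n,\omega)$ produces a limit line $L^{-}_{\omega}$. The uniform bounds make $\omega\mapsto L^{\pm}_{\omega}$ continuous. Granting the transversality $L^{+}_{\omega}\ne L^{-}_{\omega}$, the lemma follows: if $u\notin L^{+}_{\omega}$ then $\angle(u,c_n(\omega))$ is bounded below for large $n$, so $\|M(n,\omega)u\|\ge \tfrac12\lvert\sin\angle(u,L^{+}_{\omega})\rvert\,e^{\kappa n}$, and growth for all $n\ge0$ follows after shrinking $D$ to absorb the finitely many remaining terms (each positive by invertibility of $M(n,\omega)$); if $u\in L^{+}_{\omega}$ then $u\notin L^{-}_{\omega}$ and the symmetric estimate gives growth for all $n\le0$.

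The main obstacle is exactly the transversality $L^{+}_{\omega}\ne L^{-}_{\omega}$, and here uniformity and minimality are indispensable rather than cosmetic: pointwise exponential growth of $\|M(n,\omega)\|$ alone does not suffice, since one can write down a two-sided matrix sequence (e.g. $\mathrm{diag}(2,\tfrac12)$ in forward time and $\mathrm{diag}(\tfrac12,2)$ in backward time) for which a single vector contracts in both time directions. To rule this out I would use the equivariance $M(\omega)L^{\pm}_{\omega}=L^{\pm}_{T\omega}$. If $L^{+}_{\omega_0}=L^{-}_{\omega_0}$ for some $\omega_0$, then the coincidence set $\{\,\omega: L^{+}_{\omega}=L^{-}_{\omega}\,\}$ is $T$-invariant, and being closed (continuity) and nonempty it is all of $\Omega$ by minimality, producing a continuous $T$-invariant line field $L_{\omega}$ that is contracting in forward time.

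To derive a contradiction I would span $L_{\omega}$ by a locally continuous unit vector $u_{\omega}$ and telescope: $\log\|M(n,\omega)u_{\omega}\|=\sum_{j=0}^{n-1}f(T^{j}\omega)$ with $f(\omega):=\log\|M(\omega)u_{\omega}\|$ continuous. Unique ergodicity then gives $\tfrac1n\log\|M(n,\omega)u_{\omega}\|\to\int f\,d\mu=:\alpha$, and, running the same telescoping backward, $\tfrac1n\log\|M(-n,\omega)u_{\omega}\|\to-\alpha$. Forward contraction forces $\alpha<0$, so the field expands in backward time, contradicting $L_{\omega}=L^{-}_{\omega}$. Hence $L^{+}_{\omega}\ne L^{-}_{\omega}$ for every $\omega$, completing the argument. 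I expect the routine parts (the two singular-value estimates and the summability of the angle increments) to be straightforward; the genuinely delicate point is upgrading the pointwise coincidence of stable and unstable directions to a global invariant line field, which is precisely where the hypothesis that $M$ is \emph{uniform} over a \emph{minimal} system enters.
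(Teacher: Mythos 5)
The paper does not actually prove this lemma: it is quoted verbatim from \cite{Lenz1}, Lemma 3.2, so your argument must be judged on its own, and in substance it is correct. It is a self-contained reconstruction of the uniform hyperbolicity underlying the cited result: the most-contracted singular directions $c_n(\omega)$, the inequality $\|Au\|\ge\|A\|\,\lvert\sin\angle(u,c(A))\rvert$, the summable angle increments producing continuous limit lines $L^{\pm}_{\omega}$, and the transversality argument in which the coincidence set $\{\omega : L^{+}_{\omega}=L^{-}_{\omega}\}$ is closed and invariant, hence by minimality all of $\Omega$, after which telescoping $\log\|M(n,\omega)u_{\omega}\|$ into Birkhoff sums of the continuous function $f(\omega)=\log\|M(\omega)u_{\omega}\|$ and invoking unique ergodicity yields the contradiction. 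Each estimate you state is true; in particular the angle-increment bound follows from $\|M(n,\omega)c_{n+1}(\omega)\|\le\Lambda_0\|M(n+1,\omega)\|^{-1}\le\Lambda_0^2\|M(n,\omega)\|^{-1}$ combined with your sine inequality. What your write-up buys, compared with the paper's bare citation, is that the whole spectral analysis becomes self-contained at the level of elementary $SL(2,\RZ)$ geometry plus unique ergodicity.

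Two steps you only assert should be written out, and one of them constrains a choice made in your first line. The equivariance $M(\omega)L^{\pm}_{\omega}=L^{\pm}_{T\omega}$ is routine with your tools: $\|M(n,T\omega)\,M(\omega)c_{n+1}(\omega)\|=\|M(n+1,\omega)\|^{-1}$ is exponentially small, which by the sine inequality forces $M(\omega)c_{n+1}(\omega)$ to be exponentially close in angle to $c_n(T\omega)$. More seriously, the claim that $L^{+}_{\omega}$ is \emph{contracting} in forward time --- the pivot that gives $\alpha<0$ and hence the final contradiction --- is not automatic from $L^{+}_{\omega}$ being a limit of contracted directions. Writing a unit vector $u^{+}_{\omega}\in L^{+}_{\omega}$ as $\cos\theta_n\,c_n(\omega)+\sin\theta_n\,c_n(\omega)^{\perp}$ with $\lvert\theta_n\rvert\le Ce^{-2\kappa n}$, one gets $\|M(n,\omega)u^{+}_{\omega}\|\le\|M(n,\omega)\|^{-1}+Ce^{-2\kappa n}\|M(n,\omega)\|$, and to make the second term decay you need the uniform \emph{upper} bound $\|M(n,\omega)\|\le e^{(\Lambda(M)+\epsilon)n}$ (also a consequence of uniformity) together with $2\kappa>\Lambda(M)+\epsilon$. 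So $\kappa$ cannot be an arbitrary element of $(0,\Lambda(M))$: you must take $\kappa$ close to $\Lambda(M)$, which is harmless since the lines $L^{\pm}_{\omega}$ do not depend on $\kappa$; the same remark applies to the backward contraction of $L^{-}_{\omega}$ invoked at the very end. Without this quantitative step the telescoping alone is consistent with $\alpha=0$ and no contradiction results. With these two details filled in, the proof is complete.
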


The following notion is inspired by \cite{Furman97} and \cite{Lenz2}. Let $\mathcal{U}(\Omega)$ be the set of uniform, continuous maps $M:\Omega\to SL(2,\RZ)$. Further, the set $\mathcal{U}(\Omega)_{+}$ are the elements $M\in\mathcal{U}(\Omega)$ where $\Lambda(M)>0$. A metric on the complete metric space of continuous maps defined on $\Omega$ with values in $SL(2,\RZ)$ is given by
\begin{gather*}
d(A,B):= \sup_{\omega\in\Omega}\|A(\omega)-B(\omega)\|,
\end{gather*}
see \cite{Furman97}. Denote this complete metric space by $\C(\Omega,SL(2,\RZ))$.

\medskip

The proof of Theorem \ref{Theorem - Offenheit von U+(Omega)} can be found in \cite{Furman97}, cf.\@\cite{Lenz2} as well.

\begin{Theorem}\label{Theorem - Offenheit von U+(Omega)}
Let $(\Omega,T)$ be strictly ergodic. Then the set $\mathcal{U}(\Omega)_{+}$ is open in the space $\C(\Omega,SL(2,\RZ))$ and the map $\Lambda:\mathcal{U}(\Omega)\to\RZ$ is continuous.
\end{Theorem}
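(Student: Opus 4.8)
The plan is to treat the two assertions separately, deriving continuity of $\Lambda$ from a soft upper-semicontinuity argument combined with the nonnegativity forced by $SL(2,\RZ)$, and reducing the openness of $\mathcal{U}(\Omega)_{+}$ to the persistence of a uniform exponential dichotomy.

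First I would record that $\Lambda$ is upper semicontinuous on the whole space $\C(\Omega,SL(2,\RZ))$. For each fixed $n$ the functional
\begin{gather*}
G_n(M):=\frac{1}{n}\sup_{\omega\in\Omega}\log\|M(n,\omega)\|
\end{gather*}
is continuous in the metric $d$, since $B\mapsto B(n,\cdot)$ is a product of $n$ uniformly bounded, $d$-continuous factors, so $B(n,\cdot)\to M(n,\cdot)$ uniformly as $d(B,M)\to0$, and $\log\|\cdot\|$ is continuous. The sequence $n\mapsto\sup_{\omega}\log\|M(n,\omega)\|$ is subadditive by the cocycle identity, and the one-sided uniform bound coming from unique ergodicity (the uniform subadditive ergodic theorem) yields $\Lambda(M)=\inf_n G_n(M)$ for every continuous $M$, not only for uniform ones. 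An infimum of continuous functionals is upper semicontinuous, so $\Lambda$ is upper semicontinuous everywhere. Since $\det M(n,\omega)=1$ forces $\|M(n,\omega)\|\geq1$, we also have $\Lambda\geq0$ throughout. At a uniform $M$ with $\Lambda(M)=0$ these two facts already give continuity, because any $B$ near $M$ satisfies $0\leq\Lambda(B)$ and $\limsup_{B\to M}\Lambda(B)\leq\Lambda(M)=0$.

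It remains to handle the positive part, and here I would show that $\mathcal{U}(\Omega)_{+}$ coincides with the set of uniformly hyperbolic cocycles, for which openness and continuity of the exponent are then standard. Given $M\in\mathcal{U}(\Omega)_{+}$, Lemma~\ref{Lemma - exponential growth} provides, for every $\omega$ and every nonzero $u$, exponential growth of $\|M(n,\omega)u\|$ in forward or backward time. The key step is to upgrade these pointwise, direction-dependent estimates to a continuous invariant splitting $\RZ^2=E^s(\omega)\oplus E^u(\omega)$ with uniform contraction and expansion rates; this is where compactness of $\Omega$ and of the projective line, minimality, and the uniform convergence built into uniformity of $M$ are used to render the constants $D,\kappa$ independent of $\omega$ and $u$. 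The resulting exponential dichotomy is exactly uniform hyperbolicity, and conversely a uniformly hyperbolic cocycle has $\frac{1}{n}\log\|M(n,\omega)\|$ converging uniformly to its positive expansion rate, hence lies in $\mathcal{U}(\Omega)_{+}$.

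Finally I would invoke the persistence of uniform hyperbolicity under $d$-small perturbations: an invariant cone field witnessing the dichotomy for $M$ remains invariant, with slightly weaker but still uniform rates, for every $B$ with $d(B,M)$ small. This simultaneously shows that such $B$ are again uniformly hyperbolic, hence in $\mathcal{U}(\Omega)_{+}$, which gives openness, and that their expansion rate, which equals $\Lambda(B)$, varies continuously with $B$, which gives continuity of $\Lambda$ on $\mathcal{U}(\Omega)_{+}$. Together with the zero-exponent case this yields continuity of $\Lambda$ on all of $\mathcal{U}(\Omega)$. I expect the main obstacle to be the middle step: turning the per-vector, per-point bounds of Lemma~\ref{Lemma - exponential growth} into a genuine uniform dichotomy with a continuous splitting, since only there must one combine the dynamical input, namely minimality and unique ergodicity, with the perturbation-stable geometry of cone fields.
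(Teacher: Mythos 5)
The paper itself gives no proof of this theorem; it defers to Furman~(1997) and to Lenz, and your plan is in essence the plan of exactly those cited works, so the question is whether your sketch actually carries it out. Your first half does: by Lemma~\ref{Lemma - Boundeness} one has $\Lambda(M)=\inf_{n}\frac{1}{n}\sup_{\omega\in\Omega}\log\|M(n,\omega)\|$ for \emph{every} continuous $M$ (the uniform upper bound closes the gap between the integral in the definition of $\Lambda$ and the supremum), each functional $G_n$ is $d$-continuous (products of uniformly bounded factors, $\log$ Lipschitz on $[1,\infty)$ since $\det=1$ forces $\|M(n,\omega)\|\geq 1$), hence $\Lambda$ is upper semicontinuous on all of $\C(\Omega,SL(2,\RZ))$; together with $\Lambda\geq 0$ this gives continuity at every uniform point with vanishing exponent. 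The last part of your plan (cone-field persistence of a dichotomy, and continuity of the expansion rate via unique ergodicity and continuous dependence of the splitting) is also the standard and correct way to finish, although the continuity of $\Lambda$ on $\mathcal{U}(\Omega)_{+}$ still needs the quantitative statement that the unstable direction of $B$ converges uniformly to that of $M$ as $d(B,M)\to 0$, which you assert rather than derive.

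The genuine gap is the step you yourself flag as the main obstacle: the implication that a uniform $M$ with $\Lambda(M)>0$ admits a continuous, invariant splitting with uniform contraction and expansion rates. This is not a routine upgrade of Lemma~\ref{Lemma - exponential growth}; it is the entire mathematical content of the theorem, i.e.\ precisely the result of Furman and Lenz that the paper cites in place of a proof. Note in particular that Lemma~\ref{Lemma - exponential growth} provides no contracted direction at all: it says every nonzero vector grows exponentially in forward or in backward time, a statement that by itself does not even guarantee the existence of a single vector decaying as $n\to\infty$, let alone a line of them. The stable line $E^s(\omega)$ must be constructed, e.g.\ as the limit of the most-contracted singular directions $v_n(\omega)$ of $M(n,\omega)$, where the uniform convergence $\frac{1}{n}\log\|M(n,\omega)\|\to\Lambda(M)>0$ is what forces exponentially small angles between $v_n(\omega)$ and $v_{n+1}(\omega)$, hence convergence, uniform decay rates, $T$-invariance, continuity in $\omega$, and uniform transversality to the analogously constructed unstable line. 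Your sentence ``this is where compactness of $\Omega$ and of the projective line, minimality, and the uniform convergence \ldots are used'' names the ingredients but performs none of this construction. As it stands, the proposal is a correct road map --- indeed the same road map as the sources the paper relies on --- whose decisive segment is missing.
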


The next assertion is an adaption of a result of \cite{Furman97}, see \cite{Lenz1} as well.

\begin{Lemma}\label{Lemma - Convergence of the Lambda}
Let $(\Omega,T)$ be strictly ergodic. Consider a uniform $M:\Omega\rightarrow SL(2,\RZ)$ and a sequence of continuous maps $M_n:\Omega\rightarrow SL(2,\RZ)$ where $d(M_n,M)$ tends to zero. Then
\begin{gather*}
\Lambda(M_n)\overset{n\rightarrow\infty}{\longrightarrow}\Lambda(M).
\end{gather*}
\end{Lemma}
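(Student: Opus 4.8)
The plan is to establish the two one-sided estimates $\limsup_{n}\Lambda(M_n)\le\Lambda(M)$ and $\liminf_{n}\Lambda(M_n)\ge\Lambda(M)$ separately. The first holds for an arbitrary continuous target and requires no uniformity, while the second is where the hypothesis that the limit $M$ is uniform will be used, via Theorem~\ref{Theorem - Offenheit von U+(Omega)}.

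First I would prove the upper bound. By Proposition~\ref{Proposition - Kingmans subadditive ergodic theorem}, for every continuous $A:\Omega\to SL(2,\RZ)$ we may write
\[
\Lambda(A)=\inf_{k\in\NZ}a_k(A),\qquad a_k(A):=\frac1k\int_\Omega\log\|A(k,\omega)\|\,d\mu(\omega).
\]
For each fixed $k$, the map $A\mapsto a_k(A)$ is continuous on $\C(\Omega,SL(2,\RZ))$: the $k$-fold cocycle $A(k,\omega)$ is a product of $k$ factors, so $d(A,B)\to0$ forces $\sup_\omega\|A(k,\omega)-B(k,\omega)\|\to0$ (the factors remain bounded by compactness of $\Omega$), and since $\det A(k,\omega)=1$ gives $\|A(k,\omega)\|\ge1$, the integrand is nonnegative, bounded, and depends continuously on $A$, so the integral passes to the limit. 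Hence $\Lambda=\inf_k a_k$ is an infimum of continuous functions and therefore upper semicontinuous, which yields $\limsup_n\Lambda(M_n)\le\Lambda(M)$.

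For the matching lower bound I would distinguish two cases. If $\Lambda(M)=0$, the nonnegativity $\|A(k,\omega)\|\ge1$ forces $\Lambda(M_n)\ge0$ for every $n$, and together with the upper semicontinuity just established this pins $\Lambda(M_n)\to 0=\Lambda(M)$. If $\Lambda(M)>0$, then $M\in\mathcal{U}(\Omega)_{+}$, and here uniformity enters through Theorem~\ref{Theorem - Offenheit von U+(Omega)}: since $\mathcal{U}(\Omega)_{+}$ is open in $\C(\Omega,SL(2,\RZ))$ and $d(M_n,M)\to0$, we have $M_n\in\mathcal{U}(\Omega)_{+}\subseteq\mathcal{U}(\Omega)$ for all large $n$. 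The $M_n$ thus become uniform, and the continuity of $\Lambda$ on $\mathcal{U}(\Omega)$ asserted in the same theorem gives $\Lambda(M_n)\to\Lambda(M)$ outright.

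The conceptual point, and the main obstacle, is that the $M_n$ are assumed only continuous and need not be uniform, so the continuity statement of Theorem~\ref{Theorem - Offenheit von U+(Omega)} cannot be invoked without justification. The case distinction is precisely what circumvents this: openness of $\mathcal{U}(\Omega)_{+}$ guarantees that a sequence converging to a uniform cocycle with strictly positive exponent eventually stays inside the uniform class, whereas the degenerate regime $\Lambda(M)=0$ is handled purely by the semicontinuity floor forced by the constraint $\|A(k,\omega)\|\ge1$ on $SL(2,\RZ)$. The only routine verification left is the continuity of each $a_k$, which is an elementary finite-product estimate.
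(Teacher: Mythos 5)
Your proof is correct, but it takes a genuinely different route from the paper's. The paper's own proof is a two-line citation: it invokes the continuity result of Furman behind Theorem~\ref{Theorem - Offenheit von U+(Omega)} in the form ``$\Lambda(M_n)\to\Lambda(M)$ whenever $M$ is uniform and both $d(M_n,M)\to 0$ and $d(M_n^{-1},M^{-1})\to 0$'', and then disposes of the second condition by a short computation (for determinant-one $2\times 2$ matrices the inverse is the adjugate, which is linear in the entries and norm-preserving, so in fact $d(M_n^{-1},M^{-1})=d(M_n,M)$). That version of continuity does not require the $M_n$ to be uniform, but it is strictly stronger than the literal wording of Theorem~\ref{Theorem - Offenheit von U+(Omega)}, which only asserts continuity of $\Lambda$ restricted to $\mathcal{U}(\Omega)$. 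You instead work with the theorem exactly as stated and compensate with two extra ingredients: upper semicontinuity of $\Lambda$ on all of $\C(\Omega,SL(2,\RZ))$, obtained from the formula $\Lambda(A)=\inf_k a_k(A)$ of Proposition~\ref{Proposition - Kingmans subadditive ergodic theorem} together with the elementary continuity of each $a_k$, and a case distinction on $\Lambda(M)$: in the degenerate case $\Lambda(M)=0$ the bounds $\Lambda(M_n)\ge 0$ and $\limsup_n\Lambda(M_n)\le\Lambda(M)$ squeeze the limit, while in the case $\Lambda(M)>0$ openness of $\mathcal{U}(\Omega)_+$ forces $M_n\in\mathcal{U}(\Omega)$ eventually, where the stated continuity of $\Lambda$ applies. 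Your argument is longer but self-contained relative to the results as the paper states them, and it makes explicit both where the uniformity of $M$ is actually used and why the possible non-uniformity of the $M_n$ is harmless; the paper's argument is shorter and avoids any case distinction, at the price of appealing to the sharper formulation of Furman's theorem. Both arguments are valid.
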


\begin{proof}
As a consequence of Theorem \ref{Theorem - Offenheit von U+(Omega)} the convergence of $\Lambda(M_n)$ to $\Lambda(M)$ follows if $d(M_n,M)$ and $d(M^{-1}_n,M^{-1})$ converge to zero. If $\lim\limits_{n\to\infty}d(M_n,M)=0$ a short computation leads to $\lim\limits_{n\to\infty} d(M^{-1}_n,M^{-1}) = 0$ .
\end{proof}

The next statements provide useful tools to show the uniformity in some situations which are convenient for our purpose. Lemma \ref{Lemma - Boundeness} provides an upper bound for the logarithmic growth of the norm of a continuous map $M:\Omega\rightarrow GL(2,\RZ)$, see \cite{Furman97}, Corollary 2.

\begin{Lemma} \label{Lemma - Boundeness}
Let $(\Omega,T)$ be uniquely ergodic and consider a continuous $M:\Omega\rightarrow GL(2,\RZ)$. Then
\begin{gather*}
\limsup_{n\rightarrow\infty}\frac{1}{n} \cdot log\|M(n,\omega)\| \leq \Lambda(M) 
\end{gather*}
uniformly on $\Omega$.
\end{Lemma}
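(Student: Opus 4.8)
The plan is to realize $a_n(\omega) := \log\|M(n,\omega)\|$ as a subadditive cocycle and to control it, up to uniformly bounded error terms, by an ordinary Birkhoff sum of a single continuous function, to which unique ergodicity applies directly. From the cocycle identity $M(m+n,\omega) = M(m,T^n\omega)\,M(n,\omega)$ and submultiplicativity of the operator norm one obtains the subadditivity $a_{m+n}(\omega) \le a_m(T^n\omega) + a_n(\omega)$ for all $m,n\in\NZ$ and $\omega\in\Omega$. Moreover, each $a_s = \log\|M(s,\cdot)\|$ is continuous on the compact space $\Omega$ (the matrices lie in $GL(2,\RZ)$, so their norms are bounded away from $0$ and from $\infty$), whence $C_N := \max_{0\le s < N}\sup_{\omega\in\Omega}|a_s(\omega)| < \infty$ for every $N$.

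First I would fix $\epsilon>0$ and, using that $\Lambda(M)$ is by definition the infimum over $n$ of $\frac{1}{n}\int_\Omega a_n\,d\mu$, choose $N\in\NZ$ with $\frac{1}{N}\int_\Omega a_N\,d\mu \le \Lambda(M)+\epsilon$. The central step is to estimate $a_n(\omega)$ for large $n$ by cutting the orbit segment into blocks of length $N$: starting from an offset $i$ and iterating subadditivity gives
$$a_n(\omega) \le a_i(\omega) + \sum_{j=0}^{k_i-1} a_N\bigl(T^{jN+i}\omega\bigr) + a_{r_i}\bigl(T^{k_iN+i}\omega\bigr),$$
where $n-i = k_iN + r_i$ with $0\le r_i < N$.

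The main obstacle is that this inner sum runs over indices spaced $N$ apart, and unique ergodicity of $T$ does not by itself give uniform convergence along such arithmetic progressions (the power $T^N$ need not be uniquely ergodic). To circumvent this I would average the displayed inequality over the $N$ offsets $i=0,\dots,N-1$: the union of the index sets $\{jN+i\}$ exhausts $\{0,\dots,n-1\}$ up to at most $O(N)$ boundary indices, each term being bounded by $C_N$ in absolute value. Averaging and dividing by $n$ then yields
$$\frac{1}{n}\,a_n(\omega) \le \frac{1}{Nn}\sum_{l=0}^{n-1} a_N(T^l\omega) + \frac{2C_N}{n} + \frac{K_N}{Nn},$$
with a constant $K_N$ independent of both $n$ and $\omega$.

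Finally, since $a_N$ is continuous and $(\Omega,T)$ is uniquely ergodic, the Birkhoff averages $\frac{1}{n}\sum_{l=0}^{n-1} a_N(T^l\omega)$ converge to $\int_\Omega a_N\,d\mu$ uniformly in $\omega$. Letting $n\to\infty$, the two remaining error terms vanish uniformly, and I conclude
$$\limsup_{n\to\infty}\frac{1}{n}\,a_n(\omega) \le \frac{1}{N}\int_\Omega a_N\,d\mu \le \Lambda(M)+\epsilon$$
uniformly on $\Omega$. As $\epsilon>0$ was arbitrary, the asserted uniform upper bound follows. The only delicate points are the offset-averaging bookkeeping (counting the uncovered boundary indices) and checking that uniformity is preserved at each stage; both become routine once the block decomposition above is set up.
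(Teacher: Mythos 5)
Your proof is correct. Note that the paper does not actually prove this lemma at all --- it is quoted as Corollary 2 of the cited work of Furman --- so your argument should be compared with that source rather than with anything in the paper itself. What you wrote is essentially the standard proof behind that citation: subadditivity of $a_n(\omega)=\log\|M(n,\omega)\|$, a block decomposition into segments of length $N$, and unique ergodicity applied to the single continuous function $a_N$. Your identification of the key obstacle is exactly right: uniform convergence of Birkhoff averages is only guaranteed along the full orbit of $T$, not along arithmetic progressions (since $T^N$ need not be uniquely ergodic), and averaging over the $N$ offsets is precisely the device that repairs this; the boundary bookkeeping costs only $O(N)$ terms, each bounded by $\sup_\Omega|a_N|$ or $C_N$, and these errors are $O(1/n)$ uniformly in $\omega$ for fixed $N$. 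The only cosmetic difference from Furman's own write-up is that he runs the same block/offset estimate through empirical measures and weak-$*$ convergence (arguing by contradiction along a sequence $\omega_k$, $n_k$), whereas you invoke uniform convergence of Birkhoff averages directly; for continuous functions on a uniquely ergodic system these are equivalent formulations, so the two routes buy the same thing, with yours arguably the more transparent for the uniformity claim as stated.
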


\begin{Lemma}\label{Lemma - equivalence vanishing of Lambda and uniformity}
Consider a uniquely ergodic dynamical system $(\Omega,T)$. Let $M:\Omega\to GL(2,\RZ)$ and $\widetilde{M}:\Omega\rightarrow GL(2,\RZ)$ be continuous. If there is a constant $K\geq 1$ independent of $\omega\in\Omega$ and $n\in\GZ$ such that 
\begin{gather*}
\|M(n,\omega)\|\leq K\cdot \|\widetilde{M}(n,\omega)\| \text{ and } \|\widetilde{M}(n,\omega)\|\leq K\cdot \|M(n,\omega)\|,
\end{gather*}
then the following two statements hold.
\begin{itemize}
\item[(i)]  The equality $\Lambda(M)=\Lambda(\widetilde{M})$ holds.
\item[(ii)] The map $M$ is uniform, if and only if $\widetilde{M}$ is uniform as well.
\end{itemize}
\end{Lemma}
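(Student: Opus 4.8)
The plan is to reduce both claims to a single estimate obtained by taking logarithms of the hypothesis. Since $M(n,\omega)$ and $\widetilde M(n,\omega)$ lie in $GL(2,\RZ)$ they are invertible, so their norms are strictly positive and $\log\|M(n,\omega)\|$, $\log\|\widetilde M(n,\omega)\|$ are well defined real numbers. Taking logarithms in the two assumed inequalities and combining them yields
\begin{gather*}
\bigl|\log\|M(n,\omega)\|-\log\|\widetilde M(n,\omega)\|\bigr|\leq\log K
\end{gather*}
for every $n\in\GZ$ and every $\omega\in\Omega$, where $\log K\geq 0$ because $K\geq 1$. This is the only point at which the hypothesis is used; everything else is division by $n$ together with the crucial observation that the error bound $\log K$ depends on \emph{neither} $\omega$ nor $n$.

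For part (i) I would integrate this estimate against $\mu$. Recall that $\Lambda(M)=\inf_{n\in\NZ}\frac1n\int_\Omega\log\|M(n,\omega)\|\,d\mu(\omega)$, and that by submultiplicativity of the norm and $T$-invariance of $\mu$ the sequence $n\mapsto\int_\Omega\log\|M(n,\omega)\|\,d\mu$ is subadditive; hence, via Proposition~\ref{Proposition - Kingmans subadditive ergodic theorem} (equivalently, Fekete's lemma), the defining infimum is actually the limit of the averages $\frac1n\int_\Omega\log\|M(n,\omega)\|\,d\mu$ as $n\to\infty$. Integrating the displayed inequality and dividing by $n>0$ gives $\bigl|\frac1n\int_\Omega\log\|M(n,\omega)\|\,d\mu-\frac1n\int_\Omega\log\|\widetilde M(n,\omega)\|\,d\mu\bigr|\leq\frac{\log K}{n}$. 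Letting $n\to\infty$, the right-hand side vanishes while the two averages converge to $\Lambda(M)$ and $\Lambda(\widetilde M)$ respectively, so $\Lambda(M)=\Lambda(\widetilde M)$.

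For part (ii) I would exploit that the error is uniform in $\omega$. Assume $\widetilde M$ is uniform. Dividing the displayed logarithmic comparison by $n>0$ gives $\bigl|\frac1n\log\|M(n,\omega)\|-\frac1n\log\|\widetilde M(n,\omega)\|\bigr|\leq\frac{\log K}{n}$ uniformly in $\omega\in\Omega$. Combining this with the triangle inequality and the equality $\Lambda(M)=\Lambda(\widetilde M)$ from (i) yields $\sup_{\omega\in\Omega}\bigl|\frac1n\log\|M(n,\omega)\|-\Lambda(M)\bigr|\leq\frac{\log K}{n}+\sup_{\omega\in\Omega}\bigl|\frac1n\log\|\widetilde M(n,\omega)\|-\Lambda(\widetilde M)\bigr|$, and both summands tend to zero by uniformity of $\widetilde M$. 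Thus the limit defining uniformity of $M$ exists for every $\omega$ and converges uniformly, i.e.\ $M$ is uniform. Since the hypothesis is symmetric in $M$ and $\widetilde M$, the converse follows by the identical argument.

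I do not expect a genuine obstacle here: once the $\omega$- and $n$-independent logarithmic bound is in place, both parts are soft. The only points deserving attention are that the error term $\log K$ is truly independent of $\omega$, which is exactly what allows uniform convergence to be transferred from $\widetilde M$ to $M$, and the identification of the infimum defining $\Lambda$ with the limit of the integral averages, which is where Proposition~\ref{Proposition - Kingmans subadditive ergodic theorem} is invoked. The uniformity notion concerns only $n\to\infty$, so the negative values of $n$ in the comparison play no role.
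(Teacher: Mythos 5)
Your proof is correct; the paper itself gives no argument for this lemma, dismissing it as ``straight forward by a short computation,'' and your uniform logarithmic comparison $\bigl|\log\|M(n,\omega)\|-\log\|\widetilde{M}(n,\omega)\|\bigr|\leq\log K$ (independent of $\omega$ and $n$) is precisely the intended computation. Both steps --- identifying the infimum defining $\Lambda$ with the limit of the integrated averages via subadditivity for (i), and transferring uniform convergence through the $\tfrac{\log K}{n}$ error for (ii) --- are sound.
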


\begin{proof}
This proof is straight forward by a short computation.
\end{proof}

\begin{Lemma}\label{Lemma - stetig diagonalisierbar}
Let $M:\Omega\rightarrow GL(2,\RZ)$ be continuous such that
\begin{gather*}
M(\omega) = C^{-1}(T\omega) \cdot \left(\begin{matrix}
f_1(\omega) & 0\\
0 & f_2(\omega)
\end{matrix}\right)\cdot C(\omega)
\end{gather*}
where $|f_1|,|f_2|\in\C(\Omega,\RZ)$ and $C:\Omega\rightarrow GL(2,\RZ)$ is such that $\|C\|,\|C^{-1}\|:\Omega\to\RZ$ are continuous. Then the function $M$ is uniform.
\end{Lemma}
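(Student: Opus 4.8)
The plan is to exploit the shifted position of the conjugating matrix to telescope the cocycle, reducing $M$ to a diagonal cocycle whose uniformity is a direct consequence of the uniform ergodic theorem. Writing $D(\omega):=\mathrm{diag}(f_1(\omega),f_2(\omega))$ and letting $D(n,\omega)$ be its associated cocycle, I would first prove by induction on $n$ the conjugation identity
\begin{gather*}
M(n,\omega)=C^{-1}(T^n\omega)\,D(n,\omega)\,C(\omega)\qquad\text{for all }n\in\GZ .
\end{gather*}
The point is that $M(\omega)=C^{-1}(T\omega)D(\omega)C(\omega)$ has its left conjugator evaluated at $T\omega$, so in a product $M(T^k\omega)M(T^{k-1}\omega)=C^{-1}(T^{k+1}\omega)D(T^k\omega)C(T^k\omega)\cdot C^{-1}(T^k\omega)D(T^{k-1}\omega)C(T^{k-1}\omega)$ the inner factor $C(T^k\omega)C^{-1}(T^k\omega)=\mathrm{Id}$ cancels; iterating this cancellation over the whole product for $n>0$ gives the claim, and the cases $n=0$ and $n<0$ follow from the definition of the cocycle and by inverting. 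As $D$ is diagonal, $D(n,\omega)=\mathrm{diag}\big(\prod_{k=0}^{n-1}f_1(T^k\omega),\ \prod_{k=0}^{n-1}f_2(T^k\omega)\big)$ for $n>0$.

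Next I would pass to the modulus cocycle in order to invoke Lemma~\ref{Lemma - equivalence vanishing of Lambda and uniformity}. Since $M(\omega)\in GL(2,\RZ)$ we have $f_1(\omega)f_2(\omega)=\det M(\omega)\neq0$, so $|f_1|,|f_2|$ are continuous and strictly positive, hence bounded away from $0$ and $\infty$ by compactness of $\Omega$. Thus $\widetilde M(\omega):=\mathrm{diag}(|f_1(\omega)|,|f_2(\omega)|)$ is a continuous map $\Omega\to GL(2,\RZ)$, and because the operator norm of a diagonal matrix depends only on the moduli of its entries, $\|\widetilde M(n,\omega)\|=\|D(n,\omega)\|$ for every $n$. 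Putting $K:=\max\{1,\sup_{\omega}\|C(\omega)\|,\sup_{\omega}\|C^{-1}(\omega)\|\}<\infty$, the conjugation identity gives $\|M(n,\omega)\|\le K^2\|D(n,\omega)\|=K^2\|\widetilde M(n,\omega)\|$, while solving it for $D(n,\omega)=C(T^n\omega)M(n,\omega)C^{-1}(\omega)$ gives $\|\widetilde M(n,\omega)\|\le K^2\|M(n,\omega)\|$, both uniformly in $\omega\in\Omega$ and $n\in\GZ$. By Lemma~\ref{Lemma - equivalence vanishing of Lambda and uniformity}(ii) it then suffices to show that $\widetilde M$ is uniform.

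To establish uniformity of $\widetilde M$ I would write its norm as an exponential of Birkhoff sums: with $g_i:=\log|f_i|$ one has, for $n>0$,
\begin{gather*}
\tfrac1n\log\|\widetilde M(n,\omega)\|=\max\Big(\tfrac1n\sum_{k=0}^{n-1}g_1(T^k\omega),\ \tfrac1n\sum_{k=0}^{n-1}g_2(T^k\omega)\Big).
\end{gather*}
Each $g_i$ is continuous by the positivity noted above, so unique ergodicity of $(\Omega,T)$ yields uniform convergence of the averages $\frac1n\sum_{k=0}^{n-1}g_i(T^k\omega)\to\int_\Omega g_i\,d\mu$. Combining this with the elementary inequality $|\max(a_n,b_n)-\max(a,b)|\le\max(|a_n-a|,|b_n-b|)$ shows that the right-hand side converges uniformly in $\omega$ to $\max\big(\int_\Omega g_1\,d\mu,\int_\Omega g_2\,d\mu\big)$. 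Hence $\widetilde M$ is uniform, and by the previous step so is $M$.

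I expect the only genuinely load-bearing step to be the telescoping in the first paragraph: the hypothesis places the left conjugator at $T\omega$ precisely so that the intermediate $C$-factors cancel and one pays the cost of $C$ only at the two endpoints, keeping the comparison constant $K^2$ uniform in $n$; with the conjugator at $\omega$ instead, these factors would accumulate and $\|M(n,\omega)\|$ could grow like $K^n$ relative to the diagonal part. A secondary subtlety is that $f_1,f_2$ need not be continuous individually---only their moduli are assumed so---which is why I reduce to $\widetilde M$, whose norm sees only $|f_1|,|f_2|$ and thus only the continuous functions $\log|f_i|$.
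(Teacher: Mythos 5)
Your proof is correct and takes essentially the same route as the paper's: the telescoping conjugation identity $M(n,\omega)=C^{-1}(T^n\omega)\,D(n,\omega)\,C(\omega)$, a norm comparison with the diagonal part that is uniform in $n$ and $\omega$, and the uniform Birkhoff ergodic theorem (unique ergodicity) applied to the continuous functions $\log|f_1|,\log|f_2|$. The only cosmetic difference is that you formalize the reduction via the modulus cocycle $\widetilde M$ and Lemma~\ref{Lemma - equivalence vanishing of Lambda and uniformity}(ii), whereas the paper compares $\|M(n,\omega)\|$ directly with $\max\bigl(\prod_j |f_1(T^j\omega)|,\ \prod_j |f_2(T^j\omega)|\bigr)$ by equivalence of matrix norms and concludes immediately.
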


\begin{proof}
First note that for $\omega\in\Omega$ the equality 
\begin{gather*}
M(n,\omega)= C^{-1}(T^n\omega)\cdot \underbrace{\begin{pmatrix}
\prod\limits_{j=0}^n f_1(T^j\omega) & 0\\
0 & \prod\limits_{j=0}^n f_2(T^j\omega)
\end{pmatrix}}_{=:A(n,\omega)} C(\omega), \quad\quad n\in\NZ
\end{gather*}
holds. Since $\|C\|,\|C^{-1}\|:\Omega\rightarrow \RZ$ are continuous and $\Omega$ is compact, we immediately get that there are constants $C_1,C_2>0$ such that $C_1\cdot \|A(n,\omega)\|\leq \|M(n,\omega)\|\leq C_2\cdot \|A(n,\omega)\|$ for all $n\in\NZ$ and every $\omega\in\Omega$.

\medskip

Moreover, we know that all norms defined on the linear space of matrices are equivalent and so, there are constants $D_1,D_2>0$ such that
\begin{gather*}
D_1\cdot \max\left\{ \prod\limits_{j=0}^n |f_1(T^j\omega)|, \prod\limits_{j=0}^n |f_2(T^j\omega)| \right\}\leq \|M(n,\omega)\|\leq D_2 \cdot\max\left\{ \prod\limits_{j=0}^n |f_1(T^j\omega)|, \prod\limits_{j=0}^n |f_2(T^j\omega)| \right\}
\end{gather*}
Since $M$ is invertible it follows that $f_1$ and $f_2$ never vanish and so, the functions $\log\circ|f_1|$ and $\log\circ|f_1|$ are well-defined and continuous by our requirements. Using unique ergodicity, it follows from standard arguments in ergodic theory (see e.g.\@ \cite{EisnerNagel}, Corollary 9.9) that $\frac{1}{n}\sum\limits_{j=0}^n \log\circ|f_1|\circ T^j$ and $\frac{1}{n}\sum\limits_{j=0}^n \log\circ|f_2|\circ T^j$ converge uniformly on $\Omega$ to some constant. Hence, $\frac{1}{n}\log \|M(n,\omega)\|$ converge uniformly on $\Omega$ to a constant. 
\end{proof}

\section{Jacobi matrices}\label{section - Jacobi matrices}

We present the notion of a uniquely ergodic family $\{H_{\omega}\}_{\omega \in \Omega}$ of Jacobi operators acting on $\ell^2(\mathbb{Z})$. For each $E \in \mathbb{R}$, we define the transfer matrices $M^E$ corresponding to the equation $(H_{\omega}-E)\,u = 0$ as cocycle functions on $\Omega$. Further, we verify the existence of the Lyapunov exponent $\gamma(E)$ containing information on the growth rate of the matrix norms, cf.~Lemma \ref{Lemma - equivalence of M und tilde(M)}. By drawing connections between the Lypunov exponent and the spectrum of the operators, we prove major preparations for the main results of this work (cf.~Section \ref{section - The main results}) in the Proposition \ref{Proposition - E nicht im Spektrum}, Lemmas \ref{Lemma - M uniform impliziert} and \ref{Lemma - E with gamma(E)neq 0 implies uniformity}. 

\medskip

We consider two continuous maps
$p:\Omega\rightarrow\RZ\setminus\{0\}$ and $q:\Omega\rightarrow\RZ$ and for $\omega\in\Omega$ its corresponding Jacobi operator $H_{\omega}:\ell^2(\GZ)\to\ell^2(\GZ)$ defined by
\begin{gather*}
(\Hw u)(n) := p(T^{n+1}\omega) \cdot u(n+1) + q(T^n\omega) \cdot u(n) + p(T^n\omega) \cdot u(n-1).
\end{gather*}                                                                                                  
Denote $p(T^n\omega)$ by $\aw(n)$ and $q(T^n\omega)$ by $\bw(n)$. Since $\Omega$ is compact and $p$ and $q$ are continuous there is a constant $K\geq 1$ such that
\begin{gather*}
\frac{1}{K}\leq |a_{\omega}(\cdot)|\leq K
\end{gather*}
and
\begin{gather*}
0\leq |b_{\omega}(\cdot)|\leq K
\end{gather*}
for each $\omega\in\Omega$. Using this boundeness it follows that the norm of $\Hw$ is bounded, because
\begin{gather*}
\|\Hw\|\leq 2\cdot \|\aw\|_{\infty} + \|\bw\|_{\infty}\leq 3\cdot K.
\end{gather*}

\medskip

For $\omega\in\Omega$ and $E\in\RZ$, we are interested in general solutions of the difference equation
\begin{gather*}
\aw(n+1) \cdot u(n+1) + \bw(n) \cdot u(n) + \aw(n) \cdot u(n-1) - E \cdot u(n) = 0\qquad\qquad(\spadesuit)
\end{gather*}
for $n\in\GZ$. We define the so called \textit{transfer matrix} by
\begin{gather*}
M^E(\omega):=\left(\begin{matrix}
\frac{E-\bw(1)}{\aw(2)} & -\frac{\aw(1)}{\aw(2)}\\
1 & 0
\end{matrix}\right).
\end{gather*}
Similarly to the Schr\"odinger case it follows that $(\spadesuit)$ holds, if and only if
\begin{gather*}
\left(\begin{matrix}
u(n+1)\\
u(n)
\end{matrix}\right)= M^E(n,\omega)\cdot \left(\begin{matrix}
u(1)\\
u(0)
\end{matrix}\right)
\end{gather*}
for all $n\in\GZ$. Unlike to the classical case of Schr\"odinger operators the determinant of $M^E(\omega)$ is not necessarily equal to one. Thus, we introduce the following matrix
\begin{gather*}
\wME(\omega) :=\left(\begin{matrix}
\frac{E-\bw(1)}{\aw(2)} & -\frac{1}{\aw(2)}\\
\aw(2) & 0
\end{matrix}\right)
\end{gather*}
with determinant equal to one. Then the equation $(\spadesuit)$ holds, if and only if
\begin{gather*}
\left(\begin{matrix}
u(n+1)\\
\aw(n+1)\cdot u(n)
\end{matrix}\right)= \widetilde{M}^E(n,\omega)\cdot \left(\begin{matrix}
u(1)\\
\aw(1) \cdot u(0)
\end{matrix}\right).
\end{gather*}

Note that the maps $M^E$ and $\wME$ are continuous by definition.

\begin{Lemma}\label{Lemma - equivalence of M und tilde(M)}
Let $(\Omega,T)$ be strictly ergodic and consider the maps\- $M^E:\Omega\to GL(2,\RZ)$\- and $\wME:\Omega\to SL(2,\RZ)$ defined as above. Then for an $\omega\in\Omega$ the limit 
\begin{gather*}
\lim\limits_{n\to\infty}\frac{1}{n}\log\|M^E(n,\omega)\|
\end{gather*}
exists if and only if the limit
\begin{gather*}
\lim\limits_{n\to\infty}\frac{1}{n}\log\|\wME(n,\omega)\|
\end{gather*}
exists and in these cases, they are equal. Moreover, $M^E$ is uniform if and only if $\wME$ is uniform.
\end{Lemma}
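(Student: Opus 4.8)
The plan is to exhibit $M^E$ and $\wME$ as two cocycles that differ only by conjugation with a uniformly bounded diagonal factor, and then to invoke the comparison result of Lemma~\ref{Lemma - equivalence vanishing of Lambda and uniformity}. To this end introduce for $k\in\GZ$ the diagonal matrix
\begin{gather*}
\Delta_\omega(k):=\begin{pmatrix} 1 & 0\\ 0 & \aw(k)\end{pmatrix},
\end{gather*}
which is invertible since $p$ never vanishes. Comparing the two solution correspondences stated above — namely $\bigl(u(n+1),u(n)\bigr)^{\!\top}=M^E(n,\omega)\bigl(u(1),u(0)\bigr)^{\!\top}$ and $\bigl(u(n+1),\aw(n+1)u(n)\bigr)^{\!\top}=\wME(n,\omega)\bigl(u(1),\aw(1)u(0)\bigr)^{\!\top}$ — and observing that $\bigl(u(n+1),\aw(n+1)u(n)\bigr)^{\!\top}=\Delta_\omega(n+1)\bigl(u(n+1),u(n)\bigr)^{\!\top}$, one reads off that $\Delta_\omega(n+1)M^E(n,\omega)$ and $\wME(n,\omega)\Delta_\omega(1)$ agree on every initial vector. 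Hence
\begin{gather*}
\wME(n,\omega)=\Delta_\omega(n+1)\,M^E(n,\omega)\,\Delta_\omega(1)^{-1}\qquad\text{for all }n\in\GZ.
\end{gather*}
I would double-check this identity at $n=1$ directly against the explicit matrices, and note that, since the solution correspondences hold for every $n\in\GZ$, it is valid for negative $n$ as well.

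The second step is to turn this conjugacy into the uniform two-sided norm bound required by Lemma~\ref{Lemma - equivalence vanishing of Lambda and uniformity}. As $\Omega$ is compact and $p$ is continuous and nonvanishing, there is $K\geq 1$ with $\tfrac1K\leq|\aw(\cdot)|\leq K$, so that $\|\Delta_\omega(k)\|=\max\{1,|\aw(k)|\}\leq K$ and likewise $\|\Delta_\omega(k)^{-1}\|\leq K$, uniformly in $k\in\GZ$ and $\omega\in\Omega$. Submultiplicativity of the operator norm applied to the displayed identity and to its inverse $M^E(n,\omega)=\Delta_\omega(n+1)^{-1}\wME(n,\omega)\Delta_\omega(1)$ then yields
\begin{gather*}
\tfrac{1}{K^2}\,\|M^E(n,\omega)\|\leq\|\wME(n,\omega)\|\leq K^2\,\|M^E(n,\omega)\|
\end{gather*}
for all $n\in\GZ$ and $\omega\in\Omega$, with $K^2\geq1$ independent of $n$ and $\omega$. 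Applying Lemma~\ref{Lemma - equivalence vanishing of Lambda and uniformity} with this constant immediately gives $\Lambda(M^E)=\Lambda(\wME)$ and the equivalence of uniformity of $M^E$ and $\wME$, which is the last assertion of the statement.

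For the pointwise claim about the limits I would argue directly from the same two-sided bound, without appeal to unique ergodicity: taking logarithms and dividing by $n$ gives
\begin{gather*}
\Bigl|\tfrac1n\log\|M^E(n,\omega)\|-\tfrac1n\log\|\wME(n,\omega)\|\Bigr|\leq\tfrac{2\log K}{n}\xrightarrow{\;n\to\infty\;}0,
\end{gather*}
so the two sequences differ by a null sequence. Consequently one of the limits exists if and only if the other does, and in that case they coincide, for the fixed $\omega$ at hand. I do not expect a serious obstacle here; the only points demanding care are the clean derivation of the conjugacy identity and the verification that the diagonal factors are bounded \emph{uniformly} in both $n$ and $\omega$ — this uniformity is precisely what makes Lemma~\ref{Lemma - equivalence vanishing of Lambda and uniformity} applicable.
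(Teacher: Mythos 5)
Your proof is correct and follows essentially the same route as the paper: with the paper's choice $C(\omega)=\mathrm{diag}(1,a_{\omega}(1))$ one has $\Delta_\omega(1)=C(\omega)$ and $\Delta_\omega(n+1)=C(T^n\omega)$, so your identity $\wME(n,\omega)=\Delta_\omega(n+1)\,M^E(n,\omega)\,\Delta_\omega(1)^{-1}$ is exactly the paper's cocycle conjugacy $M^E(n,\omega)=C^{-1}(T^n\omega)\,\wME(n,\omega)\,C(\omega)$, followed by the same appeal to Lemma~\ref{Lemma - equivalence vanishing of Lambda and uniformity} via uniform boundedness of the diagonal factors. The only (harmless) difference is that you verify the pointwise limit claim explicitly through the $\tfrac{2\log K}{n}$ sandwich bound, whereas the paper subsumes it into that lemma.
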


\begin{proof}
Define the continuous map $C:\Omega\to GL(2,\RZ)$ by
\begin{gather*}
C(\omega):=\left(\begin{matrix}
1 & 0\\
0 & \aw(1)
\end{matrix}\right).
\end{gather*}
Then we get the equation $M^E(\omega)=C^{-1}(\omega)\wME(\omega) C(\omega)$. This yields to the equality $M^E(n,\omega) = C^{-1}(T^{n}\omega)\wME(n,\omega) C(\omega)$. By Lemma \ref{Lemma - equivalence vanishing of Lambda and uniformity} our statements follows.
\end{proof}

Accoring to Proposition \ref{Proposition - Kingmans subadditive ergodic theorem} we can define the \textit{Lyapunov exponent} for the energy $E\in\RZ$ by $\gamma(E) := \Lambda(M^E)$.

\begin{Lemma}\label{Lemma - positivity of the lyapunov exponents}
The Lyapunov exponent $\gamma(E)$ is greater or equal than zero for all $E\in\RZ$.
\end{Lemma}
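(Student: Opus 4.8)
The plan is to reduce the statement about the non-normalized transfer matrix $M^E$ to the normalized one $\wME$, for which the conclusion becomes almost immediate. By definition $\gamma(E)=\Lambda(M^E)$, and Lemma~\ref{Lemma - equivalence of M und tilde(M)} gives $\Lambda(M^E)=\Lambda(\wME)$. Hence it suffices to show $\Lambda(\wME)\geq 0$. The advantage of passing to $\wME$ is that it takes values in $SL(2,\RZ)$, so $\det\wME(\omega)=1$ for every $\omega$; since the determinant is multiplicative, every finite product $\wME(n,\omega)=\wME(T^{n-1}\omega)\cdots\wME(\omega)$ again lies in $SL(2,\RZ)$ for $n\in\NZ$. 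By contrast, $M^E$ itself need not have determinant one, so no such uniform lower bound on its norm is available directly, which is precisely why the reduction is the natural first move.

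The key elementary observation I would isolate is that every matrix $A\in SL(2,\RZ)$ satisfies $\|A\|\geq 1$ in the operator norm. This follows from the singular value decomposition: if $\sigma_1\geq\sigma_2>0$ denote the singular values of $A$, then $\|A\|=\sigma_1$ and $\sigma_1\sigma_2=|\det A|=1$, whence $\sigma_1^2\geq\sigma_1\sigma_2=1$ and therefore $\|A\|=\sigma_1\geq 1$. Applying this to $A=\wME(n,\omega)$ for each $n\in\NZ$ and each $\omega\in\Omega$ yields
\begin{gather*}
\log\|\wME(n,\omega)\|\geq 0.
\end{gather*}

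It remains to transport this pointwise bound to the Lyapunov exponent. Here I would invoke Proposition~\ref{Proposition - Kingmans subadditive ergodic theorem} in its infimum form,
\begin{gather*}
\Lambda(\wME)=\inf_{n\in\NZ}\frac{1}{n}\int_{\Omega}\log\|\wME(n,\omega)\|\;d\mu(\omega).
\end{gather*}
Since each integrand is nonnegative, every term $\tfrac{1}{n}\int_\Omega\log\|\wME(n,\omega)\|\,d\mu$ in this family is nonnegative, and so is their infimum; thus $\Lambda(\wME)\geq 0$. (Equivalently, one may use the almost-sure limit from the same proposition: for $\mu$-a.e.\ $\omega$ the quantities $\tfrac{1}{n}\log\|\wME(n,\omega)\|$ are nonnegative and converge to $\Lambda(\wME)$, forcing the limit to be nonnegative.) Combining the three steps gives $\gamma(E)=\Lambda(M^E)=\Lambda(\wME)\geq 0$ for every $E\in\RZ$.

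I do not expect a genuine obstacle in this proof; it is short by design. The only points requiring mild care are confirming that the cocycle $\wME(n,\omega)$ indeed stays in $SL(2,\RZ)$ (immediate from multiplicativity of the determinant) and correctly citing the reduction $\Lambda(M^E)=\Lambda(\wME)$ from Lemma~\ref{Lemma - equivalence of M und tilde(M)} rather than attempting to bound $\|M^E(n,\omega)\|$ from below, which would be false in general because $\det M^E$ involves the ratio $\aw(1)/\aw(2)$ and is not normalized.
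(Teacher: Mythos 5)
Your proof is correct, and its core is the same as the paper's: both reduce to the determinant-one cocycle via Lemma~\ref{Lemma - equivalence of M und tilde(M)} (so $\gamma(E)=\Lambda(\wME)$) and both exploit that $\wME(n,\omega)\in SL(2,\RZ)$ forces $\|\wME(n,\omega)\|\geq 1$. The only divergence is the final step: the paper invokes Lemma~\ref{Lemma - Boundeness} (Furman's bound $\limsup_{n\to\infty}\tfrac{1}{n}\log\|\wME(n,\omega)\|\leq \Lambda(\wME)$, valid for every $\omega$) to squeeze $\Lambda(\wME)$ below by $0$, whereas you read off nonnegativity directly from the infimum (equivalently the a.e.-limit) characterization in Proposition~\ref{Proposition - Kingmans subadditive ergodic theorem}: each term $\tfrac{1}{n}\int_\Omega\log\|\wME(n,\omega)\|\,d\mu(\omega)$ is nonnegative, hence so is the infimum. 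Your version is marginally more elementary, needing nothing beyond the definition of $\Lambda$, while the paper's citation of Lemma~\ref{Lemma - Boundeness} is pointwise in $\omega$ rather than measure-theoretic and matches the toolkit it uses elsewhere; the conclusions are identical. One small side remark: your closing claim that no direct lower bound for $\|M^E(n,\omega)\|$ is available is slightly overstated, since $\det M^E(n,\omega)$ telescopes to $\aw(1)/\aw(n+1)$, giving $\|M^E(n,\omega)\|\geq 1/K$ and hence $\tfrac{1}{n}\int_\Omega\log\|M^E(n,\omega)\|\,d\mu(\omega)\geq -\tfrac{\log K}{n}\to 0$; still, that route needs the limit form rather than the bare infimum, so your reduction to $\wME$ remains the cleaner argument.
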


\begin{proof}
Since $\det(\wME(T^n\omega))=1$ for each $n\in\NZ$ it follows that $\det(\wME(n,\omega))=1$ for all $n\in\NZ$. Thus, the norm $\|\wME(n,\omega)\|$ is greater or equal than one and so 
\begin{gather*}
\gamma(E)\overset{\text{L. \ref{Lemma - equivalence of M und tilde(M)}}}{=}\Lambda(\wME)\overset{\text{L. \ref{Lemma - Boundeness}}}{\geq}\limsup_{n\rightarrow\infty}\frac{1}{n}\cdot\log(\underbrace{\|\wME(n,\omega)\|}_{\geq 1})\geq 0.
\end{gather*}
\end{proof}

The following well-known Proposition \ref{Proposition - Constancy of the spectrum} states that the spectrum of the Jacobi operators with respect\- to the elements of $\Omega$ does not change, if the dynamical system $(\Omega,T)$ is minimal. Hence, one also can talk about the spectrum of the whole family of Jacobi operators, see e.g. \cite{Lenz3}.

\begin{Proposition} \label{Proposition - Constancy of the spectrum}
Let $(\Omega,T)$ be a minimal dynamical system. Then there exists a set $\Sigma\subseteq\RZ$ such that $\sigma(\Hw)=\Sigma$ for every $\omega\in\Omega$.
\end{Proposition}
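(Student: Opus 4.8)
The plan is to exploit two structural features of the family: the covariance of $\Hw$ under the dynamics, and the continuity of the coefficients $p,q$, combined with minimality. First I would record the covariance relation. Let $U:\ell^2(\GZ)\to\ell^2(\GZ)$ be the unitary shift $(Uu)(n):=u(n+1)$. A direct computation using $p(T^n(T\omega))=p(T^{n+1}\omega)$ and $q(T^n(T\omega))=q(T^{n+1}\omega)$ shows
$$U H_\omega U^{-1}=H_{T\omega}\qquad\text{for all }\omega\in\Omega.$$
Since $U$ is unitary, this gives $\sigma(H_\omega)=\sigma(H_{T\omega})$, and iterating yields $\sigma(H_{T^n\omega})=\sigma(H_\omega)$ for every $n\in\GZ$. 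Thus the spectrum is already constant along each orbit.

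The second ingredient is the standard characterization of the spectrum of a bounded self-adjoint operator by approximate eigenvectors: for $\lambda\in\RZ$ one has $\mathrm{dist}(\lambda,\sigma(H))\leq \|(H-\lambda)u\|$ for every unit vector $u$, and $\lambda\in\sigma(H)$ if and only if there is a sequence of unit vectors $u_m$ with $\|(H-\lambda)u_m\|\to 0$. Moreover, such a Weyl sequence may be chosen with each $u_m$ finitely supported, since finitely supported vectors are dense in $\ell^2(\GZ)$ and $H$ is bounded.

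Now fix two points $\omega_0,\omega_1\in\Omega$; I claim $\sigma(H_{\omega_1})\subseteq\sigma(H_{\omega_0})$. Take $\lambda\in\sigma(H_{\omega_1})$ and $\epsilon>0$, and choose a finitely supported unit vector $u$ with $\supp u\subseteq[-N,N]$ and $\|(H_{\omega_1}-\lambda)u\|<\epsilon$. The vectors $(H_{\omega_1}-\lambda)u$ and $(H_{T^n\omega_0}-\lambda)u$ are both supported in $[-N-1,N+1]$ and differ only through the finitely many coefficient differences $p(T^j\omega_1)-p(T^jT^n\omega_0)$ and $q(T^j\omega_1)-q(T^jT^n\omega_0)$ with $|j|\leq N+1$. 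By minimality the orbit of $\omega_0$ is dense, so I may pick $n$ with $T^n\omega_0$ arbitrarily close to $\omega_1$; continuity of $p$ and $q$ then forces all these finitely many differences to be small, whence $\|(H_{T^n\omega_0}-\lambda)u\|<2\epsilon$. Using the distance estimate and the covariance step,
$$\mathrm{dist}\big(\lambda,\sigma(H_{\omega_0})\big)=\mathrm{dist}\big(\lambda,\sigma(H_{T^n\omega_0})\big)\leq\|(H_{T^n\omega_0}-\lambda)u\|<2\epsilon.$$
Letting $\epsilon\to 0$ and using that $\sigma(H_{\omega_0})$ is closed yields $\lambda\in\sigma(H_{\omega_0})$. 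Since minimality makes the orbit of $\omega_1$ dense as well, the identical argument with the roles of $\omega_0$ and $\omega_1$ exchanged gives the reverse inclusion, so $\sigma(H_{\omega_0})=\sigma(H_{\omega_1})$. As $\omega_0,\omega_1$ were arbitrary, the common value $\Sigma:=\sigma(H_{\omega_0})$ is independent of the chosen point and satisfies $\sigma(\Hw)=\Sigma$ for every $\omega\in\Omega$.

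The step I expect to require the most care is the transfer of the approximate eigenvector from $H_{\omega_1}$ to a nearby orbit point $H_{T^n\omega_0}$. One must fix $\epsilon$ and the window $[-N,N]$ first, since these depend only on the chosen Weyl vector, and only afterwards invoke minimality together with the uniform continuity of $p,q$ on the compact space $\Omega$ to match the finitely many relevant coefficients. Getting this order of quantifiers right—so that the coefficient matching involves only a bounded number of entries—is precisely what turns the otherwise soft argument into a rigorous one.
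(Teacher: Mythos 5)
Your proof is correct. A caveat on the comparison: the paper does not actually prove Proposition \ref{Proposition - Constancy of the spectrum} at all --- it records the statement as well known and refers to the literature (\cite{Lenz3}), so there is no in-paper argument to match against. Your write-up supplies exactly the standard argument that underlies that citation: the covariance relation $UH_\omega U^{-1}=H_{T\omega}$ gives constancy of the spectrum along orbits, and minimality plus continuity of $p,q$ transfers approximate eigenvectors between any two points. The usual packaging in the literature phrases the second step as strong (resolvent) convergence $H_{T^{n_k}\omega_0}\to H_{\omega_1}$ along a subsequence of a dense orbit, which yields the spectral inclusion $\sigma(H_{\omega_1})\subseteq\overline{\sigma(H_{\omega_0})}=\sigma(H_{\omega_0})$; your version with finitely supported Weyl vectors is an explicit, self-contained unwinding of that same mechanism, and you handle the one delicate point correctly --- fixing $\lambda$, $\epsilon$ and the support window $[-N,N]$ before invoking density of the orbit, so that only finitely many coefficients $p(T^j\cdot)$, $q(T^j\cdot)$ with $|j|\leq N+1$ need to be matched, each depending continuously on $\omega$. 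The symmetry of the argument in $\omega_0$ and $\omega_1$ (both orbits are dense by minimality) then gives equality of the spectra, as required. One could shorten the argument slightly by noting that it suffices to prove $\sigma(H_{\omega_1})\subseteq\sigma(H_{\omega_0})$ for arbitrary ordered pairs, which is what you in effect do; there is no gap.
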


As a direct consequence of Lemma \ref{Lemma - exponential growth} and some general result, Proposition \ref{Proposition - E nicht im Spektrum} states a sufficient condition for $E\in\RZ$ not to be contained in the spectrum of a family of Jacobi operators.

\begin{Proposition}\label{Proposition - E nicht im Spektrum}
Let $(\Omega,T)$ be strictly ergodic and $p:\Omega\to\RZ\setminus\{0\}$ and $q:\Omega\to\RZ$ be continuous maps with corresponding family of Jacobi operators $(H_{\omega})_{\omega\in\Omega}$. If for $E\in\RZ$ we have $\Lambda(M^{E})>0$ and $M^{E}$ is uniform, then $E$ does not belong to the spectrum $\Sigma$ of the family of Jacobi operators $(H_{\omega})_{\omega\in\Omega}$.
\end{Proposition}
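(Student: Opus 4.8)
The plan is to reduce the claim to a growth dichotomy for the solutions of $(\spadesuit)$ and then to invoke the standard correspondence between exponential dichotomy and the resolvent set. Since $(\Omega,T)$ is strictly ergodic, hence minimal, Proposition \ref{Proposition - Constancy of the spectrum} shows that $\Sigma$ is independent of $\omega$; it therefore suffices to prove $E\notin\sigma(\Hw)$ for a single (equivalently, for every) $\omega\in\Omega$.

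First I would replace $M^E$ by the unimodular cocycle $\wME$. By Lemma \ref{Lemma - equivalence of M und tilde(M)} the map $\wME$ is uniform and $\Lambda(\wME)=\Lambda(M^E)>0$, and by construction $\wME(\omega)\in SL(2,\RZ)$, so Lemma \ref{Lemma - exponential growth} applies to it. Fixing $\omega$, that lemma yields for every $u\in\RZ^2\setminus\{0\}$ constants $D,\kappa>0$ with $\|\wME(n,\omega)u\|\geq D\,e^{\kappa|n|}$ either for all $n\geq 0$ or for all $n\leq 0$. Through the equivalence between $(\spadesuit)$ and the cocycle, this means that every nontrivial solution of $(\spadesuit)$ grows exponentially at $+\infty$ or at $-\infty$; in particular no nontrivial solution is bounded on $\GZ$, and none lies in $\ell^2(\GZ)$, so $E$ cannot be an eigenvalue of $\Hw$.

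The substance of the argument, and where the ``general result'' is needed, is to upgrade this one-sided, $u$-dependent growth to a genuine exponential dichotomy. Because $\det\wME(n,\omega)=1$, the two singular values of $\wME(n,\omega)$ are reciprocal, so as $n\to+\infty$ there is a most contracted direction whose image decays like $e^{-\kappa n}$; uniformity of $\wME$ together with $\Lambda(\wME)>0$ forces these directions to converge and produces a one-dimensional space $V_{+}(\omega)$ of solutions decaying exponentially at $+\infty$, and symmetrically a space $V_{-}(\omega)$ decaying at $-\infty$. The growth statement above gives $V_{+}(\omega)\cap V_{-}(\omega)=\{0\}$, since a common nonzero element would decay at both ends and contradict Lemma \ref{Lemma - exponential growth}; hence $\RZ^2=V_{+}(\omega)\oplus V_{-}(\omega)$, which is precisely an exponential dichotomy for $(\spadesuit)$ at energy $E$.

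Finally I would invoke the general fact that such an exponential dichotomy, i.e.\ linearly independent solutions decaying exponentially at the two ends, lets one assemble the Green's function of $\Hw$ at $E$ as a bounded operator on $\ell^2(\GZ)$, so that $E\in\rho(\Hw)$; as this holds for every $\omega$, Proposition \ref{Proposition - Constancy of the spectrum} yields $E\notin\Sigma$. The hard part will be exactly the passage from Lemma \ref{Lemma - exponential growth} to the uniform stable/unstable splitting and the resolvent estimate turning the dichotomy into membership in $\rho(\Hw)$; if the cited general result already packages ``exponential dichotomy implies resolvent'', the proof collapses to combining Lemmas \ref{Lemma - equivalence of M und tilde(M)} and \ref{Lemma - exponential growth} with that citation.
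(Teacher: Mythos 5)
Your strategy---upgrading uniformity plus $\Lambda(M^E)>0$ to an exponential dichotomy and then reading off membership in the resolvent set---is a known alternative route (essentially the discrete Johnson theorem), and it genuinely differs from the paper's argument. But as written it has a real gap at exactly the step you defer to a ``general result''. The fact you invoke at the end, that linearly independent solutions $u_+$, $u_-$ decaying exponentially at $+\infty$ resp.\ $-\infty$ let one assemble $(\Hw-E)^{-1}$ as a bounded operator, is \emph{false} at that level of generality. The kernel $G(n,m)=u_+(n\vee m)\,u_-(n\wedge m)/W$ defines a bounded operator on $\ell^2(\GZ)$ only if one has a quantitative bound $|G(n,m)|\leq C e^{-\kappa|n-m|}$, and that requires the dichotomy constants to be uniform along the orbit of $\omega$ together with a uniform lower bound on the angle between the stable and unstable directions at every site $T^k\omega$ (equivalently, $|W|$ comparable to the product of the local sizes of $u_+$ and $u_-$). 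Pointwise decay gives none of this: sparse-barrier Jacobi matrices (long free stretches separated by barrier blocks of comparable length) admit, at a suitable real energy $E$, independent solutions decaying exponentially at both ends --- with long plateaus --- while $E$ still lies in $\sigma(H)$, because Weyl sequences live on the plateaus and there $|u_+(n)u_-(m)|/|W|$ stays of order one. So ``dichotomy $\Rightarrow$ resolvent'' is a theorem about \emph{uniformly} hyperbolic cocycles, not a formal consequence of decay. The same criticism applies one step earlier: the convergence of the most contracted directions, the invariance $\wME(\omega)V_{+}(\omega)=V_{+}(T\omega)$, and above all the uniformity in $\omega$ of the resulting constants are precisely the content of the theorem ``uniform $+$ positive exponent implies uniform hyperbolicity''; this is true, and uniformity of $\wME$ does supply it, but it is the hard part of your route, not a remark, and neither it nor the Johnson-type resolvent statement is available as a citable result inside the paper.

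For comparison, the paper sidesteps the dichotomy entirely. It uses Theorem \ref{Theorem - Offenheit von U+(Omega)} (openness of $\mathcal{U}(\Omega)_+$) together with continuity of $F\mapsto\widetilde{M}^F$ in the metric $d$ to produce a whole interval $I\ni E$ with $\widetilde{M}^F\in\mathcal{U}(\Omega)_+$ for all $F\in I$; if $E$ belonged to $\Sigma$, the spectral measure of $\Hw$ would charge $I$, so by the Sch'nol-type theorem (\cite{CarmonaLacroix}, Theorem II.4.5) some $F\in I$ would admit a nonzero polynomially bounded solution of $(\spadesuit)$, which contradicts the exponential growth guaranteed by Lemma \ref{Lemma - exponential growth}. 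Note this argument needs only the one-sided growth lemma you already use, plus the openness theorem; the spectral input (Sch'nol) is soft. To repair your proof you would either switch to this interval-plus-Sch'nol argument, or carry out the uniform hyperbolicity construction with explicit uniform constants and then prove, or precisely cite, the discrete version of Johnson's theorem.
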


\begin{proof}
By Lemma \ref{Lemma - equivalence of M und tilde(M)} $\widetilde{M}^{E}$ is uniform and $\Lambda(\widetilde{M}^{E})>0$. Recall the definition of the metric $d$ on $\C(\Omega,SL(2,\RZ)$ defined in the beginning of Section \ref{section - Key Results}. Note that with respect to this metric, for each $\varepsilon>0$ there exists an interval $I(\varepsilon)\ni E$ such that for all $F\in I(\varepsilon)$ we have $d(\widetilde{M}^F,\wME)<\varepsilon$. According to Theorem \ref{Theorem - Offenheit von U+(Omega)} the set \begin{gather*}
\mathcal{U}(\Omega)_+:=\{M:\Omega\to SL(2,\RZ)\;|\; M \text{ continuous, uniform and } \Lambda(M)>0\}
\end{gather*}
is open. Thus, there exists an open interval $I$ containing $E$ such that for all $F\in I$ we have $M^F\in\mathcal{U}(\Omega)_+$.

\medskip

Choose one $\omega\in\Omega$ and assume the contrary i.e. $E\in\Sigma$. Thus, there exists spectrum of $\Hw$ in $I$. Consequently, the spectral measure of $\Hw$ gives actually weight to $I$. Hence, there must be a solution $(u(n))_{n\in\GZ}$ 
 of the difference equation $(\spadesuit)$ for one $F\in I$ which is polynomially bounded and not zero, see \cite{CarmonaLacroix}, Theorem II.4.5. Since $\widetilde{M}^{F}\in\mathcal{U}(\Omega)_+$ and $(u(n))_{n\in\GZ}$ solves $(\spadesuit)$ there are constants $K\geq 1$ and $\kappa,D>0$ such that 
\begin{gather*}
K\cdot\left\| \left(\begin{matrix}
u(n+1)\\
u(n)
\end{matrix}\right) \right\| \geq\left\| \begin{pmatrix}
u(n+1)\\
\aw(n+1)\cdot u(n)
\end{pmatrix} \right\|= \left\|\widetilde{M}^{F}(n,\omega)\begin{pmatrix}
u(1)\\
\aw(1)\cdot u(0)
\end{pmatrix}\right\|\geq D\cdot e^{\kappa\cdot |n|}
\end{gather*}
for $n\geq 0$ or $n\leq 0$, see Lemma \ref{Lemma - exponential growth}. This contradicts the polynomial boundedness of $(u(n))_{n\in\GZ}$.
\end{proof}

The next step will be to prove a general result that the set where the Lyapunov exponent vanishes is contained in the spectrum under certain conditions to the dynamical system. In order to do so, we use the notion of a subexponentially increasing sequence. In detail, $(u(n))_{n\in\GZ}$ is called \textit{subexponentially increasing}, if 
\begin{gather*}
\limsup\limits_{|n|\to\infty} \frac{1}{|n|}\log|u(n)|\leq 0.
\end{gather*}
By some elementary arguments, one can check that this is equivalent to the fact that
\begin{gather*}
\limsup\limits_{|n|\to\infty} \frac{1}{|n|}\log\left(\sum\limits_{j=-|n|}^{|n|}|u(j)|^2\right)^{\frac{1}{2}}\leq 0.
\end{gather*}

Define for $(u(n))_{n\in\GZ}$ a new sequence
\begin{gather*}
u_l(k):=\one_{[-l,l]}(k)\cdot u(k),\quad\quad\quad k\in\GZ
\end{gather*}
where $\one_{[-l,l]}(k)$ is equal to one if $k\in [-l,l]$ and zero else. 

\begin{Lemma}\label{Lemma - Property of sequences which increas subexponential}
Let $(u(n))_{n\in\GZ}\subseteq\RZ$ be a subexponentially increasing sequence. Then for all $\delta>0$ there is some $n(\delta)\in\NZ$ such that for each $l\in\NZ$ with $l\geq n(\delta)$ we have
\begin{gather*}
\|u_{l+1}\|^2\leq e^{\delta}\cdot \|u_{l-1}\|^2
\end{gather*}
\end{Lemma}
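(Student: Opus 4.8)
The plan is to reformulate the claim in terms of the non-decreasing sequence of window sums and then to argue by contradiction, converting a cofinal family of failures of the desired inequality into a growth rate that is incompatible with the subexponential hypothesis.

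First I would set $S_l := \|u_l\|^2 = \sum_{k=-l}^{l}|u(k)|^2$ and record two elementary facts. The sequence $(S_l)_{l\in\NZ}$ is non-decreasing, and by the equivalent reformulation of subexponential increase stated just before the lemma one has $\limsup_{l\to\infty}\frac{1}{l}\log S_l\le 0$; discarding the trivial case $u\equiv 0$, we may assume $S_l>0$ for all large $l$. The quantity governing the claim is the two-step increment
\[
S_{l+1}-S_{l-1} = |u(l)|^2+|u(l+1)|^2+|u(-l)|^2+|u(-l-1)|^2 \ge 0,
\]
which is exactly the sum of the four boundary terms entering the enlarged window. Writing $a_l := \log S_l$, the assertion $\|u_{l+1}\|^2\le e^{\delta}\|u_{l-1}\|^2$ is equivalent to $a_{l+1}-a_{l-1}\le\delta$, so it suffices to bound these log-increments of the monotone sequence $(a_l)$ for every $l\ge n(\delta)$.

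Next I would fix $\delta>0$ and argue by contradiction. If no admissible threshold $n(\delta)$ existed, the inequality would fail cofinally, producing scales $l_1<l_2<\cdots\to\infty$ with $a_{l_i+1}-a_{l_i-1}>\delta$ for every $i$. Using the monotonicity of $(a_l)$ and the non-negativity of each two-step increment, I would accumulate the contributions of these bad scales along the sequence to obtain a lower bound for $a_L$ at large $L$, and then compare it with the subexponential ceiling $\limsup_{L\to\infty}\frac{1}{L}\log S_L\le 0$. The aim is to show that the repeated jumps of size at least $\delta$ keep $\frac{1}{L}\log S_L$ bounded away from $0$, contradicting the hypothesis and thereby forcing the inequality to hold for all sufficiently large $l$.

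The hard part will be precisely this final accounting. A single jump, or a very sparse family of jumps of size $\delta$, does not on its own overturn subexponential growth, so the contradiction can only be extracted once one controls how much total increment the failures are allowed to contribute. The delicate task is therefore to select $n(\delta)$ -- sharpening it, if necessary, through the exact rate at which $\frac{1}{l}\log S_l$ approaches its limit superior -- so that every failure at a scale $l_i$ is tied, via the monotonicity of $(S_l)$ together with the subexponential upper bounds $|u(n)|^2\le e^{\varepsilon|n|}$ valid for large $|n|$, to a definite share of the total growth budget of $a_L$. I expect essentially all of the work to concentrate here: once the bad scales are shown to exhaust more than the available subexponential budget, the contradiction, and hence the uniform estimate for all $l\ge n(\delta)$, follows at once.
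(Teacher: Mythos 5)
Your reduction to the log-increments of $S_l:=\|u_l\|^2$ and your contradiction setup match the intended strategy, but the proposal stops exactly at the step where the proof has to happen, and that missing ``accounting'' step cannot be supplied: the negation of the statement only yields a cofinal family of bad scales $l_1<l_2<\cdots$, and chaining through them via monotonicity of $(S_l)$ gives no more than $S_{l_k+1}>e^{k\delta}S_{l_1-1}$, hence a lower bound of order $\delta k/l_k$ for $\frac{1}{l_k}\log S_{l_k+1}$, which is vacuous when the bad scales are sparse. Indeed, your own observation that a very sparse family of $\delta$-jumps does not overturn subexponential growth is the seed of a counterexample to the lemma as literally stated: fix $\delta>0$ and take $u(0)=1$, $u(2^k)=e^{\delta k}$ for $k\in\NZ$, and $u(n)=0$ otherwise. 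Then $\frac{1}{|n|}\log|u(n)|\to 0$ along the support, so $u$ is subexponentially increasing, while
\begin{gather*}
\frac{\|u_{l+1}\|^2}{\|u_{l-1}\|^2}\;\longrightarrow\; e^{2\delta}>e^{\delta}\qquad\text{along } l=2^k,
\end{gather*}
so no threshold $n(\delta)$ can work for all $l\ge n(\delta)$. Consequently no refinement of $n(\delta)$ or of the pointwise bounds $|u(n)|\le e^{\epsilon|n|}$ will close your argument; the ``growth budget'' simply is not exhausted by sparse failures.

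For comparison, the paper's proof has the same soft spot: after passing ``without loss of generality'' to even $l_k$ it asserts $\|u_{l_k+1}\|>e^{\delta l_k/2}\|u_1\|$, an estimate obtained by chaining the bad inequality through \emph{every} even index up to $l_k$ --- legitimate only if the inequality fails at all sufficiently large even $l$, which is precisely the quantifier the negation does not provide. What that computation actually proves is the weaker, correct statement: for every $\delta>0$ there are \emph{infinitely many} $l\in\NZ$ with $\|u_{l+1}\|^2\le e^{\delta}\|u_{l-1}\|^2$. (If this failed for all $l\ge L_0$, with $L_0$ large enough that $S_{L_0-1}>0$, chaining over $l=L_0,L_0+2,\ldots$ gives $S_{L_0+2m-1}>e^{m\delta}S_{L_0-1}$, whence $\liminf_{l\to\infty}\frac{1}{l}\log S_l\ge\frac{\delta}{2}>0$, contradicting subexponential growth.) This subsequence version is all that is used downstream: in the proof of Lemma \ref{Lemma - Gamma is a subset of the spectrum} one extracts a (diagonal) subsequence $(l_k)$ anyway, so the Weyl-sequence construction survives unchanged. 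The constructive fix for your write-up is therefore to prove and invoke the subsequence version rather than the uniform one; as it stands, your plan defers the entire difficulty to a final step that is not merely hard but impossible.
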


\begin{proof}
Assume the contrary, which means that for every $\delta>0$ and $l(\delta)\in\NZ$ there is an $l\geq l(\delta)$ such that $\|u_{l+1}\|^2 > e^{\delta} \|u_{l-1}\|^2$. Without loss of generality, consider the subsequence $(l_k)_{k\in\NZ}$ such that $l_k$ is even for each $k\in\NZ$ and such that the latter inequality holds. Then $\|u_{l_k+1}\| > e^{\delta\cdot\frac{l_k}{2}}\cdot\|u_1\|$
and so,
\begin{align*}
\frac{1}{l_k+1}\cdot\log\|u_{l_k+1}\| > 
 \frac{\delta}{2}\cdot\underbrace{\frac{l_k}{l_k+1}}_{\geq \frac{1}{2}} + \frac{1}{l_k+1}\cdot\log\|u_1\|
\geq  \frac{\delta}{4} + \frac{1}{l_k+1}\log\|u_1\|.
\end{align*}
Hence,
\begin{gather*}
\limsup_{k\rightarrow\infty}\frac{1}{l_k+1}\log\|u_{l_k+1}\|
\geq\liminf_{k\rightarrow\infty}\frac{1}{l_k+1}\log\|u_{l_k+1}\|> \frac{\delta}{4}>0
\end{gather*}
contradicting the subexponential growth.
\end{proof}

The following well-known statement can for example be found in \cite{CarmonaLacroix}, Proposition V.4.1. We give the proof for the sake of completeness.

\begin{Lemma}\label{Lemma - Gamma is a subset of the spectrum}
Let $(\Omega,T)$ be strictly ergodic. Then
\begin{gather*}
\Gamma:=\{E\in\RZ\;|\; \gamma(E)=0\} \subseteq\Sigma.
\end{gather*}
\end{Lemma}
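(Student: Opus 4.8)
The plan is to prove the inclusion directly via Weyl's criterion: I will show that for every $E$ with $\gamma(E)=0$ there is some $\omega\in\Omega$ and a sequence of unit vectors $v_l\in\ell^2(\GZ)$ with $\|(\Hw-E)v_l\|\to 0$, so that $E\in\sigma(\Hw)=\Sigma$ by Proposition \ref{Proposition - Constancy of the spectrum} (it suffices to find a single such $\omega$, since the spectrum is common). The candidates $v_l$ will be normalised truncations of a genuine, non-$\ell^2$ solution of $(\spadesuit)$ whose growth is controlled by the vanishing of the Lyapunov exponent.

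First I would produce such a solution. Since $\gamma(E)=\Lambda(M^E)=\Lambda(\wME)$ by Lemma \ref{Lemma - equivalence of M und tilde(M)} and this equals $0$, Proposition \ref{Proposition - Kingmans subadditive ergodic theorem} gives a full-measure set of $\omega$ for which $\frac{1}{n}\log\|\wME(n,\omega)\|\to 0$ as $n\to+\infty$. The hard part will be obtaining the same subexponential bound as $n\to-\infty$, which is what forces two-sided control on solutions. Here I would use that $\wME$ takes values in $SL(2,\RZ)$, where $\|A^{-1}\|=\|A\|$, together with the cocycle identity $\wME(-n,\omega)=\wME(n,T^{-n}\omega)^{-1}$; integrating $\log\|\wME(-n,\cdot)\|$ and using the $T$-invariance of $\mu$ shows that the backward cocycle (a forward cocycle for the uniquely ergodic system $(\Omega,T^{-1})$) has the same subadditive limit $0$. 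Hence on a full-measure set both one-sided limits vanish, and by Lemma \ref{Lemma - equivalence of M und tilde(M)} the same holds for $M^E$. Fixing one such $\omega$ and any nonzero solution $(u(n))_{n\in\GZ}$ of $(\spadesuit)$, the relation $\binom{u(n+1)}{u(n)}=M^E(n,\omega)\binom{u(1)}{u(0)}$ makes $u$ subexponentially increasing in the sense defined before Lemma \ref{Lemma - Property of sequences which increas subexponential}.

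Next I would set $u_l:=\one_{[-l,l]}\cdot u$ and compute $(\Hw-E)u_l$. Because $u$ solves $(\spadesuit)$, this vector vanishes except at the four boundary sites $n\in\{l,l+1,-l,-l-1\}$, where its entries are $-\aw(l+1)u(l+1)$, $\aw(l+1)u(l)$, $-\aw(-l)u(-l-1)$ and $\aw(-l)u(-l)$ respectively (the first and third obtained by using the difference equation to rewrite the interior terms). Using $|\aw(\cdot)|\leq K$ and noting that $\|u_{l+1}\|^2-\|u_{l-1}\|^2=|u(l)|^2+|u(l+1)|^2+|u(-l)|^2+|u(-l-1)|^2$, I get $\|(\Hw-E)u_l\|^2\leq K^2\big(\|u_{l+1}\|^2-\|u_{l-1}\|^2\big)$.

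Finally I would invoke Lemma \ref{Lemma - Property of sequences which increas subexponential}: for each $\delta>0$ and all large $l$ one has $\|u_{l+1}\|^2\leq e^{\delta}\|u_{l-1}\|^2\leq e^{\delta}\|u_l\|^2$, whence $\|(\Hw-E)u_l\|^2\leq K^2(e^{\delta}-1)\|u_l\|^2$. Choosing $\delta_k\to 0$ and $l_k\to\infty$ with $\|u_{l_k}\|>0$ and normalising $v_k:=u_{l_k}/\|u_{l_k}\|$ yields $\|(\Hw-E)v_k\|\leq K\sqrt{e^{\delta_k}-1}\to 0$, so $E\in\sigma(\Hw)=\Sigma$. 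The main obstacle is the second paragraph, namely upgrading the almost-sure forward subexponential bound from Kingman's theorem to a genuinely two-sided one, since only two-sided subexponential growth of the solution feeds into Lemma \ref{Lemma - Property of sequences which increas subexponential}; the boundary estimate and the final Weyl-criterion step are then routine.
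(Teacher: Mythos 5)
Your proposal is correct, and its core is the same as the paper's proof: truncate a subexponentially increasing solution of $(\spadesuit)$, estimate $\|(\Hw-E)u_l\|$ by boundary terms, invoke Lemma \ref{Lemma - Property of sequences which increas subexponential}, and extract a Weyl sequence along a diagonal subsequence. The one place you genuinely deviate is the source of the \emph{two-sided} subexponential bound. The paper gets it for \emph{every} $\omega$ at once: it combines Lemma \ref{Lemma - Boundeness} (Furman's upper bound $\limsup_n \frac{1}{n}\log\|M(n,\omega)\|\leq\Lambda(M)$, which holds uniformly on $\Omega$, not just almost surely) with the identity $\Lambda(M^E)=\Lambda\bigl((M^E)^{-1}\bigr)$ obtained from Lemma \ref{Lemma - equivalence vanishing of Lambda and uniformity}, so no exceptional set ever appears. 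You instead use Kingman (Proposition \ref{Proposition - Kingmans subadditive ergodic theorem}) on the forward cocycle and on the backward cocycle over $(\Omega,T^{-1})$, identify the backward exponent with the forward one via $\|A\|=\|A^{-1}\|$ for $A\in SL(2,\RZ)$ and $T$-invariance of $\mu$, and then work at a single full-measure $\omega$, finishing with Proposition \ref{Proposition - Constancy of the spectrum} to pass from $\sigma(\Hw)$ to $\Sigma$. Both routes are sound; the paper's buys a statement valid at every $\omega$ without appealing to constancy of the spectrum at this step, while yours is self-contained modulo Kingman and makes the $SL(2,\RZ)$ symmetry explicit. As a side remark, your boundary computation (the four nonzero entries of $(\Hw-E)u_l$ and the identity $\|u_{l+1}\|^2-\|u_{l-1}\|^2=|u(l)|^2+|u(l+1)|^2+|u(-l)|^2+|u(-l-1)|^2$, giving the constant $K^2$) is actually more precise than the estimate as printed in the paper.
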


\begin{proof}
Consider some $E\in\Gamma$ and choose one $\underline{u}\in\RZ^2$ as initial condition with $\|\underline{u}\|=1$. Let $\left(u(n)\right)_{n\in\GZ}$ be a solution of the difference equation $(\spadesuit)$ for some $\omega\in\Omega$ with $\begin{pmatrix}
u(1)\\
u(0)
\end{pmatrix}:=\underline{u}$. Then Lemma \ref{Lemma - equivalence vanishing of Lambda and uniformity} and a short computation lead to $\Lambda(M^E)=\Lambda\left(\left(M^E\right)^{-1}\right)$. Consequently, the inequality
\begin{gather*}
\limsup_{|n|\rightarrow\infty}\frac{1}{|n|}\log\left\|\left(\begin{matrix}
u(n)\\
u(n+1)
\end{matrix}\right)\right\| \leq 0
\end{gather*}
follows by Lemma \ref{Lemma - Boundeness} meaning that $(u(n))_{n\in\GZ}$ is subexponentially increasing.

\medskip

Recapitulate the notion of $u_l(n):=\one_{[-l,l]}(n)\cdot u(n)$ ($n\in\GZ$) for $l\in\NZ$. Then by using the subexponential growth of $(u(n))_{n\in\GZ}$ there is for all $\delta>0$ some $l(\delta)\in\NZ$ such that for each $l\in\NZ$ with $l\geq l(\delta)$ it follows

\begin{align*}
\|(\Hw-E)u_l\|^2  \underset{C:=\|\Hw\|+|E|}{\leq} &C\cdot\left( \underbrace{\|u_{l+1}\|^2}_{\leq e^{\delta}\|u_{l-1}\|^2, \text{L. \ref{Lemma - Property of sequences which increas subexponential}}} - \;\;\|u_{l-1}\|^2 \right)\\
\overset{\quad\quad\quad\quad\quad}{\leq} &C (e^{\delta}-1)\cdot \|u_{l}\|^2.
\end{align*}

\medskip

Since the expression $\left(e^{\delta}-1\right)$ converges to zero as $\delta$ tends to zero we can choose a diagonal subsequence $(l_k)_{k\in\NZ}$ such that
\begin{gather*}
\left\|(\Hw-E)\frac{u_{l_k}}{\|u_{l_k}\|}\right\|\overset{k\rightarrow\infty}{\longrightarrow} 0.
\end{gather*}
Thus, we have constructed a Weyl sequence for $E\in\Gamma$ with respect to the operator $\Hw$. Hence, by general results, it follows that $E$ is an element of the spectrum $\Sigma$.
\end{proof}

Proposition \ref{Proposition - E nicht im Spektrum} and Lemma \ref{Lemma - Gamma is a subset of the spectrum} yield to the following statement.

\begin{Lemma}\label{Lemma - M uniform impliziert}
Let $(\Omega,T)$ be strictly ergodic. If $M^E$ is uniform for every $E\in\RZ$, then $\Sigma=\Gamma$ and the Lyapunov exponent $\gamma:\RZ\rightarrow[0,\infty)$ is continuous.
\end{Lemma}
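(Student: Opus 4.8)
The plan is to prove the two assertions separately, leaning on the results already assembled in this section. For the set equality $\Sigma=\Gamma$, I would first observe that the inclusion $\Gamma\subseteq\Sigma$ is exactly the content of Lemma~\ref{Lemma - Gamma is a subset of the spectrum}, so only the reverse inclusion $\Sigma\subseteq\Gamma$ needs to be shown. I would argue by contraposition: fix $E\in\RZ$ with $E\notin\Gamma$, i.e.\@ $\gamma(E)\neq 0$. By Lemma~\ref{Lemma - positivity of the lyapunov exponents} the Lyapunov exponent is nonnegative, so in fact $\gamma(E)=\Lambda(M^E)>0$. Since by hypothesis $M^E$ is uniform, Proposition~\ref{Proposition - E nicht im Spektrum} applies directly and yields $E\notin\Sigma$. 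This establishes $\Sigma\subseteq\Gamma$, and combined with the reverse inclusion the equality $\Sigma=\Gamma$ follows.

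For the continuity of $\gamma$, I would fix an energy $E_0\in\RZ$ together with an arbitrary sequence $E_n\to E_0$ and show $\gamma(E_n)\to\gamma(E_0)$. The starting point is Lemma~\ref{Lemma - equivalence of M und tilde(M)}, which gives $\gamma(E)=\Lambda(M^E)=\Lambda(\wME)$ for every $E$ and guarantees that each $\widetilde{M}^E:\Omega\to SL(2,\RZ)$ is uniform. Passing to the $SL(2,\RZ)$-valued cocycle $\widetilde{M}^E$ is advantageous for two reasons: the metric $d$ and Lemma~\ref{Lemma - Convergence of the Lambda} are then directly applicable, and the $E$-dependence of $\widetilde{M}^E(\omega)$ is affine and confined to the single top-left matrix entry.

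The key estimate is to control $d(\widetilde{M}^{E_n},\widetilde{M}^{E_0})$. From the explicit form of $\widetilde{M}^E(\omega)$, the difference $\widetilde{M}^{E_n}(\omega)-\widetilde{M}^{E_0}(\omega)$ has only the top-left entry $\frac{E_n-E_0}{\aw(2)}$ nonzero, so its operator norm equals $\frac{|E_n-E_0|}{|\aw(2)|}$. Using the uniform lower bound $|\aw(\cdot)|\geq \tfrac{1}{K}$, this is at most $K\,|E_n-E_0|$ uniformly in $\omega\in\Omega$, whence $d(\widetilde{M}^{E_n},\widetilde{M}^{E_0})\leq K\,|E_n-E_0|\to 0$. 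Since $\widetilde{M}^{E_0}$ is uniform, Lemma~\ref{Lemma - Convergence of the Lambda} now yields $\Lambda(\widetilde{M}^{E_n})\to\Lambda(\widetilde{M}^{E_0})$, i.e.\@ $\gamma(E_n)\to\gamma(E_0)$. As the sequence $E_n\to E_0$ was arbitrary, $\gamma$ is continuous at $E_0$, and hence on all of $\RZ$.

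I do not expect either part to present a serious obstacle once the preceding results are available: the set equality is an immediate combination of Proposition~\ref{Proposition - E nicht im Spektrum} and Lemma~\ref{Lemma - Gamma is a subset of the spectrum}, and the only genuine computation for continuity is the elementary norm bound on $\widetilde{M}^{E_n}-\widetilde{M}^{E_0}$. The one point requiring care is that Lemma~\ref{Lemma - Convergence of the Lambda} requires the \emph{limit} cocycle to be uniform, not merely the approximating sequence; this is precisely what the global uniformity hypothesis supplies for $\widetilde{M}^{E_0}$ through Lemma~\ref{Lemma - equivalence of M und tilde(M)}.
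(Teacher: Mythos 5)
Your proposal is correct and follows essentially the same route as the paper: the equality $\Sigma=\Gamma$ is obtained by combining Proposition~\ref{Proposition - E nicht im Spektrum} with Lemma~\ref{Lemma - Gamma is a subset of the spectrum}, and the continuity of $\gamma$ comes from the norm estimate $d(\widetilde{M}^{E_n},\widetilde{M}^{E_0})\leq K\,|E_n-E_0|$ together with Lemma~\ref{Lemma - Convergence of the Lambda} and Lemma~\ref{Lemma - equivalence of M und tilde(M)}. Your write-up simply makes explicit the ``short calculation'' the paper leaves to the reader, including the correct observation that Lemma~\ref{Lemma - Convergence of the Lambda} needs uniformity of the limit cocycle $\widetilde{M}^{E_0}$.
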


\begin{proof}
The equation $\Sigma=\Gamma$ is a direct consequence of Proposition \ref{Proposition - E nicht im Spektrum} and Lemma \ref{Lemma - Gamma is a subset of the spectrum}. A short calculation using Lemma \ref{Lemma - Convergence of the Lambda} shows that $\Lambda(\wME)$ is continuous. By Lemma \ref{Lemma - equivalence of M und tilde(M)}, this holds also true for $\gamma$. 
\end{proof}

\begin{Lemma}\label{Lemma - E with gamma(E)=0 implies uniformity}
Let $(\Omega,T)$ be strictly ergodic. For $E\in\RZ$ with $\gamma(E)=0$ it follows that $M^E$ is uniform.
\end{Lemma}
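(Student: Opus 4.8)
The plan is to transfer the whole question to the unimodular cocycle $\wME$ and then sandwich the normalized logarithmic growth $\frac1n\log\|\wME(n,\omega)\|$ between a pointwise lower bound and a \emph{uniform} upper bound, both equal to $0$. First I would invoke Lemma~\ref{Lemma - equivalence of M und tilde(M)}: since $M^E$ is uniform if and only if $\wME$ is, and since $\Lambda(M^E)=\Lambda(\wME)=\gamma(E)=0$ by hypothesis, it suffices to show that $\wME$ is uniform, i.e.\ that $\frac1n\log\|\wME(n,\omega)\|$ converges to $0$ uniformly in $\omega\in\Omega$.

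For the lower bound I would argue exactly as in the proof of Lemma~\ref{Lemma - positivity of the lyapunov exponents}: because $\det\wME(\omega)=1$, the cocycle satisfies $\det\wME(n,\omega)=1$, and hence $\|\wME(n,\omega)\|\ge 1$ for every $n\in\NZ$ and every $\omega\in\Omega$ (the larger singular value of a real $2\times2$ matrix of determinant one is necessarily at least one). This gives $\frac1n\log\|\wME(n,\omega)\|\ge 0$, pointwise and independently of $n$. For the matching upper bound I would apply Lemma~\ref{Lemma - Boundeness} with $\Lambda(\wME)=0$, which yields $\limsup_{n\to\infty}\frac1n\log\|\wME(n,\omega)\|\le 0$ uniformly on $\Omega$; concretely, for every $\varepsilon>0$ there is an $N\in\NZ$ such that $0\le\frac1n\log\|\wME(n,\omega)\|\le\varepsilon$ for all $n\ge N$ and all $\omega\in\Omega$. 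Together these two bounds show that the limit exists and equals $0$ for every $\omega$, with convergence uniform in $\omega$, which is precisely the statement that $\wME$ — and therefore $M^E$ — is uniform.

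The argument is short, and I do not expect a serious obstacle; the only point requiring care is reading off genuine uniform convergence from the uniform $\limsup$ assertion of Lemma~\ref{Lemma - Boundeness}. This works precisely because unimodularity pins the lower bound to be exactly $0$, uniformly in both $n$ and $\omega$, so the uniform one-sided bound from Lemma~\ref{Lemma - Boundeness} upgrades to a genuine uniform two-sided limit. Everything beyond this is routine bookkeeping.
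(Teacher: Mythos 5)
Your proof is correct and is essentially the paper's own argument: the paper's proof is a one-line citation of Lemma~\ref{Lemma - Boundeness}, Lemma~\ref{Lemma - equivalence of M und tilde(M)} and Lemma~\ref{Lemma - positivity of the lyapunov exponents}, and you have simply spelled out how these combine — pass to the unimodular cocycle $\wME$, use $\det\wME(n,\omega)=1$ to pin the pointwise lower bound $\|\wME(n,\omega)\|\ge 1$, and use the uniform upper bound $\limsup_{n\to\infty}\frac1n\log\|\wME(n,\omega)\|\le\Lambda(\wME)=0$ to conclude uniform convergence to $0$. No gaps; this matches the intended argument exactly.
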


\begin{proof}
This follows immediately from Lemma \ref{Lemma - Boundeness}, Lemma \ref{Lemma - equivalence of M und tilde(M)} and Lemma \ref{Lemma - positivity of the lyapunov exponents}.
\end{proof}

The following statement, well-known under the name Combes/Thomas argument, can be proven along the lines of \cite{Kirsch} (Theorem 11.2) by adjusting constants. For the convenience of the reader, we give a sketch of the proof.

\begin{Proposition}\label{Proposition - Combes Thomas}
Let $(\Omega,T)$ be a dynamical system and let $\left(\Hw \right)_{\omega\in\Omega}$ be the family of the corresponding Jacobi operators, defined as above. Let $K\geq 1$ be the constant such that $\frac{1}{K}\leq|\aw(\cdot)|\leq K$ and $0\leq|\bw(\cdot)|\leq K$. For $\omega\in\Omega$ and some $E\in\RZ\setminus \sigma(\Hw)$ set $\eta:=dist(E,\sigma(\Hw))>0$, then for each $n,m\in\GZ$ there exists a constant $\kappa:=\kappa(\eta)>0$ such that the inequality
\begin{gather*}
\left|\langle \delta_n\; \mid \;(\Hw-E)^{-1}\delta_m\rangle\right| \leq \frac{2}{\eta}\cdot e^{-\kappa|n-m|}
\end{gather*}
holds where $\delta_k(k)=1$ and $\delta_k(n)=0$ for $n\neq k$.
\end{Proposition}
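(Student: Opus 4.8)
The plan is to run the classical Combes--Thomas scheme: conjugate $\Hw$ by an exponential weight, show that the conjugated operator is a small bounded perturbation of $\Hw$, and read off the exponential decay of the resolvent kernel from the gauge factor. Fix $\omega$ and $E\in\RZ\setminus\sigma(\Hw)$, and set $\eta:=\mathrm{dist}(E,\sigma(\Hw))>0$. For a parameter $\alpha\in\RZ$ I would consider the diagonal weight $(W_\alpha u)(n):=e^{\alpha\rho(n)}u(n)$, where $\rho:\GZ\to\RZ$ is a bounded, $1$-Lipschitz profile to be specified later; boundedness of $\rho$ makes $W_\alpha$ a bounded and boundedly invertible operator, which is what sidesteps all domain issues attached to the naive unbounded choice $\rho(n)=n$. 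Define $H_\omega^{(\alpha)}:=W_\alpha\,\Hw\,W_\alpha^{-1}$. Since $W_\alpha$ is diagonal, a direct computation of matrix entries shows that $H_\omega^{(\alpha)}$ is again tridiagonal with the \emph{same} diagonal $\bw$, while each off-diagonal entry at $(n,n\pm1)$ is multiplied by the factor $e^{\alpha(\rho(n)-\rho(n\pm1))}$.

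First I would estimate the perturbation. Only the two off-diagonals are affected, so the band structure together with $|\aw(\cdot)|\le K$ and the Lipschitz bound $|\rho(n)-\rho(n\pm1)|\le1$ yields
\[
\bigl\|H_\omega^{(\alpha)}-\Hw\bigr\|\;\le\;2K\bigl(e^{|\alpha|}-1\bigr).
\]
Now choose $\kappa:=\log\!\bigl(1+\tfrac{\eta}{4K}\bigr)>0$ and take $|\alpha|=\kappa$, so that the right-hand side is at most $\eta/2$. Because $\Hw$ is bounded and self-adjoint and $E\in\RZ\setminus\sigma(\Hw)$, the spectral theorem gives $\|(\Hw-E)^{-1}\|=1/\eta$. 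Writing $H_\omega^{(\alpha)}-E=(\Hw-E)\bigl[I+(\Hw-E)^{-1}(H_\omega^{(\alpha)}-\Hw)\bigr]$ and observing that the perturbation term has norm at most $(1/\eta)(\eta/2)=1/2<1$, a Neumann series shows that $E\notin\sigma(H_\omega^{(\alpha)})$ together with the uniform bound
\[
\bigl\|(H_\omega^{(\alpha)}-E)^{-1}\bigr\|\;\le\;\frac{1/\eta}{1-1/2}\;=\;\frac{2}{\eta}.
\]

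Finally I would extract the kernel estimate. Since $W_\alpha$ is bounded and invertible, the conjugation identity $(\Hw-E)^{-1}=W_\alpha^{-1}\,(H_\omega^{(\alpha)}-E)^{-1}\,W_\alpha$ holds as a genuine equality of bounded operators, whence, using that $W_\alpha$ is self-adjoint,
\[
\langle\delta_n\mid(\Hw-E)^{-1}\delta_m\rangle\;=\;e^{-\alpha(\rho(n)-\rho(m))}\,\langle\delta_n\mid(H_\omega^{(\alpha)}-E)^{-1}\delta_m\rangle.
\]
Specializing the profile to $\rho(n)=\min\{|n-m|,R\}$ (so that $\rho(m)=0$) and taking $\alpha=\kappa>0$ gives $|\langle\delta_n\mid(\Hw-E)^{-1}\delta_m\rangle|\le\tfrac{2}{\eta}\,e^{-\kappa\min\{|n-m|,R\}}$ by the resolvent bound above; letting $R\to\infty$ produces the claimed inequality with $\kappa=\kappa(\eta)>0$. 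The one point demanding care---and the reason for working with the truncated, bounded profile $\rho$ rather than the weight $e^{\alpha n}$---is precisely the justification of this conjugation identity as an equality of bounded operators; with $W_\alpha$ bounded this is automatic, so no unbounded-operator domain questions arise and the whole argument reduces to the elementary perturbative bound.
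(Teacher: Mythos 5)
Your proof is correct and runs the same classical Combes--Thomas scheme as the paper: conjugation of $\Hw$ by an exponential weight, a perturbation estimate using the tridiagonal structure and $|\aw(\cdot)|\leq K$, a Neumann series yielding the uniform bound $\frac{2}{\eta}$ on the conjugated resolvent, and extraction of the kernel decay from the gauge factor. The only substantive difference is that you conjugate by the \emph{bounded} truncated weight $e^{\kappa\min\{|\cdot-m|,R\}}$ and let $R\to\infty$, whereas the paper conjugates by the unbounded weight $\Mk u(n)=e^{\kappa|k-n|}u(n)$ and manipulates $(\Mk^{-1}\Hw\Mk-E)^{-1}$ formally; your variant actually closes the domain/boundedness gap that the paper's sketch leaves open, so it is, if anything, the more careful rendition of the same argument.
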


\begin{proof}
Let $\kappa:=\frac{\eta}{K}\cdot c$, where $c>0$ is some constant such that $2\, c\, e^{\kappa}\leq \frac{1}{2}$ and fix an arbitrary $\omega\in\Omega$. For $k\in\GZ$ define the multiplication operator $\Mk:\ell^2(\GZ)\to \ell^2(\GZ)$ by 
\begin{gather*}
\Mk u(n) := e^{\kappa\cdot|k-n|}\cdot u(n), \quad\quad n\in\GZ.
\end{gather*}
By using for some operator $A$ on $\ell^2(\GZ)$ the equality
\begin{gather*}
\langle\delta_n \; \mid \; \left( \Mk^{-1} A \Mk \right)\delta_m\rangle = e^{-\kappa \cdot |k-n|}\cdot \langle \delta_n\;\mid\; A\; \delta_m\rangle \cdot e^{\kappa\cdot |k-m|}, \quad\quad n,m\in\GZ
\end{gather*}
it follows
\begin{gather*}
\left| \langle \delta_n\;\mid\; (\Hw-E)^{-1} \;\delta_m\rangle\right|\leq e^{-\kappa\cdot|n-m|}\cdot \left\|\left( \Mk^{-1}\Hw\Mk-E \right)^{-1}\right\|, \quad\quad n,m\in\GZ.
\end{gather*}
Applying the resolvent equation we get
\begin{gather*}
\left(\Mk^{-1}\Hw\Mk-E\right)^{-1}\cdot \left( 1+ \left(\Mk^{-1}\Hw\Mk-\Hw\right)\cdot(\Hw-E)^{-1} \right) = (\Hw-E)^{-1}.
\end{gather*}
 Recall that $|\aw(\cdot)|\leq K$ and $|\bw(\cdot)|\leq K$ for some constant $K\geq 1$ which implies that $\left|\langle\delta_n\;\mid\; \Hw\; \delta_m\rangle\right|\leq K$ for $n,m\in\GZ$. We will invert $\left( 1+ \left(\Mk^{-1}\Hw\Mk-\Hw\right)\cdot(\Hw-E)^{-1} \right)$ by using the von Neumann series. In order to do so, we have to check that the norm of \begin{gather*}
\left(\Mk^{-1}\Hw\Mk-\Hw\right)\cdot(\Hw-E)^{-1}
\end{gather*}
is smaller than one. For $n\in\GZ$ we get by a short computation that
\begin{align*}
\sum\limits_{m\in\GZ}\left| \langle\delta_n\;\mid\;\left(\Mk^{-1}\Hw\Mk-\Hw\right) \delta_m\rangle \right|&\leq \sum\limits_{\substack{m\in\GZ\\ |m-n|= 1}}\left|e^{\kappa\cdot(|k-m|-|k-n|)}-1\right| \cdot \left|\langle \delta_n\;\mid\; \Hw\;\delta_m\rangle\right|\\
&\leq \sum\limits_{\substack{m\in\GZ\\ |m-n|= 1}} \kappa\, e^{\mu} \, K\\
&\leq 2\, K\, \mu\, e^{\mu}.
\end{align*}
Consequently, 
\begin{gather*}
\left\|\Mk^{-1}\Hw\Mk-\Hw\right\|\leq 2\, K\, \kappa\, e^{\kappa}
\end{gather*}
which leads to
\begin{gather*}
\left\| \left(\Mk^{-1}\Hw\Mk-\Hw\right)\cdot(\Hw-E)^{-1} \right\|\leq 2\, K\,\kappa\, e^{\kappa}\cdot \frac{1}{\eta}  = 2\, c\, e^{\kappa}\leq \frac{1}{2}.
\end{gather*}
Hence, by the norm estimate for von Neumann series 
\begin{gather*}
\left\| \left( 1 + \left(\Mk^{-1}\Hw\Mk-\Hw\right)\cdot(\Hw-E)^{-1} \right)^{-1} \right\|\leq 2
\end{gather*}
and by the previous considerations
\begin{gather*}
\left(\Mk^{-1}\Hw\Mk-E\right)^{-1} = (\Hw-E)^{-1}\left( 1+ \left(\Mk^{-1}\Hw\Mk-\Hw\right)\cdot(\Hw-E)^{-1}\right)^{-1}.
\end{gather*}
By the definition of $\kappa$ and the fact that $\|(\Hw-E)^{-1}\|\leq\frac{1}{\eta}$ this implies
\begin{gather*}
\left| \langle \delta_n\;\mid\; (\Hw-E)^{-1} \;\delta_m\rangle\right| \leq e^{-\kappa\cdot|n-m|}\cdot \left\|\left(\Mk^{-1}\Hw\Mk-E\right)^{-1}\right\| \leq \frac{2}{\eta} \cdot e^{-\kappa\cdot|n-m|}.
\end{gather*}
\end{proof}

The next statement follows the lines of \cite{Lenz1}, Lemma 4.3 and \cite{Lenz2}, Theorem 3.

\begin{Lemma}\label{Lemma - E with gamma(E)neq 0 implies uniformity}
Let $(\Omega,T)$ be strictly ergodic and $E\in\RZ\setminus\Sigma$. Then $M^E$ is uniform and $\gamma(E)>0$.
\end{Lemma}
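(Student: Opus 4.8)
The plan is to treat the two assertions separately and, via Lemma~\ref{Lemma - equivalence of M und tilde(M)}, to reduce the uniformity statement to the $SL(2,\RZ)$-cocycle $\wME$. The positivity $\gamma(E)>0$ is immediate: since $E\in\RZ\setminus\Sigma$ and, by Lemma~\ref{Lemma - Gamma is a subset of the spectrum}, the set $\{F\in\RZ\mid\gamma(F)=0\}$ is contained in $\Sigma$, we get $\gamma(E)\neq 0$, whence $\gamma(E)>0$ by Lemma~\ref{Lemma - positivity of the lyapunov exponents}. It therefore remains to prove that $\wME$ is uniform, and the whole difficulty lies here.

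The approach to uniformity is to exhibit a continuous, $T$-invariant splitting of $\RZ^2$ into a contracting and an expanding direction and then to diagonalize $\wME$ along it. The source of the splitting is the Combes/Thomas estimate. By Proposition~\ref{Proposition - Constancy of the spectrum} the spectrum $\sigma(\Hw)=\Sigma$ does not depend on $\omega$, so $\eta:=\mathrm{dist}(E,\Sigma)>0$ is uniform in $\omega$; hence Proposition~\ref{Proposition - Combes Thomas} yields constants $\kappa>0$ and $C>0$, independent of $\omega$, with
\begin{gather*}
\left|\langle\delta_n\mid(\Hw-E)^{-1}\delta_m\rangle\right|\leq C\,e^{-\kappa|n-m|},\qquad n,m\in\GZ.
\end{gather*}
Fixing the source at $m=0$, the sequence $n\mapsto\langle\delta_n\mid(\Hw-E)^{-1}\delta_0\rangle$ solves $(\spadesuit)$ away from the origin and decays exponentially; from it, together with its reflection, I would extract a nonzero solution decaying at $+\infty$ and a nonzero solution decaying at $-\infty$, both at the uniform rate $\kappa$. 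Passing to the corresponding initial vectors of the $\wME$-cocycle gives directions $w_{+}(\omega),w_{-}(\omega)\in\RZ^2\setminus\{0\}$ with
\begin{gather*}
\|\wME(n,\omega)\,w_{+}(\omega)\|\leq C'e^{-\kappa n}\quad(n\geq 0),\\
\|\wME(n,\omega)\,w_{-}(\omega)\|\leq C'e^{-\kappa|n|}\quad(n\leq 0).
\end{gather*}

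Next I would verify that this data assembles into a continuous invariant splitting. Continuity of $\omega\mapsto w_{\pm}(\omega)$ as directions in projective space follows from the norm-continuity of $\omega\mapsto(\Hw-E)^{-1}$, which in turn rests on the continuity of $p,q$ and the compactness of $\Omega$. Invariance, $\wME(\omega)w_{\pm}(\omega)\parallel w_{\pm}(T\omega)$, is built into the cocycle relation. The two directions are transverse for every $\omega$, since a common direction would produce a nonzero solution decaying at both ends, i.e.\ an $\ell^2$-eigenvector of $\Hw$ for $E$, contradicting $E\notin\sigma(\Hw)$. By continuity and compactness of $\Omega$ the angle between $w_{+}(\omega)$ and $w_{-}(\omega)$ is then bounded away from $0$ uniformly, so the splitting is a genuine uniform exponential dichotomy.

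Finally I would feed this into Lemma~\ref{Lemma - stetig diagonalisierbar}. Taking $C(\omega)^{-1}$ to be the matrix whose columns are unit representatives of $w_{+}(\omega)$ and $w_{-}(\omega)$ conjugates $\wME$ into the form $\wME(\omega)=C^{-1}(T\omega)\,\mathrm{diag}(f_1(\omega),f_2(\omega))\,C(\omega)$, where $|f_1|,|f_2|$ are the one-step contraction and expansion factors; these and $\|C\|,\|C^{-1}\|$ are continuous, the control on $C^{-1}$ coming from the uniform lower bound on the angle. Lemma~\ref{Lemma - stetig diagonalisierbar} then gives uniform convergence of $\tfrac1n\log\|\wME(n,\omega)\|$, i.e.\ uniformity of $\wME$, and hence of $M^E$ by Lemma~\ref{Lemma - equivalence of M und tilde(M)}. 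I expect the main obstacle to be exactly the passage from the pointwise Combes/Thomas decay to the \emph{continuous} invariant line fields $w_{\pm}$ with \emph{uniform} transversality — the content that upgrades an almost-everywhere statement about the Lyapunov exponent to a uniform one. A secondary technical point is the orientability of the invariant line bundles, needed for a globally continuous $C$; this can be circumvented by observing that only the norms along $w_{\pm}$ enter, and these are well defined independently of a global choice of sign.
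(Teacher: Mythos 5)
Your proposal is correct and follows essentially the same route as the paper's own proof: the positivity of $\gamma(E)$ via Lemma~\ref{Lemma - Gamma is a subset of the spectrum}, the Combes/Thomas estimate producing exponentially decaying solutions at $\pm\infty$, transversality via the absence of an $\ell^2$-eigenvector, and the continuous diagonalization fed into Lemma~\ref{Lemma - stetig diagonalisierbar} together with Lemma~\ref{Lemma - equivalence of M und tilde(M)}. The only differences are cosmetic (the paper builds the decaying solutions from the columns $(\Hw-E)^{-1}\delta_i$ for $i\in\{-1,0\}$ and $i\in\{1,2\}$ rather than from a single Green's function column and its reflection, and it handles the sign ambiguity exactly as you suggest, by working with $|d|,|e|,\|C\|,\|C^{-1}\|$ only).
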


\begin{proof}
Let $E\in\RZ\setminus\Sigma = \RZ\setminus\sigma(\Hw)$ for $\omega\in\Omega$. In Lemma \ref{Lemma - Gamma is a subset of the spectrum} it is shown that $\Gamma\subseteq\Sigma$ is a general result and so, $\gamma(E)>0$ for $E\in\RZ\setminus\Sigma$. According to Lemma \ref{Lemma - equivalence of M und tilde(M)} we have $\Lambda(\wME)>0$.

\medskip

In the following, we will first show for $\omega\in\Omega$ that there exist two unique (up to a sign) normalized vectors $u(\omega)$ and $v(\omega)$ such that they satisfy the following condition. The norm $\|\wME(n,\omega)\; u(\omega)\|$ decays exponentially, if $n$ tends to $\infty$ and similarly, $\|\wME(-n,\omega)\; v(\omega)\|$ decays exponentially, if $n$ goes to $\infty$. Secondly, we will use these normalized vectors to construct a diagonalization of $\wME$ as in Lemma \ref{Lemma - stetig diagonalisierbar} leading to the uniformity of $M^E$.

\medskip

For $\omega\in\Omega$, set
\begin{alignat*}{4}
u_i(n):=\; & (\Hw-E)^{-1} \;\delta_i(n) \;=\; &\left\langle \delta_n\; | \; (\Hw-E)^{-1} \; \delta_i\right\rangle, \quad\quad &n\in\GZ,
\end{alignat*}
for $i\in\GZ$ which is an element of $\ell^2(\GZ)$. Fix one $\omega\in\Omega$ and consider the vectors
\begin{gather*}
\underline{u_0}\; :=\; \begin{pmatrix}
u_0(0)\\
u_0(1)
\end{pmatrix} \quad\quad\quad\quad \underline{u_{-1}}\; :=\; \begin{pmatrix}
u_{-1}(0)\\
u_{-1}(1)
\end{pmatrix}.
\end{gather*}
Note that for $u\in\ell^2(\GZ)$ and $n\in\GZ$ the value $(\Hw u)(n)$ depends only on $u(n-1), u(n)$ and $u(n+1)$. Further, $(\Hw-E)\; u_0(0)=1$ and $|u_0(-1)|=|u_{-1}(0)|$. Consequently, at least one of the numbers $u_0(0),\; u_0(1)$ and $u_{-1}(0)$ is not zero. Thus, we can normalize one of the vectors $\underline{u_0}$ and $\underline{u_{-1}}$. Without loss of generality, let $u_0$ be the vector which can be normalized. Denote its normalized vector by $u(\omega):=\frac{1}{\|\underline{u_0}\|}\cdot \underline{u_0}$. By definition we have $(\Hw-E)\; u_0(n)=0$ for $n\geq 1$ implying that $u_0$ is a solution to the right. Hence,
\begin{gather*}
\wME(n,\omega)\; u(\omega)=\frac{1}{\|\underline{u_0}\|}\cdot\begin{pmatrix}
u_0(n)\\
\aw(n+1)\cdot u_0(n+1)
\end{pmatrix}
\end{gather*}
and so,
\begin{gather*}
\|\wME(n,\omega)\; u(\omega)\| \leq \widetilde{K}\cdot \left( |u_0(n)| + |(u_0(n+1)| \right), \quad\quad n\geq 1
\end{gather*}
where $\widetilde{K}>0$ is some constant independent of $n\geq 1$. In accordance with Proposition \ref{Proposition - Combes Thomas}, for $i\in\GZ$ there exist a constant $D(i)$ and $\kappa(i)$ such that $|u_i(n)|\leq D(i)\cdot\exp(-\kappa(i)\cdot|n|)$ for $n\in\GZ$. Thus, by Proposition \ref{Proposition - Combes Thomas} we get that $\|\wME(n,\omega)\; u(\omega)\|$ decays exponentially, if $n$ goes to $\infty$. Analogously, we construct $v(\omega):=\frac{1}{\|\underline{u_i}\|}\cdot \underline{u_i}$ for $i$ equal to $1$ or $2$. As above, we get that $\|\wME(-n,\omega)\; v(\omega)\|$ decays exponentially, if $n$ tends to $\infty$.

\medskip

Since, for all $i\in\GZ$, the map $\omega\mapsto(\Hw-E)^{-1}\; \delta_i$ is continuous with respect to the topology on $\Omega$ it follows that $u(\omega)$ and $v(\omega)$ can be chosen in a continuous dependency on $\Omega$. Now we will verify that these vectors are unique up to a sign and linearily independent.

\medskip

Assume that $u(\omega)$ and $v(\omega)$ are not linearily independent meaning that there is some $0\neq c\in\RZ$ such that $u(\omega)=c\cdot v(\omega)$. Let $\left(\alpha(n)\right)_{n\in\GZ}$ be the sequence satisfying $\begin{pmatrix}
\alpha(n+1)\\
\alpha(n)
\end{pmatrix}=\wME(n,\omega) \cdot u(\omega)$ for $n\in\GZ$, which is a solution of the difference equation $(\spadesuit)$. Moreover, by the above consideration
\begin{gather*}
\|\wME(n,\omega)u(\omega)\|\leq D e^{-\kappa\cdot n}, \quad\quad n\in\NZ
\end{gather*}
and
\begin{gather*}
\|\wME(-n,\omega)u(\omega)\| = |c| \cdot \|\wME(-n,\omega)v(\omega)\| \leq |c|\cdot D e^{-\kappa\cdot n}, \quad\quad n\in\NZ
\end{gather*}
for some constants $D,\kappa>0$. Consequently, the sequence $\left(\alpha(n)\right)_{n\in\GZ}$ is an element of $\ell^2(\GZ)$. This implies that $E\in\Sigma$ contradicting the fact that $E$ was an element in the resolvent. Hence, $u(\omega)$ and $v(\omega)$ are linearily independent.

\medskip

Assume that $u(\omega)$ is not unique up to a sign. Then there are two linearily independent $u^{(1)}(\omega),u^{(2)}(\omega)\in\RZ^2$ such that $\|\wME(n,\omega)\; u^{(1)}(\omega)\|$ and $\|\wME(n,\omega)\; u^{(2)}(\omega)\|$ tend to zero. For all $x\in\RZ^2$ there are $\lambda_1,\lambda_2\in\RZ$ such that $x=\lambda_1\cdot u_1(\omega) + \lambda_2\cdot u_2(\omega)$. Thus, 
\begin{gather*}
\|\wME(n,\omega)x\|\leq |\lambda_1|\cdot\underbrace{\|\wME(n,\omega)u^{(1)}(\omega)\|}_{\rightarrow 0} + |\lambda_2|\cdot\underbrace{\|\wME(n,\omega)u^{(2)}(\omega)\|}_{\rightarrow 0} \rightarrow 0,\; n\rightarrow\infty
\end{gather*}
which contradicts the fact that $\|\wME(n,\omega)\|\geq 1$ for all $n\in\NZ$. This implies that $u(\omega)$ is unique up to a sign and similarly $v(\omega)$ is unique up to a sign. Denote by $U(\omega)$ the corresponding unique one-dimensional subspace of $\RZ^2$ generated by $u(\omega)$ and analogously $V(\omega)$ for $v(\omega)$. 

\medskip

Next, define a matrix $C(\omega):=\left( u(\omega),v(\omega)\right)$. According to the previous considerations, the matrix $C(\omega)$ is invertible. As mentioned in the beginning of Section \ref{section - Generalities}, we know that $\wME(n,T\omega)\wME(\omega)=\wME(n+1,\omega)$. Thus, $\|\wME(n,T\omega)x(T\omega)\|$ is exponentially decaying for the vector $x(T\omega):=\wME(\omega)u(\omega)$. As we have seen above, there can be at most one one-dimensional subspace $U(T\omega) \subsetneq \RZ^2$ such that the solutions decay exponentially for $T\omega\in\Omega$. Consequently, $x(T\omega)$ is an element of $U(T\omega)$ and so, there exists a $d(\omega)\in\RZ$ such that $\wME(\omega)u(\omega)= d(\omega)\cdot u(T\omega)$, where $u(T\omega)$ is the unique vector (up to a sign) with norm one for $T\omega\in\Omega$. Analogously, there exists an $e(\omega)\in\RZ$ such that $\wME(\omega)v(\omega) = e(\omega)\cdot v(T\omega)$. Hence,
\begin{align*}
C^{-1}(T\omega)\wME(\omega)C(\omega) = &\left(C^{-1}(T\omega)\wME(\omega)u(\omega)\; , \; C^{-1}(T\omega)\wME(\omega)v(\omega)\right)\\
= &\left(d(\omega)\cdot C^{-1}(T\omega)u(T\omega))\; , \; e(\omega)\cdot C^{-1}(T\omega) v(T\omega)\right)\\
= &\left( \begin{matrix}
d(\omega) & 0\\
0 & e(\omega)
\end{matrix}\right).
\end{align*}

Multiplying $u(\omega)$, $v(\omega)$ or both with minus one will add a minus sign to $d(\omega)$ respectively $e(\omega)$. Since, further, $u(\omega)$ and $v(\omega)$ can be chosen continuously in a neighborhood of $\omega\in\Omega$, by changing the sign, it follows that the maps $\|C\|, \|C^{-1}\|, |d|, |e|:\Omega\to\RZ$ are continuous. The map $\wME(\omega)$ is invertible impying that $d(\omega)$ and $e(\omega)$ never vanish. Thus, the function $\wME$ is uniform by Lemma \ref{Lemma - stetig diagonalisierbar}. According to Lemma \ref{Lemma - equivalence vanishing of Lambda and uniformity}, the continuous function $M^E$ is uniform as well.
\end{proof}

\section{The main results}\label{section - The main results}

In our main Theorem \ref{Theorem - the spectrum is equal to the union}, we give a complete description of the spectrum of Jacobi operators in terms of the Lyapunov exponent $\gamma(E)$ and the uniformity properties of the transfer matrices $M^E$, $(E \in \mathbb{R})$. As a special case, we show in Corollary \ref{Corollary - equivalence uniformity} that uniformity of all the $M^E$ is equivalent to the fact that the spectrum is exactly the set of zeros of the Lyapunov exponent $\gamma(\cdot)$ as a function of $E \in \mathbb{R}$. Furthermore, the uniformity condition guarantees the continuity of the function $\gamma(\cdot)$.

\begin{Theorem}\label{Theorem - the spectrum is equal to the union}
Let $(\Omega,T)$ be strictly ergodic and consider a family of Jacobi operators $(\Hw)_{\omega\in\Omega}$. Then the spectrum $\Sigma$ is equal to the disjoint union
\begin{gather*}
\{E\in\RZ\;|\; \gamma(E)=0\}\bigsqcup\{E\in\RZ\;|\; M^E \text{ is not uniform}\}.
\end{gather*}
\end{Theorem}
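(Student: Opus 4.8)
The plan is to read the asserted identity $\Sigma = \Gamma \sqcup \mathcal{N}$, where I abbreviate $\Gamma := \{E\in\RZ \mid \gamma(E)=0\}$ and $\mathcal{N} := \{E\in\RZ \mid M^E \text{ is not uniform}\}$, as the conjunction of three separate facts: disjointness $\Gamma\cap\mathcal{N}=\emptyset$, the inclusion $\Gamma\cup\mathcal{N}\subseteq\Sigma$, and the reverse inclusion $\Sigma\subseteq\Gamma\cup\mathcal{N}$. Since essentially all analytic content has already been isolated in the preceding lemmas of Section~\ref{section - Jacobi matrices}, the theorem itself is a synthesis, and the task is to route each energy $E$ through the correct lemma according to the two binary alternatives ``$\gamma(E)=0$ or $\gamma(E)>0$'' and ``$M^E$ uniform or not''.

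First I would settle disjointness. If $E\in\Gamma$, i.e.\@ $\gamma(E)=0$, then Lemma~\ref{Lemma - E with gamma(E)=0 implies uniformity} gives that $M^E$ is uniform, so $E\notin\mathcal{N}$; hence $\Gamma\cap\mathcal{N}=\emptyset$ and the union is genuinely disjoint. For the inclusion $\Gamma\cup\mathcal{N}\subseteq\Sigma$ I would treat the two pieces separately: $\Gamma\subseteq\Sigma$ is exactly Lemma~\ref{Lemma - Gamma is a subset of the spectrum}, while $\mathcal{N}\subseteq\Sigma$ follows by contraposition from Lemma~\ref{Lemma - E with gamma(E)neq 0 implies uniformity}, which asserts that every $E\in\RZ\setminus\Sigma$ has $M^E$ uniform; thus an energy with non-uniform $M^E$ cannot lie in the resolvent set, i.e.\@ $E\in\Sigma$.

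For the reverse inclusion $\Sigma\subseteq\Gamma\cup\mathcal{N}$, I would argue again by contraposition: suppose $E\notin\Gamma\cup\mathcal{N}$. Then $\gamma(E)\neq 0$, and since $\gamma(E)\geq 0$ by Lemma~\ref{Lemma - positivity of the lyapunov exponents}, in fact $\Lambda(M^E)=\gamma(E)>0$; moreover $M^E$ is uniform because $E\notin\mathcal{N}$. These are precisely the hypotheses of Proposition~\ref{Proposition - E nicht im Spektrum}, which then yields $E\notin\Sigma$. Combining the three steps gives $\Sigma=\Gamma\sqcup\mathcal{N}$.

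I do not expect a genuine obstacle at this stage: the hard work, namely the Combes/Thomas estimate feeding into Lemma~\ref{Lemma - E with gamma(E)neq 0 implies uniformity} (uniformity off the spectrum) and the Weyl-sequence construction behind Lemma~\ref{Lemma - Gamma is a subset of the spectrum} (the inclusion $\Gamma\subseteq\Sigma$), is already done. The only points requiring care are the bookkeeping details that make the union disjoint rather than a mere covering: one must invoke Lemma~\ref{Lemma - E with gamma(E)=0 implies uniformity} to exclude the overlap where $\gamma(E)=0$ and $M^E$ is non-uniform, and one must use positivity (Lemma~\ref{Lemma - positivity of the lyapunov exponents}) to upgrade $\gamma(E)\neq 0$ to $\gamma(E)>0$ before Proposition~\ref{Proposition - E nicht im Spektrum} becomes applicable.
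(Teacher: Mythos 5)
Your proposal is correct and takes essentially the same approach as the paper: the paper likewise settles disjointness via Lemma \ref{Lemma - E with gamma(E)=0 implies uniformity} and then proves the complement identity $\Sigma^{C}=\{E\in\RZ\;|\;M^{E}\text{ is uniform and }\gamma(E)>0\}$ from Proposition \ref{Proposition - E nicht im Spektrum} and Lemma \ref{Lemma - E with gamma(E)neq 0 implies uniformity}, of which your two inclusions are just the contrapositive reformulation. Your direct citation of Lemma \ref{Lemma - Gamma is a subset of the spectrum} for $\Gamma\subseteq\Sigma$ is a purely cosmetic difference, since that lemma is already the key ingredient inside Lemma \ref{Lemma - E with gamma(E)neq 0 implies uniformity}.
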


\begin{proof}
It follows from Lemma \ref{Lemma - E with gamma(E)=0 implies uniformity} that these sets are disjoint. It suffices to show that the equation
\begin{gather*}
\Sigma^C = \{E\in\RZ\; | \; M^E \text{ is uniform and } \gamma(E)>0\}
\end{gather*}
holds. This follows immediately by Proposition \ref{Proposition - E nicht im Spektrum} and Lemma \ref{Lemma - E with gamma(E)neq 0 implies uniformity}. 
\end{proof}

\begin{Corollary}\label{Corollary - equivalence uniformity}
Let $(\Omega,T)$ be strictly ergodic. Then the following assertions are equivalent.
\begin{itemize}
\item[(i)] The matrix $M^E$ is uniform for each $E\in\RZ$.
\item[(ii)] $\Sigma = \{E\in\RZ \; |\; \gamma(E)=0\}$
\end{itemize}
In this case, $\gamma:\RZ\rightarrow [0,\infty)$ is continuous.
\end{Corollary}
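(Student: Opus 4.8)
The plan is to deduce the equivalence directly from the disjoint decomposition of $\Sigma$ already established in Theorem \ref{Theorem - the spectrum is equal to the union}, so that essentially no new analysis is required. Write $\Gamma := \{E\in\RZ : \gamma(E)=0\}$ and $N := \{E\in\RZ : M^E \text{ is not uniform}\}$. Theorem \ref{Theorem - the spectrum is equal to the union} asserts that $\Sigma = \Gamma\sqcup N$ as a disjoint union, so in particular $\Gamma\subseteq\Sigma$, $N\subseteq\Sigma$, and $\Gamma\cap N=\emptyset$. The whole corollary is then a formal rearrangement of this one identity together with the continuity clause of Lemma \ref{Lemma - M uniform impliziert}.

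For the equivalence, I would argue as follows. Assertion (i) says precisely that $N=\emptyset$, and in that case $\Sigma=\Gamma\sqcup\emptyset=\Gamma$, which is (ii). Conversely, if (ii) holds, i.e. $\Sigma=\Gamma$, then $N\subseteq\Sigma=\Gamma$; combining this with the disjointness $\Gamma\cap N=\emptyset$ forces $N=\emptyset$, which is (i). Thus $\text{(i)}\Leftrightarrow\text{(ii)}$. (As a cross-check, the same conclusion can be reached by the case analysis $E\in\Sigma$ versus $E\notin\Sigma$: on $\RZ\setminus\Sigma$ uniformity of $M^E$ is automatic by Lemma \ref{Lemma - E with gamma(E)neq 0 implies uniformity}, while under (ii) every $E\in\Sigma$ satisfies $\gamma(E)=0$, so $M^E$ is uniform by Lemma \ref{Lemma - E with gamma(E)=0 implies uniformity}; this recovers (i).) Finally, once (i) is assumed, the continuity of $\gamma:\RZ\to[0,\infty)$ is exactly the second conclusion of Lemma \ref{Lemma - M uniform impliziert}, which I may invoke verbatim.

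The only point that demands a moment of care is the passage from $\Sigma=\Gamma$ to $N=\emptyset$: one must genuinely use disjointness, since $\Sigma=\Gamma$ by itself only yields the inclusion $N\subseteq\Gamma$, not emptiness. Apart from this small logical step I expect no real obstacle — all of the analytic substance (the Combes/Thomas estimate of Proposition \ref{Proposition - Combes Thomas} and the continuous diagonalization carried out in Lemma \ref{Lemma - E with gamma(E)neq 0 implies uniformity}) has already been absorbed into Theorem \ref{Theorem - the spectrum is equal to the union} and Lemma \ref{Lemma - M uniform impliziert}, so the corollary itself is purely combinatorial.
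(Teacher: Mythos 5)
Your proof is correct and takes essentially the same route as the paper: the paper likewise deduces the equivalence ``immediately'' from Theorem \ref{Theorem - the spectrum is equal to the union} and obtains the continuity of $\gamma$ from Lemma \ref{Lemma - M uniform impliziert}. Your explicit use of the disjointness of $\Gamma$ and $N$ to pass from $\Sigma=\Gamma$ to $N=\emptyset$ is precisely the small logical step the paper leaves implicit.
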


\begin{proof}
This equivalence follows immediately by Theorem \ref{Theorem - the spectrum is equal to the union}. The continuity of $\gamma$ is a consequence of Lemma \ref{Lemma - M uniform impliziert}.
\end{proof}

\section{Subshifts}\label{section - Subshifts}

We will now apply our previous results to the special case of a dynamical system induced by a subshift. To do so, we first recapitulate some well-known results about the spectrum of the family of Jacobi operators corresponding to a subshift. Then our main statement will be that the spectrum of a Jacobi operator, generated by an aperiodic subshift, with some reasonable requirements is a Cantor set of Lebesgue measure zero, if the transfer matrix $M^E$ is uniform for each $E\in\RZ$ (Theorem \ref{Theorem - aperiodic subshifts has spectrum of Lebesgue measure zero}). The main idea of the proof is to apply an assertion of \cite{Remling}. Further, we recall the notion of the Boshernitzan condition for subshifts. Indeed, a large class of subshifts satisfies this condition. It turns out that in this case the transfer matrices are uniform for all energies.

\medskip

Consider a finite alphabet $\A\subsetneq\RZ$ and $\A^{\GZ}:=\{\varphi:\GZ\rightarrow\A\}$. Denote by $d_{\A}:\A\times\A\rightarrow\{0,1\}$ the discrete metric on $\A$. We define a metric $d:\A^{\GZ}\times\A^{\GZ}\rightarrow[0,\infty)$ on $\A^{\GZ}$ by
\begin{gather*}
d(\varphi,\psi):=\sum\limits_{k=-\infty}^{\infty}\frac{d_{\A}(\varphi(k),\psi(k))}{2^{|k|}}.
\end{gather*}
Then a well-known result is that $(\A^{\GZ},d)$ is compact, see \cite{Walthers}.

\medskip

Let $(\Omega,T)$ be a subshift over $\A$, where $\Omega$ is a closed subset of $\A^{\GZ}$ and invariant under the homeomorphism $T:\A^{\GZ}\rightarrow\A^{\GZ}$ defined by
\begin{gather*}
(T\omega)(n):= \omega(n+1), \quad\omega\in\Omega.
\end{gather*}
This map is also called the \textit{shift operator}. For each $\omega\in\Omega$ we have the set of words associated to $\omega$ given by 
\begin{gather*}
\W_{\omega}:=\{ \omega(l)\ldots\omega(l+n-1) \;|\; l\in\GZ,n\in\NZ \}.
\end{gather*}
Further, $\W(\Omega):=\underset{\omega\in\Omega}{\bigcup}\W_{\omega}$ is the set of words associated to $\Omega$. We say a subshift $(\Omega,T)$ is \textit{aperiodic}, if for all $\omega\in\Omega$ there is no $0\neq m\in\NZ$ such that $T^n \omega=\omega$. If for $\omega\in\Omega$ there exists a $0\neq m\in\NZ$ such that $T^m\omega=\omega$ this element is called \textit{periodic} and $m$ is the \textit{period} of $\omega$.

\medskip

First of all we recapitulate some well-known results, see e.g. the textbooks \cite{Teschl,CarmonaLacroix}.

\begin{Proposition}\label{Proposition - Constancy of the spectrum a.s.}
Let $(\Omega,T)$ be a uniquely ergodic dynamical system induced by a subshift. Then there are $\Sigma_{ac},\Sigma_{sc},\Sigma_{pp}\subset\RZ$ such that
\begin{align*}
\Sigma_{ac} = &\sigma_{ac}(\Hw)\; \text{ a.s.},\\
\Sigma_{sc} = &\sigma_{sc}(\Hw)\; \text{ a.s.},\\
\Sigma_{pp} = &\sigma_{pp}(\Hw)\; \text{ a.s.}.
\end{align*}
Further, for $\mu$-almost every $\omega\in\Omega$ the set $\sigma(\Hw)$ has no discrete points.
\end{Proposition}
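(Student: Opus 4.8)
The whole assertion is an instance of the general principle that an ergodic, covariant family of self-adjoint operators has almost surely constant spectral components, so the plan is to first isolate the structural fact that drives everything. Writing $U$ for the shift unitary on $\ell^2(\GZ)$ given by $(Uu)(n):=u(n+1)$, a direct computation from $\aw(n)=p(T^n\omega)$ and $\bw(n)=q(T^n\omega)$ yields $a_{T\omega}(n)=\aw(n+1)$ and $b_{T\omega}(n)=\bw(n+1)$, and hence the \emph{covariance relation} $H_{T\omega}=U\,\Hw\,U^{\ast}$. Since unitary conjugation preserves the absolutely continuous, singular continuous and pure point subspaces, this single relation is the engine of both halves of the proof; unique ergodicity enters only through plain ergodicity of $(\Omega,T)$.

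For the first three identities I would work with the projection-valued spectral measure $P_\omega(\cdot)$ of $\Hw$ and its restrictions $P^{\bullet}_\omega(\cdot)$ to the spectral types $\bullet\in\{ac,sc,pp\}$. For a fixed open interval $B$ with rational endpoints, set $g^{\bullet}_B(\omega):=1$ if $P^{\bullet}_\omega(B)\neq 0$ and $g^{\bullet}_B(\omega):=0$ otherwise. Covariance gives $P^{\bullet}_{T\omega}(B)=U\,P^{\bullet}_\omega(B)\,U^{\ast}$, so each $g^{\bullet}_B$ is $T$-invariant, and ergodicity forces it to be $\mu$-almost surely constant. Intersecting over the countably many such $B$ produces a set of full measure on which the supports $\sigma_{\bullet}(\Hw)=\supp P^{\bullet}_\omega$ coincide with the fixed sets
\[
\Sigma_{\bullet}:=\RZ\setminus\bigcup\{\,B : B\text{ rational open interval with } g^{\bullet}_B\equiv 0 \text{ a.s.}\,\},
\]
which is precisely the claim.

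For the final assertion I would first note that, since each $\sigma_\bullet(\Hw)$ is a.s.\ constant, so is $\sigma(\Hw)=\overline{\sigma_{ac}(\Hw)\cup\sigma_{sc}(\Hw)\cup\sigma_{pp}(\Hw)}$; call this common value $\Sigma$. Suppose $\Sigma$ had an isolated point $E_0$. Then for a.e.\ $\omega$ the spectral projection $P_\omega(\{E_0\})$ is nonzero, so $E_0$ is an eigenvalue; as the difference equation $(\spadesuit)$ has a two-dimensional solution space, its multiplicity is at most two, i.e.\ $\Tr P_\omega(\{E_0\})\le 2<\infty$. Setting $f(\omega):=\langle\delta_0\mid P_\omega(\{E_0\})\,\delta_0\rangle\ge 0$, covariance gives $f(T^n\omega)=\langle\delta_{-n}\mid P_\omega(\{E_0\})\,\delta_{-n}\rangle$, whence $\sum_{n\in\GZ}f(T^n\omega)=\Tr P_\omega(\{E_0\})<\infty$. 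Finiteness of this sum of nonnegative terms forces $\frac1N\sum_{n=0}^{N-1}f(T^n\omega)\to 0$, which by Birkhoff equals $\int_\Omega f\,d\mu$ a.s.; hence $\int_\Omega f\,d\mu=0$ and $f\equiv 0$ a.s. By covariance $\langle\delta_n\mid P_\omega(\{E_0\})\,\delta_n\rangle=0$ for every $n$ on a full-measure set, so $P_\omega(\{E_0\})=0$ there, contradicting that $E_0$ is an eigenvalue. Thus $\Sigma$ has no isolated points.

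The routine computations — the covariance relation, the bound $\Tr P_\omega(\{E_0\})\le 2$, and the elementary fact that a summable sequence has vanishing Ces\`aro means — are short. The genuine technical point, and the step I expect to be the main obstacle, is the measurability of the type-restricted spectral projections $\omega\mapsto P^{\bullet}_\omega(B)$: while $\omega\mapsto\one_B(\Hw)$ is weakly measurable because $\omega\mapsto\Hw$ is strongly continuous and the Borel functional calculus respects bounded pointwise limits, splitting off the $ac$, $sc$ and $pp$ parts measurably requires the standard but slightly delicate decomposition of the spectral measure. I would either carry this out explicitly or, as is customary for such background facts, cite \cite{CarmonaLacroix,Teschl}.
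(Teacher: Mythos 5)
Your proposal is correct, and it is essentially the argument the paper implicitly relies on: the paper gives no proof of this proposition at all, recapitulating it as well known with a pointer to the textbooks \cite{Teschl,CarmonaLacroix}, and your proof is the standard one found there --- the covariance relation $H_{T\omega}=U\,H_{\omega}\,U^{\ast}$ plus ergodicity (which unique ergodicity supplies, since the unique invariant measure is automatically ergodic) gives a.s.\ constancy of each spectral type via invariance of the countably many functions $g^{\bullet}_B$, while the finite-trace Birkhoff argument (Pastur's non-randomness of eigenvalues, using that eigenspaces of a Jacobi matrix with nonvanishing off-diagonal entries are at most two-dimensional) rules out isolated points. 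The only blemish is immaterial: with your convention $(Uu)(n)=u(n+1)$ one gets $f(T^{n}\omega)=\langle\delta_{n}\mid P_{\omega}(\{E_0\})\,\delta_{n}\rangle$ rather than $\delta_{-n}$, which changes nothing since you sum over all $n\in\GZ$.
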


Now the following assertion immediately follows by Proposition \ref{Proposition - Constancy of the spectrum}.

\begin{Corollary}\label{Corollary - Absence of discrete points in the spectrum}
Let $(\Omega,T)$ be a strictly ergodic dynamical system induced by a subshift. Then for every $\omega\in\Omega$ the set $\sigma(\Hw)$ does not contain a discrete point.
\end{Corollary}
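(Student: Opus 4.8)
The plan is to combine the two constancy statements preceding this corollary, exploiting that a strictly ergodic system is by definition both minimal and uniquely ergodic. Thus both Proposition \ref{Proposition - Constancy of the spectrum} and Proposition \ref{Proposition - Constancy of the spectrum a.s.} are at my disposal.

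First I would use minimality: Proposition \ref{Proposition - Constancy of the spectrum} produces a single set $\Sigma\subseteq\RZ$ with $\sigma(\Hw)=\Sigma$ for \emph{every} $\omega\in\Omega$. Second, since the underlying subshift is uniquely ergodic with invariant probability measure $\mu$, Proposition \ref{Proposition - Constancy of the spectrum a.s.} tells me that for $\mu$-almost every $\omega\in\Omega$ the set $\sigma(\Hw)$ has no discrete (i.e.\@ isolated) points.

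The crux is to upgrade this almost-sure property to the fixed set $\Sigma$. As $\mu$ is a probability measure, the full-measure set on which $\sigma(\Hw)$ has no isolated points is non-empty, so I may fix one $\omega_0$ inside it. For this $\omega_0$ the first step gives $\sigma(H_{\omega_0})=\Sigma$, while the second step gives that $\sigma(H_{\omega_0})$ has no isolated points; hence $\Sigma$ has none either. Since ``having no isolated points'' is a property of the subset $\Sigma\subseteq\RZ$, and $\sigma(\Hw)=\Sigma$ holds for \emph{all} $\omega$, the conclusion follows for every $\omega\in\Omega$. There is no serious obstacle here: the whole point is that minimality converts the merely almost-sure-constant spectrum of Proposition \ref{Proposition - Constancy of the spectrum a.s.} into a genuinely $\omega$-independent set, which lets the a.e.\ statement propagate to an everywhere statement.
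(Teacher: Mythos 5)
Your proposal is correct and is exactly the argument the paper intends: the corollary is stated to follow immediately from Proposition \ref{Proposition - Constancy of the spectrum} combined with Proposition \ref{Proposition - Constancy of the spectrum a.s.}, i.e.\@ pick one $\omega_0$ in the (non-empty) full-measure set where $\sigma(H_{\omega_0})$ has no discrete points and use $\sigma(\Hw)=\Sigma=\sigma(H_{\omega_0})$ for all $\omega$. Your write-up just makes the implicit ``almost-sure to everywhere'' upgrade explicit.
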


\begin{Remark} If $(\Omega,T)$ is minimal, a stronger result can be shown, namely that $\Sigma_{ac}=\sigma_{ac}(\Hw)$ for all $\omega\in\Omega$, see \cite{LastSimon}, Theorem 6.1.
\end{Remark}

Consider for a subshift $(\Omega,T)$ the restriction $(\Omega^+,T^+)$ respectively $(\Omega^-,T^-)$ defined as follows:
\begin{align*}
\Omega^+:= &\{\omega\mid_{\NZ_0} \;|\; \omega\in\Omega\} \text{ with } T^+:=  T,\\
\Omega^-:= &\{\omega\mid_{\GZ\setminus\NZ_0} \;|\; \omega\in\Omega\} \text{ with } T^-:=  T^{-1}.
\end{align*}

Recall for $\omega\in\Omega$ the definition of the Jacobi operator $\Hw$ with the corresponding continuous maps $p:\Omega\to\RZ\setminus\{0\}$ and $q:\Omega\to\RZ$. For $\omega^+\in\Omega^+$ denote by $\Hw^+$ the restriction of the Jacobi operator $\Hw$ where $\omega^+=\omega\mid_{\NZ_0}$. Similarly, denote the corresponding restrictions of $p$ and $q$ by $p^+$ and $q^+$, which are still continuous.

\medskip

A sequence $u:=(u(n))_{n\in\NZ_0}$ is called \textit{eventually periodic} if there exist $K,m\in\NZ$ such that $u$ is periodic outside of $\{0,1,\ldots,K\}$ with period $m$. For an $\omega^+\in\Omega^+$ eventual periodicity is defined accordingly by considering the sequence $((T^+)^n\omega^+)_{n\in\NZ_0}$. A closed set is called a \textit{Cantor set}, if it does not contain a non-trivial interval and no discrete points. Note that a set of Lebesgue measure zero cannot have a non-trivial interval at all.

\begin{Lemma}\label{Lemma - aperiodic transfers to restriction of TDS}
Let $(\Omega,T)$ be an aperiodic subshift. Then $(\Omega^+,T^+)$ and $(\Omega^-,T^-)$ do not contain an eventually periodic element.
\end{Lemma}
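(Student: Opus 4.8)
The plan is to argue by contradiction: I would assume that one of the restrictions contains an eventually periodic element and use it to manufacture a genuinely $T$-periodic point in $\Omega$, which is forbidden by aperiodicity. I treat $(\Omega^+,T^+)$ in detail, the case of $(\Omega^-,T^-)$ being its mirror image. So suppose $\omega^+\in\Omega^+$ is eventually periodic and pick $\omega\in\Omega$ with $\omega\mid_{\NZ_0}=\omega^+$, which is possible by the very definition of $\Omega^+$. Spelling out eventual periodicity of the orbit $((T^+)^n\omega^+)_{n\in\NZ_0}$ produces a $K\in\NZ_0$ and an $m\in\NZ$ with $m\geq 1$ such that
\begin{gather*}
\omega(i+m)=\omega(i)\qquad\text{for all } i\geq K,
\end{gather*}
that is, the right tail of $\omega$ repeats the block $w:=\omega(K)\,\omega(K+1)\cdots\omega(K+m-1)$.

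Next I would build the candidate periodic point. Let $\overline{w}\in\A^{\GZ}$ be the $m$-periodic extension of $w$ determined by $\overline{w}(jm+l):=\omega(K+l)$ for $j\in\GZ$ and $0\leq l<m$; by construction $T^m\overline{w}=\overline{w}$ with $m\neq 0$, so the whole matter reduces to showing $\overline{w}\in\Omega$. Here I would exploit that $\Omega$ is closed and $T$-invariant. For each $N\in\NZ$ the point $T^{K+Nm}\omega$ lies in $\Omega$ by shift invariance, and for every $j\geq -Nm$ the periodicity of the tail (note $K+Nm+j\geq K$ and that $Nm$ is a multiple of $m$) yields
\begin{gather*}
(T^{K+Nm}\omega)(j)=\omega(K+Nm+j)=\overline{w}(Nm+j)=\overline{w}(j).
\end{gather*}
Thus $T^{K+Nm}\omega$ agrees with $\overline{w}$ on $[-Nm,\infty)$, so with respect to the metric $d$ on $\A^{\GZ}$ one gets $d(T^{K+Nm}\omega,\overline{w})\leq 2^{-Nm}\to 0$. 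Closedness of $\Omega$ then forces $\overline{w}\in\Omega$, contradicting aperiodicity and settling the claim for $\Omega^+$.

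For $\Omega^-$ I would run the reflected argument: an eventually periodic $\omega^-\in\Omega^-$ lifts to some $\omega\in\Omega$ whose \emph{left} tail is $m$-periodic, and the corresponding $m$-periodic extension $\overline{w}$ then arises as the limit of the backward shifts $T^{-(K+Nm)}\omega\in\Omega$. The fact that the shift amounts $K+Nm$ are all congruent modulo $m$ is precisely what guarantees that these shifts converge to a single periodic sequence rather than to different phases. I expect the only genuinely delicate step to be the verification that $\overline{w}\in\Omega$; everything else is bookkeeping with indices and the unwinding of eventual periodicity. The conceptual content is that closedness together with $T$-invariance allows the periodic block occurring in the tail of $\omega$, which a suitable sequence of shifts reproduces on arbitrarily large windows, to be transported into $\Omega$ in the limit, and recognizing that limit as a nonzero-period fixed point of some $T^m$ is the crux.
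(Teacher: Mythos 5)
Your proof is correct, and it rests on the same underlying mechanism as the paper's own argument: lift the eventually periodic $\omega^+$ to some $\omega\in\Omega$, shift by amounts that are all congruent modulo the period $m$, and use the fact that $\Omega$ is closed and shift-invariant to produce a genuinely $m$-periodic point inside $\Omega$, contradicting aperiodicity. The execution, however, differs in two ways, both in your favor. First, the paper considers the sequence $\omega_n:=T^{-nm}\omega$, extracts a convergent subsequence $\omega_{n_j}\to z$ by compactness, and asserts that $z$ is periodic; but with the paper's convention $(T\omega)(n)=\omega(n+1)$, the map $T^{-nm}$ shifts the periodic tail of $\omega$ \emph{away} from the origin (the periodic window $[K,\infty)$ of $\omega$ sits at $[K+nm,\infty)$ in $T^{-nm}\omega$), so a subsequential limit of these backward shifts records values of $\omega$ at indices tending to $-\infty$, about which eventual periodicity of $\omega^+$ says nothing; the shifts must go forward, exactly as you take them with $T^{K+Nm}\omega$. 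Your version is thus the corrected form of what the paper sketches. Second, because you identify the limit explicitly as the $m$-periodic extension $\overline{w}$ and verify agreement on the growing windows $[-Nm,\infty)$, you obtain convergence of the whole sequence and need only closedness of $\Omega$ rather than compactness, and the step the paper compresses into ``it follows that $z$ is periodic'' is carried out honestly: keeping the shift amounts congruent modulo $m$ is precisely what fixes the phase and makes the limit a fixed point of $T^m$, as you observe. The mirrored treatment of $(\Omega^-,T^-)$ with backward shifts is likewise correct.
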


\begin{proof}
We show that $(\Omega^+,T^+)$ does not have an eventually periodic element. The proof for $(\Omega^-,T^-)$ works similarly. Assume that there exists an eventually periodic element $\omega^+$ with period $m\in\NZ$. Then there is an $\omega\in\Omega$ such that $\omega^+=\omega\mid_{\NZ_0}$. Consider the sequence $\omega_n:=T^{-n\cdot m}\omega$. Then by compactness of $\Omega$ there is a convergent subsequence $\omega_{n_j}$ converging to some $z\in\Omega$. Since $\omega^+$ is eventually peridic, it follows that $z$ is periodic, contradicting the fact that $(\Omega,T)$ is aperiodic.
\end{proof}

Consider the dynamical system $(\widetilde{\Omega},\widetilde{T})$ defined by
\begin{gather*}
\widetilde{\Omega}:=\left\{ \widetilde{\omega}:=\begin{pmatrix}
p(\omega)\\
q(\omega)
\end{pmatrix}\;|\; \omega\in\Omega \right\}
\end{gather*}
endowed with the product topology and 
\begin{gather*}
\widetilde{T}\widetilde{\omega} := \begin{pmatrix}
p(T\omega)\\
q(T\omega)
\end{pmatrix}, \quad\quad \widetilde{\omega}\in\widetilde{\Omega}.
\end{gather*}

Note that the notion of periodicity and eventual periodicity carries over for $\omega\in\Omega$ to $p(\omega)$ and $q(\omega)$ in the obvious way. Precisely, $p(\omega)$ is periodic, if there exists a $0\neq m\in\NZ$ such that $p(T^m\omega)=p(\omega)$.

\medskip

If $(\Omega,T)$ is minimal, the dynamical system $(\widetilde{\Omega},\widetilde{T})$ is minimal as well. 
Our aim is to show that the spectrum of a family of Jacobi operators is supported on a Cantor set of Lebesgue measure zero, if $p(\omega)$ or $q(\omega)$ is not periodic for each $\omega\in\Omega$. In this section, we consider Jacobi operators $\Hw$ associated with an aperiodic subshift such that the aperiodicity of the subshifts carries over to $(\widetilde{\Omega},\widetilde{T})$. Denote this condition by (A). This is for example the case, if $p$ or $q$ is injective.

\begin{Theorem}\label{Theorem - aperiodic subshifts has spectrum of Lebesgue measure zero}
Let $(\Omega,T)$ be a strictly ergodic and aperiodic subshift. Consider the continuous maps $p:\Omega\to\RZ\setminus\{0\}$ and $q:\Omega\to\RZ$ which take finitely many values with corresponding family of Jacobi operators $(\Hw)_{\omega\in\Omega}$. Suppose that condition (A) is satisfied and that the transfer matrix $M^E$ is uniform for every $E\in\RZ$. Then the spectrum $\Sigma$ is a Cantor set of Lebesgue measure zero.
\end{Theorem}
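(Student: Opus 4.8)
The plan is to verify separately the two features defining a Cantor set of Lebesgue measure zero: that $\Sigma$ is perfect (closed with no isolated points) and that $|\Sigma| = 0$, the latter automatically precluding any non-trivial interval. The perfectness is immediate from the material already assembled: $\Sigma$ is closed since it is the common spectrum of the family (Proposition~\ref{Proposition - Constancy of the spectrum}), while Corollary~\ref{Corollary - Absence of discrete points in the spectrum} guarantees that $\Sigma = \sigma(\Hw)$ contains no discrete points. Hence the entire content of the theorem reduces to the measure estimate $|\Sigma| = 0$.

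First I would exploit the uniformity hypothesis: since $M^E$ is uniform for every $E\in\RZ$, Corollary~\ref{Corollary - equivalence uniformity} identifies $\Sigma = \{E\in\RZ : \gamma(E) = 0\}$, so it suffices to show $\big|\{E : \gamma(E) = 0\}\big| = 0$. Here I would invoke Kotani theory for ergodic Jacobi matrices in its Ishii--Pastur--Kotani form, applied to the normalized $SL(2,\RZ)$ cocycle $\wME$, which shares the Lyapunov exponent $\gamma(E)$ with $M^E$ by Lemma~\ref{Lemma - equivalence of M und tilde(M)}. This shows that $\{E : \gamma(E) = 0\}$ is an essential support of the absolutely continuous spectrum; consequently, $\Sigma_{ac} = \emptyset$ forces $\big|\{E : \gamma(E)=0\}\big| = 0$, and the problem is reduced to proving the absence of absolutely continuous spectrum.

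The vanishing of $\Sigma_{ac}$ is where the result of \cite{Remling} enters. Passing to the half-line restrictions $(\Omega^+, T^+)$ and $(\Omega^-, T^-)$, the coefficient sequences $(\aw, \bw)$ take only finitely many values, because $p$ and $q$ do. Remling's characterization of the absolutely continuous spectrum forces every right limit of such a half-line operator to be reflectionless on $\Sigma_{ac}$, and for a finite-valued sequence this is possible only when the coefficients are eventually periodic. However, condition~(A) transports the aperiodicity of $(\Omega, T)$ to $(\widetilde{\Omega}, \widetilde{T})$, and Lemma~\ref{Lemma - aperiodic transfers to restriction of TDS} then excludes eventually periodic elements in the restricted systems. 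This contradiction yields $\Sigma_{ac} = \emptyset$, hence $|\Sigma| = \big|\{E : \gamma(E) = 0\}\big| = 0$.

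I expect the main obstacle to be the careful application of Remling's theorem: one must phrase the dichotomy between positive-measure absolutely continuous spectrum and eventual periodicity of the finite-valued coefficients in the language of the hull and its half-line restrictions, and confirm that condition~(A) together with Lemma~\ref{Lemma - aperiodic transfers to restriction of TDS} genuinely closes off the periodic alternative. A secondary point demanding care is that Kotani theory be set up for the $SL(2,\RZ)$ cocycle $\wME$ rather than for $M^E$ directly. Once $|\Sigma| = 0$ is established, the measure estimate rules out any non-trivial interval, and together with perfectness this identifies $\Sigma$ as a Cantor set of Lebesgue measure zero.
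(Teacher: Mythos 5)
Your proposal is correct and follows essentially the same route as the paper: perfectness from Proposition~\ref{Proposition - Constancy of the spectrum} and Corollary~\ref{Corollary - Absence of discrete points in the spectrum}, reduction to $|\Gamma|=0$ via Corollary~\ref{Corollary - equivalence uniformity}, Kotani theory (the paper cites \cite{Teschl}, Theorem 5.17) to reduce further to $\Sigma_{ac}=\emptyset$, and finally Remling's theorem together with condition~(A) and Lemma~\ref{Lemma - aperiodic transfers to restriction of TDS} to rule out the eventually periodic alternative. The only cosmetic difference is that you spell out Remling's right-limit/reflectionless mechanism, while the paper simply invokes \cite{Remling}, Theorem 1.1 after noting that finite-range perturbations preserve the absolutely continuous spectrum.
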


\begin{proof}
Since $\Sigma$ does not contain discrete points, we have to verify that $|\Sigma|=0$ where $|\cdot|$ denotes the Lebesgue measure, see Corollary \ref{Corollary - Absence of discrete points in the spectrum}. By Corollary \ref{Corollary - equivalence uniformity} it is enough to verify that $|\Gamma|=0$. According to general results, (see e.g. \cite{Teschl}, Theorem 5.17) it is sufficient to show that $\Sigma_{ac}$ is empty.

\medskip

Let $\omega\in\Omega$ be chosen such that $\sigma_{ac}(\Hw)=\Sigma_{ac}$. Assume that $\sigma_{ac}(\Hw)$ is non-empty. Since any perturbation of finite range does not change the absolutely continuous spectrum it follows that $\sigma_{ac}(\Hw^+)$ or $\sigma_{ac}(\Hw^-)$ is non-empty. Without loss of generality, let $\sigma_{ac}(\Hw^+)\neq\emptyset$ and so, $p^+(T^{(\cdot)}\omega)$ and $q^+(T^{(\cdot)}\omega)$ are eventually periodic, see \cite{Remling}, Theorem 1.1. Hence, there is an eventually periodic element of the dynamical system $(\widetilde{\Omega}^+,\widetilde{T}^+)$. However, this contradicts the assertion of Lemma \ref{Lemma - aperiodic transfers to restriction of TDS}.
\end{proof}

In the work \cite{DamanikLenz} it is shown that the so called Boshernitzan condition, first introduced in \cite{Boshernitzan1}, implies for a minimal subshift that a locally constant (definition see below), continuous function $A:\Omega\to SL(2,\RZ)$ is uniform. Indeed, a large class of subshifts satisfies this condition. For instance, in the work \cite{DamanikLenz} it is shown that subshifts obeying positive weights (PW) satisfies the Boshernitzan condition. 
The class of linear repetitive (or linear recurrent) subshifts is contained in the class of subshifts with (PW). Actually, it turns out that these two classes are equal by unpublished results of Boshernitzan, see \cite{BesbesBoshernitzanLenz} as well. Also all Sturmian subshifts fulfill the Boshernitzan condition. 
Further, almost all interval exchange transformations, 
almost all circle maps 
and almost all Arnoux-Rauzy subshifts 
satisfy the Boshernitzan condition, see \cite{DamanikLenz2}.

\medskip

A continuous function $M$ on $\Omega$ is called \textit{locally constant}, if there exists an $N\in\NZ$ such that for $\omega_1,\omega_2\in\Omega$ whenever $(\omega_1(-N),\ldots,\omega_1(N))= (\omega_2(-N),\ldots,\omega_2(N))$, then the equality $M(\omega_1)=M(\omega_2)$ holds.

\medskip

Let $(\Omega,T)$ be a subshift over the finite alphabet $\A\subsetneq \RZ$ and $\W(\Omega)$ be the set of the associated words of $\Omega$. For $v\in\Omega$ we define the subset of all elements of $\Omega$ which begin with the associated word $v$ by
\begin{gather*}
\mathcal{V}_{v}:=\{\omega\in\Omega\;|\; \omega(1)\cdots\omega(|v|)=v\}.
\end{gather*}
Note that $|v|$ denotes the length of the associated word $v\in\W(\Omega)$. Further, for a $T$-invariant probability measure $\mu$ on $\Omega$ and $n\in\NZ$ we define the number
\begin{gather*}
\eta_{\mu}(n):=\min\{ \mu(\mathcal{V}_{v})\;|\; v\in\W(\Omega), |v|=n \}.
\end{gather*}
The following definition was originally introduced by Boshernitzan in his work \cite{Boshernitzan1}. A subshift $(\Omega,T)$ over the finite alphabet $\A\subsetneq\RZ$ satisfies the \textit{Boshernitzan condition}, if there exists an ergodic probability measure $\mu$ on $\Omega$ such that
\begin{gather*}
\limsup_{n\to\infty} n\cdot \eta_{\mu}(n)>0.
\end{gather*}
As mentioned before, a large class of subshifts fulfills this condition and the following statement, proven in \cite{DamanikLenz}, gives us a useful tool.

\begin{Theorem}\label{Theorem - (B) implies uniformity}
Let $(\Omega,T)$ be a minimal subshift over the finite alphabet $\A\subsetneq\RZ$ satisfying the Boshernitzan condition. Then a locally constant map $M:\Omega\to SL(2,\RZ)$ is uniform.
\end{Theorem}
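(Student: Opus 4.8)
The plan is to transfer the problem from the cocycle $M$ to a subadditive function on the finite words $\W(\Omega)$, and then to extract uniform convergence from the frequency information encoded in the Boshernitzan condition. First I would exploit local constancy: if $N$ is the window of $M$, the product $M(n,\omega)$ depends only on the finite subword $\omega(-N)\cdots\omega(n+N-1)$, so to each admissible word $w\in\W(\Omega)$ one may associate a matrix $M_w\in SL(2,\RZ)$ and set $F(w):=\log\|M_w\|$. Since $\det M_w=1$ forces $\|M_w\|\ge 1$ we have $F\ge 0$, and continuity of $M$ on the compact space $\Omega$ gives $F(w)\le C|w|$. The cocycle identity $M(m+n,\omega)=M(n,T^m\omega)M(m,\omega)$ yields submultiplicativity of the norms and hence $F(uv)\le F(u)+F(v)$, up to a bounded boundary term arising from the overlap of the two windows. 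In this language, $M$ is uniform precisely when $F(w)/|w|$ converges uniformly over all words of length $n$ as $n\to\infty$.

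Second, I would establish the limit and the easy half of the uniformity. The condition (B) together with minimality yields unique ergodicity, so there is a single invariant measure $\mu$. Proposition~\ref{Proposition - Kingmans subadditive ergodic theorem} then produces a value $\bar F=\Lambda(M)$ with $\tfrac1n F(\omega(1)\cdots\omega(n))\to\bar F$ for $\mu$-a.e.\ $\omega$ and in $L^1(\mu)$. The uniform \emph{upper} estimate $\limsup_{n}\sup_{\omega}\tfrac1n F(\omega(1)\cdots\omega(n))\le\bar F$ is already supplied by Lemma~\ref{Lemma - Boundeness}. Thus everything reduces to the uniform \emph{lower} bound $\liminf_{n}\inf_{\omega}\tfrac1n F(\omega(1)\cdots\omega(n))\ge\bar F$.

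Third, and this is where (B) is indispensable, I would prove the lower bound. Fix $\varepsilon>0$. The hypothesis $\limsup_n n\,\eta_\mu(n)>0$ furnishes a sequence of \emph{good lengths} $m$ with $\mu(\mathcal{V}_v)\ge c/m$ for every word $v$ of length $m$; in particular the number of such words is at most $m/c$, and by unique ergodicity the Birkhoff averages of the clopen indicators $\one_{\mathcal{V}_v}$ converge uniformly, so each word of length $m$ recurs in every $\omega$ with frequency close to $\mu(\mathcal{V}_v)$ and hence with uniformly bounded gaps. Combining the $L^1$-convergence with the measure lower bound $c/m$, all but a collection of words of total frequency $O(\varepsilon)$ satisfy $\tfrac1m F(v)\ge\bar F-\varepsilon$; call these \emph{good}. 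Because good words recur with bounded gaps, any long word $u=\omega(1)\cdots\omega(n)$ sits inside a word $W=x\,u\,z$ occurring in the orbit for which $\tfrac1{|W|}F(W)\ge\bar F-\varepsilon$ while the padding lengths $|x|,|z|$ are $o(n)$. Using subadditivity in reverse, $F(u)\ge F(W)-F(x)-F(z)$, together with $F(x),F(z)\le C(|x|+|z|)$, then delivers $F(u)\ge(\bar F-2\varepsilon)n$ for all large $n$ uniformly in $\omega$, completing the argument.

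The main obstacle is exactly this last step: passing from the $\mu$-almost-sure and $L^1$ information produced by Kingman's theorem to a statement uniform over \emph{all} $\omega\in\Omega$. Unique ergodicity alone only guarantees uniform convergence of Birkhoff averages of \emph{continuous} functions, which does not suffice for a subadditive cocycle; it is precisely the quantitative frequency bound $\eta_\mu(m)\ge c/m$ of the Boshernitzan condition that converts ``good behaviour on a set of full measure'' into ``good behaviour with controlled, uniformly bounded gaps,'' and thereby allows the reverse use of subadditivity to reach every single word.
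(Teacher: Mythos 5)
First, a remark on the comparison itself: this paper does not prove Theorem~\ref{Theorem - (B) implies uniformity}; it quotes it as a known result whose proof is in \cite{DamanikLenz}. So your attempt can only be measured against that external proof. Much of your setup is correct as far as it goes: recoding the locally constant cocycle as a map $w\mapsto M_w$ on words (modulo window bookkeeping), reducing uniformity to the uniform lower bound $\liminf_{n}\inf_{\omega}\frac{1}{n}\log\|M(n,\omega)\|\geq\Lambda(M)$ (the uniform upper bound being Lemma~\ref{Lemma - Boundeness}), invoking Boshernitzan's theorem that (B) plus minimality yields unique ergodicity, and the combinatorial consequences of (B) at a good length $m$ --- at most $m/c$ words, each occurring in every $\omega\in\Omega$ with frequency at least $c/m$ and hence with uniformly bounded gaps, all but a measure-$O(\varepsilon)$ collection of them being ``good''.

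The proof collapses, however, at the one sentence that carries all the weight: ``Because good words recur with bounded gaps, any long word $u=\omega(1)\cdots\omega(n)$ sits inside a word $W=xuz$ occurring in the orbit for which $\frac{1}{|W|}F(W)\geq\bar F-\varepsilon$ while the padding lengths $|x|,|z|$ are $o(n)$.'' Nothing you have established produces such a $W$. Your good words have the fixed length $m\ll n$, and recurrence of short words with large norm gives no lower bound at all on the norm of a long product containing them: matrix norms are submultiplicative but not supermultiplicative --- for $A,B\in SL(2,\RZ)$ one only has $\|AB\|\geq\|B\|/\|A\|$, so cancellation can annihilate the growth carried by the factors. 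In fact, the existence, for every $u$, of such a $W$ with small padding is essentially a restatement of the uniform lower bound you are trying to prove, so the argument is circular at precisely the hard point. Worse, your argument uses nothing about $SL(2,\RZ)$ beyond $\|M_w\|\geq 1$ and submultiplicativity, so if it were valid it would establish the full uniform \emph{subadditive} ergodic theorem under (B); this is known to be false. By Lenz's characterization, uniform convergence of all subadditive processes on a minimal subshift is equivalent to the strictly stronger positive-weights condition (PW) (the $\liminf$ analogue of (B), equal by Boshernitzan's unpublished result to linear repetitivity, as recalled in Section~\ref{section - Subshifts}), and there are standard examples --- Sturmian subshifts with unbounded partial quotients --- satisfying (B) but not (PW). This is exactly why the proof in \cite{DamanikLenz} must, and does, exploit the $SL(2,\RZ)$ structure in an essential way: it tracks the expanding and contracting singular directions of the partial products and uses (B) to exclude persistent cancellation between consecutive blocks, which is the mechanism that converts lower bounds on pieces into a lower bound on the full product. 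That mechanism --- the actual content of the theorem --- is missing from your sketch.
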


{Theorem \ref{Theorem - (B) implies uniformity} provides a sufficient condition for the uniformity of all transfer matrices corresponding to a Jacobi operator and its subshift. Combining this with Theorem \ref{Theorem - aperiodic subshifts has spectrum of Lebesgue measure zero} we get the following assertion.

\begin{Corollary}
Let $(\Omega,T)$ be a minimal, aperiodic subshift such that the Boshernitzan condition holds. Consider the family of the corresponding Jacobi operators $\{\Hw\}_{\omega\in\Omega}$ where the continuous maps $p$ and $q$ take finitely many values and they obey condition (A). Then the transfer matrix $M^E:\Omega\to GL(2,\RZ)$ is uniform for each $E\in\RZ$. In particular, the spectrum $\Sigma$ is a Cantor set of Lebesgue measure zero. 
\end{Corollary}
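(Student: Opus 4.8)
The plan is to reduce the statement to the earlier results of the paper, since this Corollary is essentially a direct assembly. The only genuinely new content is the verification that \emph{every} transfer matrix $M^E$ is uniform; once that is established, the spectral conclusion follows by feeding the hypotheses into Theorem~\ref{Theorem - aperiodic subshifts has spectrum of Lebesgue measure zero}. Accordingly, I would split the argument into two parts: first prove uniformity of $M^E$ for all $E\in\RZ$ using the Boshernitzan condition, and then invoke that theorem.

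For the uniformity, the first step is to note that, since $p$ and $q$ are continuous and take only finitely many values on the subshift $\Omega\subseteq\A^{\GZ}$, they are \emph{locally constant} in the sense defined above. Indeed, each level set of $p$ (resp.\ $q$) is clopen, hence by compactness a finite union of cylinder sets, so that $p(\omega)$ (resp.\ $q(\omega)$) depends only on finitely many coordinates $(\omega(-N_0),\dots,\omega(N_0))$. Because the entries of $\wME(\omega)$ are built from $\aw(1)=p(T\omega)$, $\aw(2)=p(T^2\omega)$ and $\bw(1)=q(T\omega)$, enlarging the window by the bounded shift amount shows that $\wME$ is itself locally constant. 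Crucially, $\wME$ takes values in $SL(2,\RZ)$ (its determinant equals one), which is exactly the setting required by Theorem~\ref{Theorem - (B) implies uniformity}; this is why I would work with $\wME$ rather than with the $GL(2,\RZ)$-valued $M^E$ directly. Since $(\Omega,T)$ is minimal and satisfies the Boshernitzan condition, Theorem~\ref{Theorem - (B) implies uniformity} then gives that $\wME$ is uniform for each fixed $E$, and Lemma~\ref{Lemma - equivalence of M und tilde(M)} transfers this uniformity back to $M^E$.

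It remains to apply Theorem~\ref{Theorem - aperiodic subshifts has spectrum of Lebesgue measure zero}, whose hypotheses are strict ergodicity (unique ergodicity together with minimality), aperiodicity, finiteness of the value sets of $p,q$, condition (A), and uniformity of all $M^E$. All but the first are already assumed or just proved. For strict ergodicity I would use the fact --- due to Boshernitzan --- that a minimal subshift satisfying the stated condition $\limsup_{n}n\,\eta_{\mu}(n)>0$ is automatically uniquely ergodic, so that minimality upgrades to strict ergodicity. With every hypothesis in place, Theorem~\ref{Theorem - aperiodic subshifts has spectrum of Lebesgue measure zero} yields that $\Sigma$ is a Cantor set of Lebesgue measure zero, as claimed.

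The main obstacle here is organizational rather than conceptual: one must be careful to run the Boshernitzan machinery on the determinant-one cocycle $\wME$ (not on $M^E$, whose determinant need not be one) and then to transport uniformity across Lemma~\ref{Lemma - equivalence of M und tilde(M)}. The one point that truly deserves explicit mention is the promotion of ``minimal'' to ``strictly ergodic'': this is not automatic from minimality alone and relies on the Boshernitzan condition forcing unique ergodicity. Everything else is a direct combination of the results established earlier in the paper.
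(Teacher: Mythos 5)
Your proposal is correct and follows essentially the same route as the paper's own proof: Boshernitzan's result upgrading minimality to unique ergodicity, local constancy of $p$ and $q$ (hence of $\wME$), Theorem~\ref{Theorem - (B) implies uniformity} applied to the $SL(2,\RZ)$-valued cocycle $\wME$, transfer of uniformity to $M^E$ via Lemma~\ref{Lemma - equivalence of M und tilde(M)}, and finally Theorem~\ref{Theorem - aperiodic subshifts has spectrum of Lebesgue measure zero}. Your justification of local constancy (clopen level sets are finite unions of cylinders by compactness) is slightly more detailed than the paper's, but the argument is the same.
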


\begin{proof}
It is shown in \cite{Boshernitzan2} (Theorem 1.2) that the Boshernitzan condition for a minimal subshift implies unique ergodicity of the subshift. Since $p$ and $q$ are uniformly continuous and only take finitely many values, it follows that they are locally constant. Thus, for $E\in\RZ$ the continuous map $\wME:\Omega\to SL(2,\RZ)$ is locally constant as well. According to Theorem \ref{Theorem - (B) implies uniformity} it follows that $\wME$ is uniform for each $E\in\RZ$ and so is $M^E$ as well, see Lemma \ref{Lemma - equivalence of M und tilde(M)}. Consequently, we can apply Theorem \ref{Theorem - aperiodic subshifts has spectrum of Lebesgue measure zero} leading to our assertion.
\end{proof}

\section*{Acknowledgment}
We would like to express our thanks to our advisor {\sc Daniel Lenz} for his careful guidance and support. Especially, we thank him for coming up with the issue of this work and for fruitful suggestions concerning the preparation of the paper. It is our great pleasure to acknowledge {\sc Matthias Keller} for enlightening discussions about the connection of spectral aspects and subexponentially growing functions. FP would like to thank the German National Academic Foundation (Studienstiftung des deutschen Volkes) for financial support. 

\bibliographystyle{amsalpha}
\bibliography{PS_lit}

\newcommand{\etalchar}[1]{$^{#1}$}
\providecommand{\bysame}{\leavevmode\hbox to3em{\hrulefill}\thinspace}
\providecommand{\MR}{\relax\ifhmode\unskip\space\fi MR }
\providecommand{\MRhref}[2]{%
  \href{http://www.ams.org/mathscinet-getitem?mr=#1}{#2}
}
\providecommand{\href}[2]{#2}
\begin{thebibliography}{OPR{\etalchar{+}}83}

\bibitem[And58]{An57}
Philip~Warren Anderson, \emph{Absence of diffusion in certain random lattices},
  Phys. Rev. \textbf{109} (1958), 1492--1505.

\bibitem[BBG91]{BellissardBovierGhez}
Jean Bellissard, Anton Bovier, and Jean-Michel Ghez, \emph{Spectral properties
  of a tight binding {H}amiltonian with period doubling potential}, vol. 135,
  1991, pp.~379--399.

\bibitem[BBL13]{BesbesBoshernitzanLenz}
Adnene Besbes, Michael Boshernitzan, and Daniel Lenz, \emph{Delone {S}ets with
  {F}inite {L}ocal {C}omplexity: {L}inear {R}epetitivity {V}ersus {P}ositivity
  of {W}eights}, Discrete Comput. Geom. \textbf{49} (2013), no.~2, 335--347.

\bibitem[BG93]{BovierGhez}
Anton Bovier and Jean-Michel Ghez, \emph{Spectral properties of one-dimensional
  {S}chr\"odinger operators with potentials generated by substitutions}, Comm.
  Math. Phys. \textbf{158} (1993), no.~1, 45--66.

\bibitem[BIST89]{BellissardIochumScopullaTestard}
J.~Bellissard, B.~Iochum, E.~Scoppola, and D.~Testard, \emph{Spectral
  properties of one-dimensional quasi-crystals}, Comm. Math. Phys. \textbf{125}
  (1989), no.~3, 527--543.

\bibitem[Bos92]{Boshernitzan2}
Michael~D. Boshernitzan, \emph{A condition for unique ergodicity of minimal
  symbolic flows}, Ergodic Theory Dynam. Systems \textbf{12} (1992), no.~3,
  425--428.

\bibitem[Bos85]{Boshernitzan1}
Michael Boshernitzan, \emph{A unique ergodicity of minimal symbolic flows with
  linear block growth}, J. Analyse Math. \textbf{44} (1984/85), 77--96.

\bibitem[CKM87]{CKMF}
Ren{\'e} Carmona, Abel Klein, and Fabio Martinelli, \emph{Anderson localization
  for {B}ernoulli and other singular potentials}, Comm. Math. Phys.
  \textbf{108} (1987), no.~1, 41--66.

\bibitem[CL90]{CarmonaLacroix}
Ren{\'e} Carmona and Jean Lacroix, \emph{Spectral theory of random
  {S}chr\"odinger operators}, Birkh\"auser Boston Inc., 1990.

\bibitem[Dah10]{Dahl}
Janine~M. Dahl, \emph{The {S}pectrum of the {O}ff-diagonal {F}ibonacci
  {O}perator}, Ph.D. thesis, Rice University, 2010, p.~95.

\bibitem[Dam98]{Damanik98}
David Damanik, \emph{Singular continuous spectrum for a class of substitution
  {H}amiltonians}, Lett. Math. Phys. \textbf{46} (1998), no.~4, 303--311.

\bibitem[DG08]{DamanikGorodetski}
David Damanik and Anton Gorodetski, \emph{The {S}pectrum and the {S}pectral
  type of the {O}ff-diagonal {F}ibonacci {O}perator}, arXiv:0807.3024 (2008).

\bibitem[DKL00]{DamanikKillipLenz}
David Damanik, Rowan Killip, and Daniel Lenz, \emph{Uniform spectral properties
  of one-dimensional quasicrystals. {III}. {$\alpha$}-continuity}, Comm. Math.
  Phys. \textbf{212} (2000), no.~1, 191--204.

\bibitem[DL99a]{DamanikLenzI}
David Damanik and Daniel Lenz, \emph{Uniform spectral properties of
  one-dimensional quasicrystals. {I}. {A}bsence of eigenvalues}, Comm. Math.
  Phys. \textbf{207} (1999), no.~3, 687--696.

\bibitem[DL99b]{DamanikLenzII}
\bysame, \emph{Uniform spectral properties of one-dimensional quasicrystals.
  {II}. {T}he {L}yapunov exponent}, Lett. Math. Phys. \textbf{50} (1999),
  no.~4, 245--257.

\bibitem[DL06a]{DamanikLenz}
\bysame, \emph{A condition of {B}oshernitzan and uniform convergence in the
  multiplicative ergodic theorem}, Duke Math. J. \textbf{133} (2006), no.~1,
  95--123.

\bibitem[DL06b]{DamanikLenz2}
\bysame, \emph{Zero-measure {C}antor spectrum for {S}chr\"odinger operators
  with low-complexity potentials}, J. Math. Pures Appl. (9) \textbf{85} (2006),
  no.~5, 671--686.

\bibitem[EFHN09]{EisnerNagel}
Tatjana Eisner, Balint Farkas, Markus Haase, and Rainer Nagel, \emph{Ergodic
  theory - an operatortheoretic approach}, Manuscript 12th International
  Internet Seminar, http://isem.mathematik.tu-darmstadt.de/isem, 18.01.2013,
  2009.

\bibitem[FMSS85]{FMSS85}
J.~Fr{\"o}hlich, F.~Martinelli, E.~Scoppola, and T.~Spencer, \emph{Constructive
  proof of localization in the {A}nderson tight binding model}, Comm. Math.
  Phys. \textbf{101} (1985), no.~1, 21--46.

\bibitem[Fur97]{Furman97}
Alex Furman, \emph{On the multiplicative ergodic theorem for uniquely ergodic
  systems}, Ann. Inst. H. Poincar\'e Probab. Statist. \textbf{33} (1997),
  no.~6, 797--815.

\bibitem[JNS09]{JanasNabokoStolz}
Jan Janas, Serguei Naboko, and G{\"u}nter Stolz, \emph{Decay bounds on
  eigenfunctions and the singular spectrum of unbounded {J}acobi matrices},
  Int. Math. Res. Not. IMRN (2009), no.~4, 736--764.

\bibitem[Kir08]{Kirsch}
Werner Kirsch, \emph{An invitation to random {S}chr\"odinger operators}, Random
  {S}chr\"odinger operators, Panor. Synth\`eses, vol.~25, Soc. Math. France,
  2008, With an appendix by Fr{\'e}d{\'e}ric Klopp, pp.~1--119.

\bibitem[KKT83]{KohmotoKadamoffTang}
Mahito Kohmoto, Leo~P. Kadanoff, and Chao Tang, \emph{Localization problem in
  one dimension: mapping and escape}, Phys. Rev. Lett. \textbf{50} (1983),
  no.~23, 1870--1872.

\bibitem[Kot89]{Kotani}
Shinichi Kotani, \emph{Jacobi matrices with random potentials taking finitely
  many values}, Rev. Math. Phys. \textbf{1} (1989), no.~1, 129--133.

\bibitem[KW82]{KatznelsonWeiss}
Yitzhak Katznelson and Benjamin Weiss, \emph{A simple proof of some ergodic
  theorems}, Israel J. Math. \textbf{42} (1982), no.~4, 291--296.

\bibitem[Len99]{Lenz3}
Daniel Lenz, \emph{Random operators and crossed products}, Math. Phys. Anal.
  Geom. \textbf{2} (1999), no.~2, 197--220.

\bibitem[Len02]{Lenz1}
\bysame, \emph{Singular spectrum of {L}ebesgue measure zero for one-dimensional
  quasicrystals}, Comm. Math. Phys. \textbf{227} (2002), no.~1, 119--130.

\bibitem[Len04]{Lenz2}
\bysame, \emph{Existence of non-uniform cocycles on uniquely ergodic systems},
  Ann. Inst. H. Poincar\'e Probab. Statist. \textbf{40} (2004), no.~2,
  197--206.

\bibitem[LS99]{LastSimon}
Yoram Last and Barry Simon, \emph{Eigenfunctions, transfer matrices, and
  absolutely continuous spectrum of one-dimensional {S}chr\"odinger operators},
  Invent. Math. \textbf{135} (1999), no.~2, 329--367.

\bibitem[Mar12]{Marin}
Laurent Marin, \emph{On- and off-diagonal {S}turmian operators: dynamic and
  spectral dimension}, Rev. Math. Phys. \textbf{24} (2012), no.~5, 1250011, 23.

\bibitem[OPR{\etalchar{+}}83]{OstlundPanditRandSchellnhuberSiggiai}
Stellan Ostlund, Rahul Pandit, David Rand, Hans~Joachim Schellnhuber, and
  Eric~D. Siggia, \emph{One-dimensional schr\"odinger equation with an almost
  periodic potential}, Phys. Rev. Lett. \textbf{50} (1983), 1873--1876.

\bibitem[Rem11]{Remling}
Christian Remling, \emph{The absolutely continuous spectrum of {J}acobi
  matrices}, Ann. of Math. (2) \textbf{174} (2011), no.~1, 125--171.

\bibitem[SBGC84]{SBGC}
D.~Shechtman, I.~Blech, D.~Gratias, and J.~W. Cahn, \emph{Metallic phase with
  long-range orientational order and no translational symmetry}, Phys. Rev.
  Lett. \textbf{53} (1984), 1951--1953.

\bibitem[S{\"u}t87]{Suto87}
Andr{\'a}s S{\"u}t{\H{o}}, \emph{The spectrum of a quasiperiodic
  {S}chr\"odinger operator}, Comm. Math. Phys. \textbf{111} (1987), no.~3,
  409--415.

\bibitem[S{\"u}t89]{Suto89}
\bysame, \emph{Singular continuous spectrum on a {C}antor set of zero
  {L}ebesgue measure for the {F}ibonacci {H}amiltonian}, J. Statist. Phys.
  \textbf{56} (1989), no.~3-4, 525--531.

\bibitem[Tes00]{Teschl}
Gerald Teschl, \emph{Jacobi operators and completely integrable nonlinear
  lattices}, vol.~72, American Mathematical Society, 2000.

\bibitem[Wal82]{Walthers}
Peter Walters, \emph{An introduction to ergodic theory}, vol.~79,
  Springer-Verlag, 1982.

\bibitem[Yes12]{Yessen}
William~N. Yessen, \emph{On the spectrum of 1d quantum ising quasicrystal},
  arXiv:1110.6894 (2012).

\end{thebibliography}
\end{document}